\documentclass[11pt]{article}

\usepackage[margin=1in]{geometry}
\usepackage{amsmath, amssymb, amsthm, mathtools}
\usepackage{enumitem}
\usepackage{aliascnt}
\usepackage{booktabs}

\newcommand{\ignore}[1]{}

\usepackage{xcolor}
\definecolor{linkorange}{RGB}{196,93,0}
\definecolor{citeturquoise}{RGB}{0,125,145}

\usepackage[
    colorlinks=true,
    linkcolor=linkorange,
    citecolor=citeturquoise,
    urlcolor=citeturquoise
]{hyperref}
\usepackage[capitalise,nameinlink,noabbrev]{cleveref}
\usepackage[numbers,sort&compress]{natbib}

\newlist{assumptionlist}{enumerate}{1}
\setlist[assumptionlist,1]{
    label=(\arabic*),
    ref=\theassumption(\arabic*)
}

\crefname{assumptionlisti}{assumption}{assumptions}
\Crefname{assumptionlisti}{Assumption}{Assumptions}
\Crefname{fact}{fact}{Fact}

\newcommand{\Sinplabeled}{{\bar S}_{\mathrm{inp}}}
\newcommand{\Sinpunlabeled}{{S}_{\mathrm{inp}}}
\newcommand{\Sclnlabeled}{{\bar S}_{\mathrm{cln}}}
\newcommand{\Sclnunlabeled}{{S}_{\mathrm{cln}}}
\newcommand{\Sfiltlabeled}{{\bar S}_{\mathrm{filt}}}

\newcommand{\Srefunlabeled}{{S}_{\mathrm{ref}}}

\newcommand{\C}{\mathcal{C}}
\newcommand{\err}{\mathrm{err}}
\newcommand{\E}{\mathbb{E}}
\renewcommand{\E}{\mathop{\mathbb E}\limits}
\newcommand{\sign}{\mathrm{sign}}

\newcommand{\R}{\mathbb{R}}

\numberwithin{equation}{section}

\newtheorem{theorem}{Theorem}[section]

\newaliascnt{lemma}{theorem}
\newtheorem{lemma}[lemma]{Lemma}
\aliascntresetthe{lemma}

\newaliascnt{corollary}{theorem}

\aliascntresetthe{corollary}

\newaliascnt{fact}{theorem}
\newtheorem{fact}[fact]{Fact}
\aliascntresetthe{fact}

\newaliascnt{assumption}{theorem}

\aliascntresetthe{assumption}

\newaliascnt{definition}{theorem}
\newtheorem{definition}[definition]{Definition}
\aliascntresetthe{definition}

\newaliascnt{claim}{theorem}

\aliascntresetthe{claim}

\newaliascnt{remark}{theorem}

\aliascntresetthe{remark}

\newlist{factlist}{enumerate}{1}
\setlist[factlist]{label=\arabic*.,ref=\arabic*}

\crefname{factlisti}{Fact}{Facts}
\Crefname{factlisti}{Fact}{Facts}

\title{Efficient Robust Learning at the Information-Theoretic Limit}
\author{%
    \begin{tabular}{cc}
        \begin{tabular}{c}
            Adam R. Klivans\thanks{Supported
 the NSF AI Institute for Foundations of Machine Learning (IFML) and NSF AF: MEDIUM: Efficient Algorithms for Learning with Distribution Shift.}\\ \texttt{klivans@utexas.edu} \\ UT Austin 
        \end{tabular} & 
        \begin{tabular}{c}
             Konstantinos Stavropoulos\thanks{Work done while at UT Austin. Supported by the NSF AI Institute for Foundations of Machine Learning (IFML)} \\ \texttt{kstavrop@amias.ias.edu} \\ Institute for Advanced Study
        \end{tabular}
        \\\\
        \begin{tabular}{c}
            Sergei Tikhonov \\ \texttt{tikhonov@utexas.edu} \\ UT Austin
        \end{tabular}
         & 
         \begin{tabular}{c}
              Arsen Vasilyan\thanks{Work done while at UT Austin. Supported by the NSF AI Institute for Foundations of Machine Learning (IFML).} \\ \texttt{arsenvasilyan@gmail.com} \\ Aarhus University
         \end{tabular}
    \end{tabular}
}
\date{}

\begin{document}

\maketitle

\begin{abstract}
In an important recent work, Blanc \cite{Blanc26} gave an algorithm for robustly learning Boolean concept classes with respect to a fixed distribution that outputs a (randomized) classifier achieving the optimal error of $\eta + \varepsilon$ where $\eta$ is the noise rate.   In contrast, it is well known that deterministic hypotheses cannot achieve error less than $2\eta + \varepsilon.$

Blanc's algorithm is computationally inefficient, and the main problem left open in his work is to find a polynomial-time algorithm given access to an oracle for empirical risk minimization (ERM).   In this paper, we resolve this problem and give such an algorithm.  Perhaps surprisingly, our techniques make crucial use of various types of no-regret learners.  

Additionally, we give an efficient algorithm (no ERM oracle required) for robustly learning any function class that admits sandwiching polynomials with respect to hypercontractive distributions.  As one consequence, we give the first polynomial-time algorithm for robustly learning a halfspace with respect to Gaussian marginals that achieves error $\eta + \varepsilon$ for any constant $\varepsilon$.  

\end{abstract}

\newpage

\section{Introduction}

Developing algorithms that are robust to potentially adversarially contaminated datasets remains a fundamental challenge in AI, machine learning, and statistics more generally.  Despite much impressive work over the past decade in the unsupervised setting (see e.g., \cite{diakonikolas2019robust,diakonikolas2023algorithmic}), the case of robust {\em supervised} learning, especially for Boolean concept classes, has received far less attention.

In this paper we focus on supervised learning in the bounded-contamination model~\cite{BEK02}, where an $\eta$-fraction of the training set may be replaced arbitrarily, so both features and labels may be corrupted \footnote{This model is essentially the $\eta$-nasty-noise model introduced by~\cite{BEK02}, in which the number of replaced examples is drawn from a binomial distribution with rate $\eta$. Here we refer to this as {\em bounded contamination} to align with recent work in algorithmic robust statistics. Our results apply to both models.}:

\begin{definition}[Bounded contamination, {\cite[Definition~1.1]{KSTV25}}]\label{def:bounded-contamination}
Let $D$ be a distribution over $X$, let $\eta\in[0,1)$, and let
$c^\star:X\to\{\pm1\}$. We say that a labeled sample $\bar S_{\mathrm{inp}}$ is generated from $(D, c^\star)$ with bounded contamination of rate $\eta$ if, for some $n\ge1$:
\begin{enumerate}
    \item First, a labeled sample $\bar S_{\mathrm{cln}}$ of $n$ i.i.d. examples of the form $(x,c^\star(x))$, with $x\sim D$, is drawn.

    \item Then, an adversary receives $\bar S_{\mathrm{cln}}$, chooses at most $\eta n$ labeled examples in $\bar S_{\mathrm{cln}}$, and replaces them by an equal number of arbitrary labeled examples to form $\bar S_{\mathrm{inp}}$.
\end{enumerate}
\end{definition}

We assume that the learner receives the corrupted labeled sample $\bar S_{\mathrm{inp}}$, independently draws a reference sample $S_{\mathrm{ref}}$ of i.i.d. unlabeled examples from $D$, and aims to output a classifier with approximately optimal error under $D$.

\begin{definition}[Fixed-distribution learning under bounded contamination]
\label{def:fixed-distribution-learning}
Fix a distribution $D$ over $X$ and a concept class $\mathcal C\subseteq\{\pm1\}^X$. An algorithm $\mathcal A$ is a fixed-distribution nasty-noise learner for $\mathcal C$ with respect to $D$ if, for every $\varepsilon,\delta\in(0,1)$, $\eta\in[0,1)$, and $c^\star\in\mathcal C$, given a sample $\bar S_{\mathrm{inp}}$ generated from $(D,c^\star)$ with $\eta$-nasty noise and an independent unlabeled reference sample $S_{\mathrm{ref}}$ from $D$, the algorithm outputs a randomized hypothesis $h$ such that, with probability at least $1-\delta$,
$$
\err_D(h,c^\star) := \Pr_{x\sim D, h} \!\left[h(x)\neq c^\star(x)\right] \le \eta+\varepsilon.
$$
The sample complexity of $\mathcal A$ records the sample sizes of
$\bar S_{\mathrm{inp}}$ and $S_{\mathrm{ref}}$ required to achieve this
guarantee.
\end{definition}

Recent work by \cite{Blanc26} showed that randomized hypotheses can achieve the information-theoretically optimal error rate $\eta+\varepsilon$. His algorithm for generating these hypotheses, however, is not known to be efficiently implementable, even with access to an empirical risk minimization (ERM) oracle. This leads to the central question of our work: \emph{can one efficiently achieve the information-theoretically optimal
error rate $\eta+\varepsilon$ when the concept class admits low-degree polynomial approximations or when the learner has access to an ERM oracle?}

\subsection{Our results}

Our main result shows that the optimal error $\eta+\varepsilon$ is efficiently achievable in the fixed-distribution bounded-contamination model. Given corrupted labeled data and clean unlabeled reference data, our algorithm returns a randomized classifier whose error is, up to $\varepsilon$, no larger than the fraction of corrupted examples. 

\begin{theorem}[Fixed-distribution learning, see \Cref{thm:erm-main-guarantee} for the full version]
\label{thm:erm-main-guarantee-informal}
Let $D$ be a fixed distribution, and let $\mathcal C$ be a concept class of finite VC dimension. There is a polynomial-time algorithm with access to an ERM oracle for $\mathcal C$ that learns $\mathcal C$ on $D$ with $\eta$-nasty noise and error at most $\eta+\varepsilon$ using
$
O\left(
\frac{
\operatorname{VCdim}(\mathcal C)\log(2/\varepsilon)
+
\log(1/\delta)
}{
\varepsilon^4
}
\right)
$
samples.
\end{theorem}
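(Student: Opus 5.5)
The plan is to cast the problem as a zero-sum game between a \emph{reweighting} player and a \emph{hypothesis} player, and to solve it with no-regret dynamics in which each hypothesis-player step is one ERM call.

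\textbf{Setup.} Let $\bar S_{\mathrm{inp}}$ have $n$ points and write $S_{\mathrm{inp}}$ for its unlabeled version. The reweighting player chooses weights $w\in[0,1]^n$ on the input points, with total mass at least $(1-\eta)n$. It should be thought of as trying to keep the clean points and drop corrupted ones. We will impose a constraint that the weighted marginal must ``look like'' $D$ against the class. Concretely, for every $c\in\mathcal C$ the weighted empirical mass of $\{x: c(x)\ne c'(x)\}$ should be at most the corresponding reference-sample mass from $S_{\mathrm{ref}}$ plus $\varepsilon$. A cleaner way to say this: the weighted sample should be dominated by $D$ on disagreement regions of $\mathcal C$.

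The randomized hypothesis will be a mixture $h$ over concepts in $\mathcal C$, i.e.\ a distribution on $\mathcal C$. The key inequality we want is that $h$ has small weighted error on every feasible reweighting, and in particular on the one that puts weight $1$ exactly on the surviving clean points. Passing from there to error under $D$ uses uniform convergence on $S_{\mathrm{ref}}$ and on $\bar S_{\mathrm{cln}}$. Both are controlled by the VC dimension of the class of symmetric differences of $\mathcal C$, which accounts for the $\operatorname{VCdim}(\mathcal C)\log(1/\varepsilon)$ term.

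\textbf{Dynamics.} We run multiplicative weights (or projected online gradient / entropic mirror descent over the capped simplex) for the weight player. At each round the hypothesis player best-responds by calling the ERM oracle on the currently weighted labeled sample; ERM with fractional weights can be simulated by resampling. This is a min–max problem
\[
\min_{h\in\Delta(\mathcal C)} \; \max_{w\ \mathrm{feasible}} \; \frac{1}{n}\sum_i w_i\,\mathbf 1[h(x_i)\ne y_i],
\]
and after $T=O(\log n/\varepsilon^2)$ rounds the uniform mixture of the ERM outputs is an $\varepsilon$-approximate minimax strategy.

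\textbf{Value argument.} It then remains to bound the game value, and this mirrors Blanc's information-theoretic argument. We need an existence statement: for any feasible $w$, some distribution over $\mathcal C$ (indeed $c^\star$ itself or a mixture) achieves small weighted error. Equivalently, by minimax, for every $h$ there is a feasible $w$ certifying error at most roughly $\eta$ relative to clean mass. The delicate point is that the constraint set for $w$ is defined through the exponentially many concepts in $\mathcal C$. Feasibility checking, and projecting onto it, therefore needs its own separation oracle. I would implement this with a second no-regret learner: an inner loop that finds a violated constraint $c$ via an ERM call on a relabeled combined sample, with reference points labeled one way and weighted input points the other. This nesting is the second type of no-regret learner alluded to in the abstract.

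\textbf{Main obstacle.} I expect the hardest step to be converting the $D$-domination constraint into something ERM-checkable while keeping the final error at $\eta+\varepsilon$ rather than $2\eta$. The randomization must allow $h$ to hedge exactly on regions where corrupted mass could be hiding. The $\varepsilon^{-4}$ sample bound suggests two nested $\varepsilon^{-2}$ uniform-convergence or regret losses, consistent with this two-level scheme.
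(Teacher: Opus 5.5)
There is a genuine gap. Your game scores $h$ only against the observed labels on reweightings of $\bar S_{\mathrm{inp}}$, and that cannot certify anything better than $2\eta$.

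Follow your own accounting. The weighting that keeps exactly the surviving clean points bounds the error of $h$ on $S_{\mathrm{cln}}\cap S_{\mathrm{inp}}$ by the game value $V$. However, the up to $\eta n$ clean points the adversary deleted never enter the objective. So the most you can conclude is $\err_D(h,c^\star)\le V+\eta+O(\varepsilon)$, and $V$ is itself typically $\eta$.

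Concretely, take $c^\star,c'\in\mathcal C$ that disagree on a region $R$ with $D(R)\approx 2\eta$. Let the adversary relabel about half of the clean points in $R$ by $c'(x)$, keeping their features. The input marginal is then the clean marginal, so $w=\mathbf 1$ satisfies all of your domination constraints up to sampling error. Since weights are capped at $1$, $w=\mathbf 1$ is the max player's best response to every $h$, and your game collapses to plain ERM on $\bar S_{\mathrm{inp}}$. Its value is $\eta$, attained equally by $c^\star$, by $c'$, and by their $50/50$ mixture. The point mass on $c'$ is an exact minimax solution, yet its error is $D(R)\approx2\eta$ when the truth is $c^\star$. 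The sample is identically distributed whether the truth is $c^\star$ or $c'$, and approximate minimax optimality (the only property your argument establishes for the output) cannot force the balanced mixture. This is exactly the \cite{BEK02} barrier: your game has deterministic optimal solutions, so no value bound can yield $\eta+\varepsilon$.

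Separately, your constraints are indexed by disagreement regions $\{c\neq c'\}$ of pairs of concepts. An ERM oracle for $\mathcal C$ does not directly separate over these, and the inner projection loop is left unspecified.

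The missing idea is Blanc's candidate-dependent objective, which the paper makes efficient. The paper scores $\bar h$ against \emph{every} candidate truth $c\in\mathcal C$ on the reference sample, so clean mass deleted by the adversary is still counted through $D$. From this it subtracts the corruption certificate $\eta(c,\mu;\bar S_{\mathrm{inp}},S_{\mathrm{ref}})$ of \Cref{def:erm-empirical-certificate}, which is at most $\eta+\varepsilon/2$ uniformly in $\mu$ when $c=c^\star$ (\Cref{lemma:erm-certificate-validity}).

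The roles are also reversed relative to your scheme. The no-regret learner (online Frank--Wolfe over $\operatorname{conv}\{c_S:c\in\mathcal C\}$) plays a mixture $\mu_t$ of candidate ground truths. The ERM oracle only implements that learner's linear-optimization step, with weights given by $\nabla F_t$. The hypothesis is not an ERM best response but the hedged smoothed majority $h_t=M_k\circ\mu_t$. This choice gives the self-certification $U_t(\mu_t)\le\varepsilon/8$. The regret bound applied to the point mass at the unknown $c^\star$ then gives $\widehat{\err}_{S_{\mathrm{ref}}}(\bar h_T,c^\star)\le\eta+3\varepsilon/4$.

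Likewise, the $\varepsilon^{-4}$ sample size does not come from two nested $\varepsilon^{-2}$ losses. It comes from uniform convergence over $\operatorname{conv}(\operatorname{Maj}_k(\mathcal C))$ with $k=\Theta(\varepsilon^{-2})$, whose VC dimension is $O(\operatorname{VCdim}(\mathcal C)\,k\log k)$.
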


For a broad family of function classes and marginal distributions, we can eliminate the requirement of an ERM oracle altogether. In particular, for  hypercontractive distributions (\Cref{def:hypercontractivity}) and concept classes admitting low-degree, bounded-coefficient\footnote{This condition can be removed by a minor modification of the optimization procedure. We retain it to simplify the proof and improve readability.} sandwiching polynomials (\Cref{def:sandwiching-approximators}), we achieve polynomial-time learners under bounded contamination (for any constant $\varepsilon$). All prior work  obtained an error guarantee of $2\eta+\varepsilon$ \cite{KSTV25} or worse for this setting, even for the basic case of learning a single halfspace with respect to Gaussian marginals.

\begin{theorem}[Learning via Sandwiching, see \Cref{thm:main-guarantee-restated} for the full version]\label{thm:main-guarantee}
Let $D$ be an $A$-hypercontractive distribution, and suppose that $\mathcal C$ has $\varepsilon^2/256$-sandwiching degree at most $L$ under $D$, with coefficient norm at most $B$. There is a polynomial-time algorithm that learns $\mathcal C$ on $D$
with $\eta$-nasty noise and error at most $\eta+\varepsilon$ using $\frac{(A(d+1))^{O(L)}}{\varepsilon^5}$ samples.
\end{theorem}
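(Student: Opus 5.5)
The plan is to reduce the sandwiching setting to a game between a \emph{weighting} player and a \emph{hypothesis} player, and to solve that game with no-regret dynamics over low-degree polynomials instead of an ERM oracle. The underlying fact is information-theoretic. The clean sample $\Sclnlabeled$ agrees with $c^\star$ and differs from $\Sinplabeled$ in at most $\eta n$ points. Hence there is a reweighting $w$ of $\Sinplabeled$, namely the indicator of the uncorrupted points scaled by $1/(1-\eta)$, whose marginal "looks like" $D$ and whose labels are consistent with a concept in $\mathcal C$. Conversely, any randomized $h$ that is accurate on every such reweighting achieves error about $\eta+\varepsilon$ on $D$. I would formalize this as a minimax statement. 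The hypothesis player picks $p:\R^d\to[-1,1]$, a degree-$L$ polynomial truncated to $[-1,1]$, with the output randomized as $\Pr[h(x)=1]=(1+p(x))/2$. The adversary picks weights $w\in[0,1/((1-\eta)n)]^{\Sinplabeled}$, subject to low-degree moment matching between $w$ and $\Srefunlabeled$ up to degree $2L$. The payoff is the weighted disagreement $\sum_i w_i(1-y_ip(x_i))/2$.

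The steps are as follows.
\begin{enumerate}[label=(\roman*)]
\item Use hypercontractivity (\Cref{def:hypercontractivity}) with $(A(d+1))^{O(L)}/\varepsilon^{O(1)}$ samples to show that empirical degree-$2L$ moments of $\Srefunlabeled$ and of the clean points concentrate. This makes the "true" weighting feasible for the adversary's moment constraints.
\item Show that for any feasible $w$ there is an $L_2(w)$-close surrogate for $c^\star$ on the weighted set. The sandwiching polynomials $p_\ell\le c^\star\le p_u$ with $\E_D[p_u-p_\ell]\le\varepsilon^2/256$ (\Cref{def:sandwiching-approximators}) transfer this gap to the weighted empirical measure by moment matching, because $p_u-p_\ell$ is nonnegative of degree $L$ with coefficient norm $\le B$. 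Consequently any $w$ putting mass on points whose labels disagree with $c^\star$ on a region can be detected.
\item Run multiplicative weights / online gradient descent for the adversary over the convex set of weightings. In each round the hypothesis player best-responds by fitting a polynomial to the weighted labels, which is a convex $L_1$ regression, i.e.\ an LP over coefficient vectors of norm $\le B$. After $T=O(\log n/\varepsilon^2)$ rounds, output the uniform mixture of the $p_t$. The no-regret guarantee gives that the mixture's weighted error against the worst feasible $w$ is within $\varepsilon$ of the game value.
\item Bound the game value. Against the true weighting, the mixture errs at most about $\varepsilon$ on clean points. Transferring back to $D$ through the reference sample via uniform convergence for degree-$L$ polynomials, its error on $D$ is at most the corrupted mass plus that, i.e.\ $\eta+\varepsilon$.
\end{enumerate}

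The main obstacle is step (iv), the transfer from weighted empirical error to $\err_D$ without losing a factor of $2$. The naive argument charges corrupted points twice: once for their mass and once because the hypothesis may have been fit to them. I would handle this by writing $\err_D(h,c^\star)$ as $\E_D[(1-c^\star p)/2]$ and sandwiching $c^\star$ by $p_u,p_\ell$. Then the expectation of the degree-$\le 2L$ quantity $c^\star p$ under $D$ is approximated by its expectation under $\Srefunlabeled$. That in turn is expressed as the expectation under the clean part, weighted $1-\eta$, plus at most $\eta$ for the remaining mass, where the clean part is controlled by the game. Keeping the sandwiching error at $\varepsilon^2/256$ is what lets the Cauchy--Schwarz step in this transfer cost only $\varepsilon/16$.
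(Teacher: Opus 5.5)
Your proposal has a genuine gap at step (iv), and the problem lies in the game itself, not only in its analysis. The claim that ``against the true weighting, the mixture errs at most about $\varepsilon$'' needs the game value to be $O(\varepsilon)$. No-regret only bounds the mixture's payoff against every feasible $w$ by the value plus $\varepsilon$, so your transfer gives $\err_D(h,c^\star)\le\eta+\mathrm{value}+\varepsilon$. The value is not small.

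Let the adversary replace a random $\eta n$ clean examples by examples at the same locations with uniformly random labels. The roughly $\eta n/2$ flipped points have $D$-distributed locations. So the weighting that puts mass $\frac{1}{(1-\eta)n}$ on all of them and on $(1-\frac{3\eta}{2})n$ random correctly labeled points satisfies your box and moment constraints as well as the true weighting does. Your constraints see only the $x$-marginal, so the ``detection'' claim in step (ii) also fails. Against this weighting, any $p$ with $\err_D(p,c^\star)=e$ pays about $\frac{1}{1-\eta}\bigl(\frac{\eta}{2}+(1-2\eta)e\bigr)\ge\frac{\eta}{2}$ for $\eta\le\frac12$. Hence your argument certifies at best $\frac32\eta+\varepsilon$.

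Consider also the population version of your game with exact marginal matching. If the adversary flips slightly more than a $\frac{1+\eta}{2}$ fraction of labels on a region of mass close to $\frac{2\eta}{1+\eta}$, the unique pointwise equilibrium response on that region is $-c^\star$. The equilibrium hypothesis then has error about $2\eta$, no better than ERM.

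Step (i) is also incorrect as stated. The surviving clean points are $S_{\mathrm{cln}}$ minus an \emph{adversarially} chosen set of $\eta n$ points, for example a tail. Their degree-$2L$ moments therefore differ from those of $D$ by an amount that does not shrink with $n$. Concentration of $S_{\mathrm{cln}}$ does not make the true weighting feasible under two-sided moment matching.

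The missing idea is a candidate-dependent benchmark. Your payoff scores $p$ against the observed labels, so any corrupted labels that a feasible reweighting includes are treated as ground truth. No constraint on the marginal can separate $c^\star$ from an alternative concept that explains the data with comparable corruption.

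The paper's objective instead ranges over candidate ground truths $q\in\mathcal Q$ and subtracts the certificate $\eta^{\mathrm{poly}}(q,m;\bar S_{\mathrm{filt}},S_{\mathrm{ref}})$. It runs projected gradient ascent over the coefficients of $q$, where the rewards are affine by \Cref{lemma:linearization}. Each round plays $h_t=\sign(q_t)$ with witness $m_t=q_t$, which gives the self-certification $U_t(q_t)\le 0$. The regret bound is then evaluated at the unknown midpoint $q^\star$.

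The factor of two disappears in \Cref{lemma:certificate-guarantee}. A four-case count shows that each retained corrupted point and each replaced clean point contributes at most $\frac12$ (up to a sandwiching-gap term). Moreover, $|S_{\mathrm{inp}}\setminus S_{\mathrm{cln}}|=|S_{\mathrm{cln}}\setminus S_{\mathrm{inp}}|\le\eta n$.

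Filtering supplies only one-sided control, $\frac1n\sum_{x\in S_{\mathrm{filt}}}\frac{p^\star_{\mathrm{up}}(x)-p^\star_{\mathrm{down}}(x)}{2}\le\frac{\varepsilon}{16}$. The $\varepsilon^2/256$ sandwiching is needed so that filtering discards only $O(\varepsilon)n$ clean points, not for a Cauchy--Schwarz step.
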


Table~\ref{tab:runtime-applications} summarizes our runtime guarantees for several concept classes under the Gaussian distribution and the uniform distribution on the Boolean hypercube.

\begin{table*}[ht]
\centering
\renewcommand{\arraystretch}{1}
\setlength{\tabcolsep}{20pt}
\begin{tabular}
{@{\extracolsep{\fill}} c c c @{}
}
\toprule
\textbf{Concept Class}
&
\textbf{Distribution}
&
\textbf{Our Runtime}
\\
\midrule
Intersections of $k$ halfspaces
&
$\mathcal N(0,I_d)$
&
$d^{\widetilde O(k^3/\varepsilon^8)}$
\\
\midrule
Arbitrary functions of $k$ halfspaces
&
$\mathcal N(0,I_d)$
&
$d^{\widetilde O(k^5/\varepsilon^8)}$
\\
\midrule
Convex sets of intrinsic dimension $k$
&
$\mathcal N(0,I_d)$
&
$d^{\widetilde O(k^5/\varepsilon^8)}$
\\
\midrule
Degree-$q$ PTFs of intrinsic dimension $k$
&
$\mathcal N(0,I_d)$
&
$d^{\widetilde O(q^6k^5/\varepsilon^8)}$
\\
\midrule
\begin{tabular}{c}
Size-$s$, depth-$\ell$
$\mathsf{AC}^0$ circuits
\end{tabular}
&
$\operatorname{Unif}(\{\pm1\}^d)$
&
$d^{(\log s)^{O(\ell)}\log(1/\varepsilon)}$
\\
\bottomrule
\end{tabular}
\caption{Runtime guarantees for $\eta+\varepsilon$ error under bounded contamination.}
\label{tab:runtime-applications}
\end{table*}

We work in the distribution-specific setting: the marginal distribution $D$ is fixed and the learner receives an independent unlabeled reference sample $S_{\mathrm{ref}}$ from $D$. Some restriction on $D$ is necessary: even learning halfspaces with label noise in the distribution-free setting is believed to be computationally intractable \cite{FGKP06,Dan16}. At the same time, our result is not limited to a particular distribution. The class of hypercontractive distributions includes many standard distributions, such as Gaussians, log-concave distributions, and product distributions over $\{\pm1\}^d$.

\subsection{Technical Overview}\label{sec:tech-overview}

Our goal is to output some (potentially randomized) hypothesis $h$ for which we have:
\begin{equation}
    \err_D(h,c^\star) \le \eta + O(\varepsilon)\,,\label{equation:optimal-guarantee}
\end{equation}
where $c^\star$ is the ground truth labeling function of the clean distribution. The main challenge is that we do not have direct access to the function $c^\star$, due to contamination. Instead, we have access to the contaminated input samples $\Sinplabeled$, as well as \emph{unlabeled} samples from the clean distribution $D$.

A first idea would be to find an empirical risk minimizer (ERM) for the input sample $\Sinplabeled$. Unfortunately, this leads to an error bound of $2\eta+O(\varepsilon)$. To see this, observe that the guarantee of the ERM is that the number of mistakes made on $\Sinplabeled$ is at most $\eta |\Sinplabeled|$. However, we have no guarantee about where these mistakes are made. In particular, it is possible that all the mistakes of the ERM are made on the set $\Sinplabeled\cap \Sclnlabeled$. On the other hand, we have no information about the adversarially removed clean examples (i.e., the points in $\Sclnlabeled\setminus\Sinplabeled$), and, therefore, cannot directly rule out the possibility that the ERM misclassifies all of those points. In total, the corresponding bound on the number of mistakes that the ERM makes on the set $\Sclnlabeled$ is $2\eta|\Sclnlabeled|$. Indeed, there are simple cases where any deterministic hypothesis (and, hence, ERM) is known to achieve error at least $2\eta$ \cite{BEK02}.

\paragraph{The framework of Blanc for optimal error guarantees}
The recent work of Blanc \cite{Blanc26} proposed a different approach that achieves the optimal guarantee in a sample-efficient way, but with no computational guarantees. The approach is to minimize a loss function of the form
\begin{equation}
    L_{\Sinplabeled,D}(h)
    =
    \sup_{c\in\C}
    \Bigl(
        \err_D(h,c)
        -
        \eta(c;\Sinplabeled,D)
    \Bigr),
    \label{equation:blanc-objective}
\end{equation}
where $\eta(c;\Sinplabeled,D)$ is an appropriate so-called \emph{corruption certificate}. Intuitively, this objective asks for a single hypothesis $h$ that simultaneously achieves, for every possible candidate ground truth $c\in\C$, essentially the optimal error guarantee that one could hope for if the clean sample $\Sclnunlabeled$ was labeled by $c$, as we describe below.

Since we do not know the true labeling function $c^\star$, it is natural to try to control $\err_D(h,c)$ simultaneously for every $c\in\C$. However, these objectives can be mutually incompatible: for example, if $\C$ contains both a concept $c$ and its negation $-c$, no randomized hypothesis can have error $o(1)$ with respect to both. The key observation of \cite{Blanc26} is that one should not require the same error guarantee for every candidate $c$. Rather, the error allowed with respect to $c$ should depend on how much corruption would have been necessary to produce the observed input sample had $c$ actually been the ground truth, which is captured by the correction term $\eta(c;\Sinplabeled,D)$.

More precisely, $\eta(c;\Sinplabeled,D)$ is a distributional proxy for the fraction of points of the candidate $c$-labeled clean set
\[
    \Sclnlabeled^c
    =
    \{(x,c(x)):x\in\Sclnunlabeled\}
\]
that would need to be corrupted in order to obtain $\Sinplabeled$. Equivalently, it serves as a proxy for $d_{\mathsf{TV}}(\Sinplabeled,\Sclnlabeled^c)$ when the algorithm does not have access to $\Sclnunlabeled$ itself, but only to samples from $D$. Thus, minimizing \eqref{equation:blanc-objective} amounts to finding a hypothesis that, simultaneously for every candidate ground truth $c\in\C$, achieves error approximately no larger than the amount of corruption needed to make $c$ consistent with the observed data.

Hence, the question becomes whether there exists a sufficiently simple hypothesis $h$ that simultaneously achieves these candidate-dependent optimal error benchmarks. The bulk of the work of \cite{Blanc26} is devoted to showing that $\eta(c;\Sinplabeled,D)$ can be defined such that
\begin{equation}
    \eta(c;\Sinplabeled,D)
    \le
    d_{\mathsf{TV}}(\Sinplabeled,\Sclnlabeled^c)
    +
    O(\varepsilon)
    \qquad\text{for all }c\in\C,
    \label{equation:blanc-certificate}
\end{equation}
and there exists some randomized $h$ obtained by mixing hypotheses in $\operatorname{Maj}_k(\C)$, where $k=\mathrm{poly}(1/\varepsilon)$, such that
\begin{equation}
    L_{\Sinplabeled,D}(h)\le\varepsilon.
    \label{equation:blanc-small-loss}
\end{equation}
When both \eqref{equation:blanc-small-loss} and \eqref{equation:blanc-certificate} hold, applying them to the true concept $c^\star$ yields the desired error bound \eqref{equation:optimal-guarantee}.

\paragraph{A connection to online optimization}
Although the framework of \cite{Blanc26} is sample-efficient for every class of bounded VC dimension, its direct implementation is highly inefficient. In particular, it searches over mixtures of hypotheses in $\operatorname{Maj}_k(\C)$ for an approximate minimizer of $L_{\Sinplabeled,D}$, and such a search is not known to be efficient even given access to an ERM oracle for $\C$.\footnote{This is due to the choice of the objective function $L_{\Sinplabeled,D}$.} Our main conceptual contribution is to replace this search with an online optimization procedure.

To motivate this connection, consider the minimax dual of \eqref{equation:blanc-objective}. After allowing randomized hypotheses, minimizing $L_{\Sinplabeled,D}$ is equivalent to maximizing, over distributions $\mu$ on $\C$, the objective
\[
    U_{\Sinplabeled,D}(\mu)
    =
    \inf_h
    \E_{c\sim\mu}
    \left[
        \err_D(h,c)-\eta(c;\Sinplabeled,D)
    \right].
\]
Moreover, the corruption certificate is itself defined through a supremum over a suitable class $\mathcal{F}$\footnote{In the initial formulation of \cite{Blanc26}, the class $\mathcal F$ is allowed to depend on the choice of $c$. For the purposes of the high-level explanation we provide here, we absorb this dependence inside $\eta(c,f;\Sinplabeled,D)$.} of discrepancy witnesses:
\[
    \eta(c;\Sinplabeled,D)
    =
    \sup_{f\in\mathcal{F}}
    \eta(c,f;\Sinplabeled,D).
\]
Consequently, the dual objective can be written as
\begin{equation}
    U_{\Sinplabeled,D}(\mu)
    =
    \inf_h
    \E_{c\sim\mu}
    \left[
        \err_D(h,c)
        -
        \sup_{f\in\mathcal{F}}
        \eta(c,f;\Sinplabeled,D)
    \right].
    \label{equation:dual-objective}
\end{equation}

One first attempt would be to directly maximize \eqref{equation:dual-objective} by analyzing its first-order properties. The usual route to obtaining supergradients (e.g., via Danskin's theorem) would require exact solutions to the inner optimizations over $h$ and the discrepancy witnesses $f$. Instead, a structural result of \cite{Blanc26} gives, for any fixed $\mu$, a pair $(h,f)$ whose substituted objective value is at most $\varepsilon$:

\begin{lemma}[\cite{Blanc26}]
    \label{lemma:dual-small-loss}
    Let $\mu$ be any distribution over $\C$, and define
    \[
        h_\mu(x)
        =
        \sign\left(\E_{c\sim\mu}[c(x)]\right).
    \]
    There exists $f_\mu\in\mathcal{F}$ such that
    \[
        \E_{c\sim\mu}
        \left[
            \err_D(h_\mu,c)
            -
            \eta(c,f_\mu;\Sinplabeled,D)
        \right]
        \le \varepsilon.
    \]
\end{lemma}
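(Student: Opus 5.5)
We need to propose a proof of Lemma (Blanc): for any $\mu$, $h_\mu$ the weighted majority, there is a witness $f_\mu$ with $\E_{c\sim\mu}[\err_D(h_\mu,c) - \eta(c,f_\mu;\Sinplabeled,D)] \le \varepsilon$. The certificate and the class $\mathcal F$ are not yet defined, so the plan has to posit a natural form for them.

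The plan is to choose the witness to be the region where $h_\mu$ disagrees with the candidate labelings, weighted appropriately. Concretely, I take $\eta(c,f;\Sinplabeled,D)$ to have the form
\[
\eta(c,f;\Sinplabeled,D)=\E_{x\sim D}[f(x,c(x))]-\frac{1}{|\Sinplabeled|}\sum_{(x,y)\in\Sinplabeled} f(x,y),
\]
with $f$ ranging over $[0,1]$-valued test functions on labeled points. This is a variational lower bound on $d_{\mathsf{TV}}(\Sinplabeled,\Sclnlabeled^c)$, up to sampling error. For a fixed $\mu$, I set
\[
f_\mu(x,y)=\mathbf 1[h_\mu(x)\neq y].
\]
Then $\E_D[f_\mu(x,c(x))]=\err_D(h_\mu,c)$ exactly. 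Averaging over $c\sim\mu$ therefore gives
\[
\E_{c\sim\mu}\bigl[\err_D(h_\mu,c)-\eta(c,f_\mu)\bigr]=\widehat{\err}_{\Sinplabeled}(h_\mu)\ge 0,
\]
which is not small. So the witness must be refined to weight disagreement by the margin of the vote.

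The second step is the refinement. Let $m(x)=\E_{c\sim\mu}[c(x)]$, so that $h_\mu=\sign(m)$, and let $p(x)=\Pr_{c\sim\mu}[c(x)\neq h_\mu(x)]=(1-|m(x)|)/2\le 1/2$. The key inequality is pointwise. Under $c\sim\mu$, the label $c(x)$ disagrees with $h_\mu(x)$ with probability $p(x)$ and agrees with probability $1-p(x)\ge p(x)$. I will choose the test function to place weight $1$ on the majority-disagreeing label and to subtract the same mass on the majority label, so that the empirical term is nonnegative. One candidate is
\[
f(x,y)=\mathbf 1[y\neq h_\mu(x)]-\tfrac{p(x)}{1-p(x)}\mathbf 1[y=h_\mu(x)],
\]
rescaled into $[0,1]$. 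With this choice:
\begin{itemize}
\item the population part $\E_{c\sim\mu}\E_D[f(x,c(x))]$ is $0$ pointwise, since $p-(1-p)\cdot\tfrac{p}{1-p}=0$;
\item for any labeled point, $f(x,y)\le \mathbf 1[y\neq h_\mu(x)]$, while $\err_D(h_\mu,c)$ is matched by the population term.
\end{itemize}
This is the step where the exact definition of $\mathcal F$ (and its possible dependence on $c$, as in the footnote) matters. The required bookkeeping is to check that $\sup_f$ of the certificate recovers exactly the $\mu$-average of errors minus a nonpositive empirical term.

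The third step handles finiteness. $\mathcal F$ must be a class with bounded complexity so that the distributional term concentrates; this is presumably why the paper's main algorithm uses $\operatorname{Maj}_k(\C)$. I would discretize $m(x)$ by replacing $\mu$ with an empirical average of $k=\mathrm{poly}(1/\varepsilon)$ draws. Hoeffding bounds then give $|m-\hat m|\le\varepsilon$ except on an $\varepsilon$-fraction of $D$, which loses only $O(\varepsilon)$. The witness is then built from $\hat h=\operatorname{Maj}_k$ and the rounded margin.

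I expect the main obstacle to be the second step. The issue is choosing a witness family that simultaneously (i) makes the certificate a genuine lower bound on the corruption, as in \eqref{equation:blanc-certificate}, and (ii) makes the $\mu$-averaged gap vanish. The natural fix to try first is to let $f$ depend on $c$ only through whether $c(x)=h_\mu(x)$, and to verify (ii) by the pointwise identity above.
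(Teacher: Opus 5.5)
\textbf{The gap is the form of the certificate you posit.} It is a structural problem, not a bookkeeping one. Your witness enters the sample term independently of $c$: $\eta(c,f)=\E_{x\sim D}[f(x,c(x))]-\frac1n\sum_{(x,y)\in\Sinplabeled}f(x,y)$. By linearity, $\E_{c\sim\mu}\eta(c,f)=\E_{P_\mu}[f]-\frac1n\sum_{(x,y)\in\Sinplabeled}f(x,y)$, where $P_\mu$ is the law of $(x,c(x))$ for $x\sim D$, $c\sim\mu$.

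Take $\mu$ uniform on $\{c_0,-c_0\}$ and an input sample whose labels are independent fair coins. (For a general rate $\eta$, take $\mu$ uniform on two concepts that disagree on a region of mass about $2\eta$, and re-randomize the labels there.) The input sample is then essentially i.i.d.\ from $P_\mu$. So $\E_{c\sim\mu}\eta(c,f)=o(1)$ uniformly over any witness class with uniform convergence, which you need anyway for \eqref{equation:blanc-certificate}. Meanwhile $\E_{c\sim\mu}\err_D(h_\mu,c)$ equals $\tfrac12$ (resp.\ $\eta$).

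Hence no choice of $f_\mu$ and no discretization can prove the lemma in your framework. Your proposed fix, letting $f$ depend on $c$ only through whether $c(x)=h_\mu(x)$, does not help. What must depend on $c$ is how the \emph{observed} label $y$ at a sample point is charged.

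Your second step also fails on its own terms. The population term must average to $\E_{c\sim\mu}\err_D(h_\mu,c)=\E_D[p(x)]$, not to $0$. With your $f$ the averaged gap is $\E_D[p(x)]+\frac1n\sum f(x,y)$, and an affine rescaling does not remove $\E_D[p(x)]$. So your two bullets contradict each other.

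\textbf{The missing idea} is the paper's certificate (\Cref{def:erm-empirical-certificate}). Its witness is the vote $m$ itself, and it charges sample points inconsistent with the candidate $c$ maximally. Let $g_m(x,y)=\frac{1-y\,m(x)}{1+|m(x)|}$. This equals $1$ on the minority label and $\frac{p}{1-p}$ on the majority label, so you found the right ratio but with the wrong sign. The certificate reads
\[
\eta(c,m)=\frac12\Bigl(\E_{x\sim D}\bigl[g_m(x,c(x))\bigr]-\frac1n\sum_{(x,y)\in\Sinplabeled}\phi_c(x,y)\Bigr),
\qquad
\phi_c(x,y)=g_m(x,y)\,\mathbf 1[y=c(x)]-\mathbf 1[y\neq c(x)].
\]
Averaging over $c\sim\mu$, the population term becomes $\frac12\bigl((1-p)\frac{p}{1-p}+p\bigr)=p(x)$. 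This cancels the error of $\sign(m)$ exactly. For each sample point, $\E_{c\sim\mu}\phi_c(x,y)$ equals $(1-p)\frac{p}{1-p}-p=0$ if $y=h_\mu(x)$ (your identity, but on the sample side), and $-|m(x)|$ otherwise.

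The averaged gap is therefore $-\frac{1}{4n}\sum_{(x,y)\in\Sinplabeled}(|m(x)|-y\,m(x))\le0$. This is Part~1 of the proof of \Cref{lemma:erm-error-guarantee}. The $\varepsilon$ only pays for replacing $\sign$ by $M_k$, at most $1/\sqrt k$ by \cite[Claim~8.3]{Blanc26}. Uniform convergence is needed for the validity of the certificate, not for this lemma.
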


The pair $(h_\mu,f_\mu)$ need not even approximately solve the inner optimizations in \eqref{equation:dual-objective}: the lemma only bounds its substituted objective value from above by $\varepsilon$, whereas the dual objective may be substantially smaller. Thus, the lemma does not directly yield a supergradient of $U_{\Sinplabeled,D}$.

Our key observation is that \Cref{lemma:dual-small-loss} nevertheless provides exactly the type of response needed for a low-regret online optimization argument to suffice for our purposes. At round $t$, the online algorithm chooses a distribution $\mu_t$ over $\C$. We then construct $h_t=h_{\mu_t}$ and a corresponding witness $f_t=f_{\mu_t}$, and define the linear reward function
\[
    U_t(\mu)
    =
    \E_{c\sim\mu}
    \left[
        \err_D(h_t,c)
        -
        \eta(c,f_t;\Sinplabeled,D)
    \right].
\]
Although the reward function changes from round to round, \Cref{lemma:dual-small-loss} guarantees that the reward at the distribution chosen by the algorithm always satisfies
\begin{equation}
    U_t(\mu_t)\le\varepsilon.
    \label{equation:reward-at-each-step}
\end{equation}
Suppose that we can choose the distributions $\mu_1,\ldots,\mu_T$ using an online reward-maximization algorithm with sublinear regret. Then, for every distribution $\mu$ over $\C$,
\begin{equation}
    \sum_{t=1}^T U_t(\mu)
    \le
    \sum_{t=1}^T U_t(\mu_t)
    +
    \operatorname{Reg}_T,
    \qquad
    \operatorname{Reg}_T=o(T).
    \label{equation:sublinear-regret}
\end{equation}

The significance of this formulation is that the regret guarantee can be applied to the point mass on the unknown true concept $c^\star$, even though the algorithm never observes $c^\star$. Combining \eqref{equation:reward-at-each-step} and \eqref{equation:sublinear-regret} with $\mu=\delta_{c^\star}$ gives
\[
    \frac{1}{T}\sum_{t=1}^T
    \err_D(h_t,c^\star)
    \le
    \frac{1}{T}\sum_{t=1}^T
    \eta(c^\star,f_t;\Sinplabeled,D)
    +
    \varepsilon
    +
    \frac{\operatorname{Reg}_T}{T}.
\]
Every $f_t$ is a valid discrepancy witness, and hence
\[
    \eta(c^\star,f_t;\Sinplabeled,D)
    \le
    \eta(c^\star;\Sinplabeled,D)
    \le
    \eta+O(\varepsilon)
\]
by \eqref{equation:blanc-certificate}. Therefore, the randomized hypothesis $h_T$ that samples $\mathbf{t}\sim\operatorname{Unif}([T])$ and outputs $h_{\mathbf{t}}$ satisfies
\[
    \err_D(h_T,c^\star)
    \le
    \eta+O(\varepsilon)
    +
    \frac{\operatorname{Reg}_T}{T}.
\]
It remains to show that the online updates can be implemented efficiently.

\paragraph{Efficient implementation with an ERM oracle}
The online framework discussed above requires an efficient reward maximization routine to obtain sublinear regret as in \eqref{equation:sublinear-regret}. We show that this can be achieved efficiently as long as we have access to an ERM oracle for the class $\C$.

In particular, by using appropriate uniform convergence tools, it suffices to focus on a finite-sample variant of the objective, i.e.:
\[
    \hat{U}_t(\mu)
    =
    \E_{c\sim\mu}
    \left[
        \err_{\Srefunlabeled}(h_t,c)
        -
        \eta(c,f_t;\Sinplabeled,\Srefunlabeled)
    \right],
\]
where $\Srefunlabeled$ is a set of independent reference samples drawn from $D$. The first crucial observation is that at each time step $t$, the objective function $\hat{U}_t(\mu)$ can be written in the following linearized form:
\[
    \hat{U}_t(\mu)
    =
    C_t+\langle g_t,\E_{c\sim\mu}[c]\rangle\,,
\]
for some scalar $C_t$ and vector $g_t\in\R^{|\Sinplabeled|+|\Srefunlabeled|}$, both independent of $\mu$, where $c$ is also viewed as a vector in $\{\pm1\}^{|\Sinplabeled|+|\Srefunlabeled|}$ taking the values $c(x)$ for $x\in\Sinpunlabeled$ and $x\in\Srefunlabeled$. A further simplification is to view the input $\mu$, which corresponds to some distribution over $\C$, as some vector in $[-1,1]^{|\Sinplabeled|+|\Srefunlabeled|}$ taking the values $\E_{c\sim\mu}[c(x)]$ for $x\in\Sinpunlabeled$ and $x\in\Srefunlabeled$. With this abuse of notation we have:
\[
    \hat{U}_t(\mu)=C_t+\langle g_t,\mu\rangle\,.
\]

Since these rewards are affine in $\mu$, we are then able to use an online Frank-Wolfe procedure to achieve regret $O(T^{3/4})$. Concretely, the online Frank-Wolfe procedure is a projection-free online optimization algorithm with sublinear regret, whose only algorithmic requirement is an oracle that returns an approximate maximizer $\nu_t$ of the following program:\footnote{Here $F_t$ is a regularized sum of the rewards from preceding rounds; we omit its precise form from this high-level overview.}
\begin{align*}
    \max_\nu\;&\langle\nabla F_t(\mu_t),\nu\rangle\\
    \text{s.t. }&\nu\text{ lies in the convex hull of }\C.
\end{align*}
The next iterate $\mu_{t+1}$ is then formed as a convex combination of $\nu_t$ and the current iterate $\mu_t$, and as such it remains in the convex hull of $\C$. We show that $\nu_t$ can be computed efficiently assuming access to an ERM oracle for $\C$, as it suffices to find a concept $c_t$ in $\C$ with near-maximal correlation with appropriately chosen weighted labels $(z_x)_x$ over the concatenated sample $\Sinplabeled\cup\Srefunlabeled$.\footnote{Due to the linearity of the objective, $\nu_t$ can always be chosen to be a single point-mass $\delta_{c_t}$ for some $c_t\in\C$.} In particular, we label each datapoint by the sign of the corresponding coordinate of $\nabla F_t(\mu_t)$ and assign it a weight equal to the absolute value of that coordinate. This weighted ERM problem can be reduced to ordinary ERM by sampling from the resulting weighted empirical distribution (see \Cref{fact:weighted-erm}).

We note that ERM oracles have been used to obtain Frank-Wolfe-type updates in prior work in other contexts \cite{GoelGollakotaKlivans20,gopalan_et_al:LIPIcs.ITCS.2023.60}, as well as in the work of \cite{Blanc26}, but here we use it in an online variant of Frank-Wolfe.

\paragraph{Efficient implementation using sandwiching polynomials}
In order to obtain end-to-end efficient algorithms that do not require access to an ERM oracle, we combine several technical tools, including Zinkevich's online projected gradient ascent, sandwiching polynomials, and iterative filtering.

In particular, we focus on classes $\C$ that can be sandwiched by low-degree polynomials under the clean distribution $D$. As a first step of our algorithm, we perform an iterative filtering procedure on the input dataset $\Sinplabeled$ which outputs a filtered dataset $\Sfiltlabeled$ on which a low-degree polynomial has small total approximation error with respect to the ground truth $c^\star$, while we only remove a negligible fraction of clean points. Such filtering procedures have been extensively used in prior work on learning with contamination \cite{diakonikolas2018learning,klivans-learning-constant-depth-in-malicious-noise-models,KSTV25}, but here we combine it with the online optimization framework described above to obtain the information-theoretically optimal bound of $\eta+O(\varepsilon)$. One technical difference we face here compared to prior work is that we need to choose the parameters of the filtering algorithm carefully in order to ensure that only $O(\varepsilon)|\Sinplabeled|$ clean points are removed. Prior work only required removing approximately more corrupted points than clean points. As a result, we need to begin with a slightly stronger approximation guarantee of $O(\varepsilon^2)$-sandwiching rather than $O(\varepsilon)$-sandwiching (see \Cref{thm:main-guarantee}).

After the filtering procedure, we are able to substitute the objective $U_{\Sinplabeled,D}(\mu)$ with a surrogate $U_{\Sfiltlabeled,D}^{\mathrm{poly}}(q)$, over low-degree polynomials $q$, for which we can provide a version of \Cref{lemma:dual-small-loss} using a polynomial corruption witness. This is possible because the ground truth $c^\star$ is approximated by a low-degree polynomial both under $D$ and on the retained sample.

We show that the new objective $U_{\Sfiltlabeled,D}^{\mathrm{poly}}(q)$ corresponds to an online problem where the reward functions $U_t^{\mathrm{poly}}(q)$ are affine in the coefficients of the polynomial $q$ (we refer to this as linearization), and, therefore, we may apply a black-box online convex optimization result by \cite{Zin03} to obtain regret $O(\sqrt{T})$.

\subsection{Related Work}

\paragraph{Learning with Contamination} The problem of supervised learning with contamination dates back to the early days of learning theory \cite{Valiant85,KearnsL93}. The precise model of bounded contamination we study in this work was initially defined by \cite{BEK02}. There has since been a line of research on the computational complexity of learning in this model \cite{klivans2009learning,awasthi2017power,diakonikolas2018learning,goel2024tolerant,klivans2024learningac0}, but before our work, there was no known efficient learning algorithm with error $\eta+\varepsilon$ even for the basic class of halfspaces with respect to the Gaussian distribution. Instead, all the previous efficient algorithms achieved an error guarantee of at least $2\eta$ \cite{KSTV25} (or worse).

From an information-theoretic perspective, the work of \cite{BEK02} provided a lower bound demonstrating that there are cases where any learner that outputs a deterministic hypothesis must incur error $2\eta$. The work of \cite{10.1145/324133.324221}, which worked under a slightly relaxed noise model called \emph{malicious noise}, taken together with a result from \cite{blanc2026nasty}, which relates the malicious noise model to bounded contamination, provided an improved error bound of $3\eta/2+\varepsilon$ for VC classes, by allowing randomized hypotheses. The recent work of \cite{Blanc26} showed that the $3\eta/2+\varepsilon$ error bound can be achieved with sample complexity linear on the VC dimension of the learned class, and gave an efficient reduction to agnostic ERM. Moreover, \cite{Blanc26} achieved error $\eta+\varepsilon$ with a similar sample complexity bound, but with an algorithm that is not efficient, even provided access to an ERM oracle. Our work is the first to establish an efficient reduction to ERM that achieves error $\eta+\varepsilon$.

\paragraph{Online Convex Optimization} Our main conceptual contribution is a new connection between the (offline) problem of learning with contamination, and online convex optimization. For our ERM reduction, we use an online version of Frank-Wolfe algorithm whose classical analysis can be found in \cite{HazanOCO}. Note that several previous works have used the ERM oracle to simulate the gradient updates of the offline Frank-Wolfe algorithm or similar procedures in the context of statistical query lower bounds \cite{GoelGollakotaKlivans20}, indistinguishability and omniprediction \cite{gopalan_et_al:LIPIcs.ITCS.2023.60}, as well as robust learning \cite{Blanc26}. In particular, \cite{Blanc26} uses an offline Frank-Wolfe approach to obtain the suboptimal error bound of $3\eta/2+\varepsilon$. For our end-to-end efficient algorithms, we use a slightly different online convex optimization algorithm by \cite{Zin03} with better regret guarantees.

\paragraph{Iterative Filtering} To achieve end-to-end efficient algorithms, we use iterative filtering as an initial algorithmic component. Such filtering algorithms have been thoroughly explored in the literature of robust learning and learning with distribution shift \cite{klivans2009learning,diakonikolas2018learning,diakonikolas2019robust,diakonikolas2019sever,goel2024tolerant,klivans2024learningac0,KSTV25,chandrasekaran2026iterative}. We use the filtering guarantees provided by the iterative polynomial filtering (IPF) algorithm of \cite{KSTV25}, with appropriate parameters, to ensure that the expectations of polynomials are controlled while only removing a negligible fraction of clean points from the input dataset.

\paragraph{Sandwiching Polynomials} We use low-degree sandwiching polynomials as an analytic tool to obtain our runtime upper bounds. Sandwiching has a long history in pseudorandomness, where it is closely connected, via duality, to fooling by bounded independence and approximate moment matching \cite{bazzi2009polylogarithmic,braverman2008polylogarithmic,gopalan2010fooling,gollakota2022moment}. This connection has led to sandwiching results for a broad range of function classes, including $\mathsf{AC}^0$ circuits, functions of halfspaces, and polynomial threshold functions under various distributions such as the standard Gaussian or the uniform on the hypercube \cite{braverman2008polylogarithmic,gopalan2010fooling,tal2017tight,harsha2019polynomial,kanefoolingpfs,slot2024testably,KSV26}. More recently, sandwiching polynomials have also been used as an analytical tool in testable learning, learning with distribution shift, and robust learning. In the case of learning with bounded contamination, prior work by \cite{KSTV25} showed that even the weaker notion of approximating polynomials suffices to achieve error $2\eta+\varepsilon$ efficiently. Here, we use sandwiching polynomials to obtain the near-optimal error guarantee of $\eta+\varepsilon$.

\paragraph{Agnostic Learning} Another related notion of robust learning is agnostic learning \cite{Haussler:92,KearnsSchapire:94}. In this model, the noise can only affect the labels, and the error is measured on the distribution of the input samples rather than the underlying noiseless distribution. The work of \cite{kalai2008agnostically} showed that standard polynomial approximators suffice for efficient learnability with near-optimal error in this model. In fact, it is known that efficient learnability with near-optimal error is characterized by the degree of polynomial approximation, through matching statistical query lower bounds under the uniform distribution over the hypercube \cite{dachman2014approximate}, as well as the standard Gaussian \cite{diakonikolas2021optimality}. Here, we show that the stronger notion of sandwiching suffices for efficient learnability with near-optimal error in the challenging case of bounded contamination.

\section*{Statement of AI Usage}

All main ideas and technical contributions in this paper were human-generated. AI tools were used only for proof auditing, language editing, and minor revisions.

\section{Preliminaries}

\paragraph{Notation.} 
For $K\subseteq\R^N$ nonempty, closed, and convex, the Euclidean projection onto $K$ is the map $\Pi_K:\R^N\to K$ defined as
$$
\Pi_K(z)
:=
\operatorname*{argmin}_{u\in K}\|u-z\|_2^2.
$$
Let $N=\binom{d+L}{L}$ be the dimension of the space of degree-$L$ polynomials on $\R^d$. For notational convenience, throughout the paper we write
$$
x^{\otimes L}
:=
\bigl(x^\zeta\bigr)_{
    \zeta\in\mathbb N_0^d,\,
    \|\zeta\|_1\le L
}
\in\R^N,
$$
for the vector of all monomials\footnote{The notation $x^{\otimes L}$ often denotes the homogeneous tensor power of degree exactly $L$; throughout this paper, it denotes the degree-at-most-$L$ monomial defined above.} in $x$ of degree at most $L$, including the constant monomial. Every degree-$L$ polynomial can consequently be written as $q_w(x)=\langle w,x^{\otimes L}\rangle$ with the coefficient vector $w\in\R^N$. Define the degree-$L$ polynomial-threshold class
$$
\mathcal H_0 := \left\{x\mapsto\sign\left(\langle w,x^{\otimes L}\rangle\right): w\in\R^N \right\}.
$$
For the same degree $L$ and some coefficient radius $B$, define the degree-$L$ polynomial class
$$
\mathcal M := \left\{q_w: \deg(q_w)\le L,\ \|w\|_1\le B \right\}.
$$
For a polynomial $p_w(x)=\langle w, x^{\otimes L}\rangle$ of degree at most $L$, define its coefficient norm by
$$
\|p_w\|_{\mathrm{coef}}:=\|w\|_1.
$$
For an integer $k\ge1$, define the class of $k$-wise majorities of concepts in $\mathcal C$ by
$$
\operatorname{Maj}_k(\mathcal C)
:=
\left\{
x\mapsto
\sign\left(
\frac1k\sum_{i=1}^k c_i(x)
\right)
:
c_1,\ldots,c_k\in\mathcal C
\right\}.
$$

\paragraph{Sample conventions.} Each sample entry is treated as a distinct instance, even when two entries represent the same feature vector or labeled example. Set operations on samples respect these identities. If $S$ denotes the unlabeled version of a labeled sample $\bar S$, removal of the labels does not change the instance identities. An adversarial substitution removes one clean instance and inserts a new one.

\paragraph{Other preliminaries.} We work with randomized hypotheses of the form $h(x)\sim\operatorname{Rad}(\bar h(x))$, where $\operatorname{Rad}(\bar h(x))$ denotes the Rademacher distribution on $\{\pm1\}$ with mean $\bar h(x)$.

\begin{definition}[Randomized $0$-$1$ error]
\label{def:linear-error}
Let $\bar h:X\to[-1,1]$ and $q:X\to\R$.\footnote{When $q$ is
not $[-1,1]$-valued, these quantities denote the linear extension of
randomized $0$-$1$ error.}
For a distribution $D$ over $X$ and a reference sample
$S_{\mathrm{ref}}$, define
$$
\err_D(\bar h,q)
:=
\E_{x \sim D}
\left[
\frac{1-\bar h(x)q(x)}{2}
\right]\qquad\text{and}
\qquad
\widehat{\err}_{S_{\mathrm{ref}}}(\bar h,q)
:=
\frac{1}{|S_{\mathrm{ref}}|}
\sum_{x\in S_{\mathrm{ref}}}
\frac{1-\bar h(x)q(x)}{2}.
$$
\end{definition}
We next define the main distributional assumption in Section~\ref{sec:learning-with-polynomials}.

\begin{definition}[Hypercontractivity]\label{def:hypercontractivity}
A distribution $D$ over $\R^d$ is hypercontractive with respect to polynomials if there is a constant $A\ge1$ such that for any polynomial $p$ over $\R^d$ and any $r\ge2$ we have
\begin{enumerate}
    \item $\E_{ x\sim D}[|p( x)|^r]\le (Ar)^{L r}\left(\E_{ x\sim D}[|p( x)|]\right)^r$, where $L$ is the degree of $p$.
    \item The absolute expectations of degree-$1$ monomials are finite under $D$.
\end{enumerate}
\end{definition}

A related distributional property is subexponentiality, which controls
the tails of a distribution. 

\begin{definition}[Subexponential distribution]
\label{def:subexp}

For $C_D>0$, a distribution $D$ over $\R^d$ is
$C_D$-subexponential if, for every $w\in\mathbb S^{d-1}$ and every
$u\ge0$,
\[
\Pr_{X\sim D}
\left[
|\langle w,X\rangle|\ge u
\right]
\le
e^2\exp\left(-\frac{u}{C_D}\right).
\]

\end{definition}

\begin{theorem}[Iterative filtering, {\cite[Theorem~3.2]{KSTV25}}]\label{thm:ipf-filtering}
Let $D$ be an $A$-hypercontractive distribution. Fix a degree $L$, $\varepsilon, \delta  \in (0,1)$, and set $R = 8/\varepsilon$. For a sufficiently large universal constant $C\ge1$, suppose $n \ge C R^2 \frac{(2A(d+1))^{2L}}{\varepsilon^3}\log\left(\frac{2}{\delta}\right)$ and $|S_{\mathrm{ref}}| \ge R^2\frac{(CAd)^{2L}}{\varepsilon^3}\left( \log\left(\frac{2}{\delta}\right)\right)^{4L+1}$. Given a corrupted sample $S_{\mathrm{inp}}$ and a reference sample
$S_{\mathrm{ref}}$, iterative polynomial filtering with accuracy
$\varepsilon/16$ runs in time
$\operatorname{poly}(|\bar S_{\mathrm{inp}}|,|S_{\mathrm{ref}}|,(d+1)^L)$ and returns a filtered set $S_{\mathrm{filt}}\subseteq S_{\mathrm{inp}}$ such that
\begin{enumerate}
    \item \textbf{Clean-point preservation.} The number of uncorrupted sample instances removed by filtering is small
    $$
    \frac{|(S_{\mathrm{cln}} \cap S_{\mathrm{inp}}) \setminus S_{\mathrm{filt}}|}{n}\le \frac{5\varepsilon}{32},
    $$
    with probability at least $1-\delta/2$.

    \item \textbf{Polynomial preservation.} For any degree-$L$ polynomial $p$ satisfying $\E_{x \sim D} \left|p\right|\le \varepsilon^2/512$, we have
    $$
    \frac{1}{|\bar S_{\mathrm{inp}}|}\sum_{x \in S_{\mathrm{filt}}} p(x) \le \frac{\varepsilon}{16},
    $$
    with probability at least $1-\delta/2$.
\end{enumerate}
\end{theorem}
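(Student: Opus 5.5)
This is a re-parameterization of the iterative polynomial filtering (IPF) guarantee of \cite{KSTV25}, so the plan is to recall that algorithm and its analysis and instantiate the accuracy parameter so that the clean-point loss is $O(\varepsilon)$ rather than merely ``fewer clean than corrupted points removed''.

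\textbf{The filtering loop.} IPF maintains a current set $S\subseteq S_{\mathrm{inp}}$. In each round it searches, via an SDP or eigenvector computation over the monomial basis $x^{\otimes L}$, for a degree-$L$ polynomial $p$ that separates $S$ from $S_{\mathrm{ref}}$. Concretely, it looks for a $p$ with $\E_{S_{\mathrm{ref}}}|p|$ small but $\frac{1}{n}\sum_{x\in S}p(x)$ large, thresholded at $R=8/\varepsilon$ after truncation. If such a $p$ is found, the algorithm removes points of $S$ with large $|p(x)|$, using randomized or threshold-based removal weighted by $|p(x)|$. If no violating $p$ exists, it halts and outputs $S_{\mathrm{filt}}=S$. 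Since each round removes at least one point, the running time bound $\operatorname{poly}(|\bar S_{\mathrm{inp}}|,|S_{\mathrm{ref}}|,(d+1)^L)$ follows.

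\textbf{Polynomial preservation.} This is just the stopping condition. At termination, every degree-$L$ $p$ with small $L_1$ norm on $S_{\mathrm{ref}}$ has empirical sum over $S_{\mathrm{filt}}$ at most $\varepsilon/16$. The step to justify is passing from $\E_D|p|\le\varepsilon^2/512$ to a bound on $\E_{S_{\mathrm{ref}}}|p|$, uniformly over all degree-$L$ polynomials. I would do this with hypercontractivity: bound the $r$-th moments by $(Ar)^{Lr}(\E|p|)^r$, then apply a uniform concentration argument over the $N=\binom{d+L}{L}\le(d+1)^L$-dimensional space. This is where the reference-sample size $(CAd)^{2L}(\log(2/\delta))^{4L+1}/\varepsilon^3$ comes from. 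Truncation at $R$ handles the heavy tails.

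\textbf{Clean-point preservation.} This is the main obstacle. The plan is a potential argument: in each round, the mass removed from clean points is charged against the total $\sum_{x}|p(x)|$ over the removed points. For clean points, $\frac1n\sum_{x\in S_{\mathrm{cln}}}|p(x)|\approx\E_D|p|\le\varepsilon^2/512$. This holds uniformly, and requires $n\ge CR^2(2A(d+1))^{2L}\log(2/\delta)/\varepsilon^3$, again via hypercontractive concentration. Meanwhile, the violation forces the removed set to carry excess mass about $\varepsilon/16$. So each round, clean removals are at most an $O(\varepsilon)$ fraction of the total removal weight.

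Summing over rounds, the total weight removed is at most the total number of points, which is at most $n$. Using $R=8/\varepsilon$ to cap each point's contribution, the expected number of clean removals is about $(\varepsilon^2/512)/(\varepsilon/16)\cdot R\cdot n\cdot O(1)$. With the constants as chosen, this comes to at most $5\varepsilon n/32$. The randomized removal needs a martingale concentration bound for the failure probability $\delta/2$.

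I expect the delicate part to be verifying these constants carefully, and confirming that the $O(\varepsilon^2)$ versus $O(\varepsilon)$ gap in \cite{KSTV25}'s Theorem 3.2 yields exactly the $\varepsilon$ clean-loss bound. In practice, the statement is \cite[Theorem~3.2]{KSTV25} with accuracy $\varepsilon/16$, so I would cite it after checking that the parameter substitution matches.
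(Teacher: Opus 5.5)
Your bottom line, instantiating \cite[Theorem~3.2]{KSTV25} with accuracy $\varepsilon/16$ and $R=8/\varepsilon$, is exactly what the paper does. The paper gives no proof of this statement beyond that citation. The reconstruction around it, however, does not produce the claimed constants and should not be offered as a justification. Your final estimate $(\varepsilon^2/512)/(\varepsilon/16)\cdot R\cdot n$ evaluates to $(\varepsilon/32)\cdot(8/\varepsilon)\cdot n=n/4$. That is a constant fraction of the sample, not $5\varepsilon n/32$. The culprit is reading $R$ as a truncation level or per-point cap; coordinate truncation in this paper is the separate box truncation at $\tau$, done before filtering. The constants in the statement suggest instead that $R$ is the filter's ratio parameter:
\begin{itemize}
\item $5\varepsilon/32=1/R+\varepsilon/32$, so clean removals are charged at rate about $1/R$ against at most $n$ total removals, plus a lower-order term.
\item $R\cdot\varepsilon^2/512=\varepsilon/64\le\varepsilon/16$, so polynomial preservation apparently holds only up to a factor of order $R$ times the $L_1$ mass.
\end{itemize}
This trade-off is why the hypothesis is $\E_D|p|\le\varepsilon^2/512$ rather than $O(\varepsilon)$. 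It matches the paper's remark that removing only $O(\varepsilon)n$ clean points is what forces $\varepsilon^2$-sandwiching.

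If you want your weighted-removal heuristic to stand on its own, drop the spurious factor $R$. A per-round ratio of $\varepsilon/32$ between clean and total removed $|p|$-mass, with at most $n$ removals overall, already gives $O(\varepsilon n)$. But then everything rests on the claim $\frac1n\sum_{x\in S_{\mathrm{cln}}}|p(x)|\lesssim\E_D|p|$ at scale $\varepsilon^2$. This must hold uniformly over all degree-$L$ polynomials, because the filter's $p$ is chosen adaptively from data that contain $S_{\mathrm{cln}}$. That is a uniform upper-tail bound for heavy-tailed functions, and you only assert it. The obvious route, through the empirical second-moment matrix plus hypercontractivity, loses a factor $(2A)^L$. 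That factor is far from constant here, since $L$ grows as $\varepsilon$ shrinks. So as a self-contained proof your sketch has a real gap. As a citation it coincides with the paper, provided you actually carry out the parameter check you defer.
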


\section{Robust learning with low-degree polynomials}\label{sec:learning-with-polynomials}

The goal of this section is to give an algorithm that outputs a hypothesis with error at most $\eta+\varepsilon$ when the concept class $\mathcal C$ admits low-degree sandwiching polynomials. In
Subsection~\ref{sec:algorithm}, we present the algorithm. In
Subsection~\ref{sec:no-regret}, we show that its update is an instance
of a well-known online optimization procedure. In Subsection~\ref{sec:guarantee}, we prove that the resulting hypothesis achieves error at most $\eta+\varepsilon$.

We use the standard notion of sandwiching polynomial approximation
\cite{bazzi2009polylogarithmic}.

\begin{definition}[Sandwiching approximators]\label{def:sandwiching-approximators}
For $\rho\in(0,1)$ and $B>0$, we say that a class $\mathcal C\subseteq\{\pm1\}^X$ has $\rho$-sandwiching degree $L$ with coefficient bound $B$ with respect to a distribution $D$ over $X$ if, for every $c\in\mathcal C$, there exist two polynomials $p_{\mathrm{down}}$ and $p_{\mathrm{up}}$ of degree at most $L$ such that:
\begin{enumerate}
    \item 
    $
    p_{\mathrm{down}}(x)\le c(x)\le p_{\mathrm{up}}(x)
    $ for all $x \in X$

    \item 
    $
    \E_{x \sim D} \left[p_{\mathrm{up}}(x)-p_{\mathrm{down}}(x) \right] \le \rho.
    $

    \item
    $\|p_{\mathrm{down}}\|_{\mathrm{coef}}\le B$ and $ \|p_{\mathrm{up}}\|_{\mathrm{coef}}\le B.
    $
\end{enumerate}
\end{definition}

This allows us to associate every concept
$c\in\mathcal C$ with a low-degree polynomial. Let
$p_{\mathrm{down}}$ and $p_{\mathrm{up}}$ be sandwiching polynomials
for $c$, and define their midpoint $q_c(x) := (p_{\mathrm{down}}(x)+p_{\mathrm{up}}(x))/2$. Consequently, $\deg(q_c)\le L$, $\|q_c\|_{\mathrm{coef}}\le B$, and $\E_{x \sim D}
\left[
|q_c(x)-c(x)|
\right]
\le
\rho/2$. We therefore define the feasible set\footnote{The set $\mathcal Q$ may be viewed as a relaxation of $\mathcal C$: every $c\in\mathcal C$ has a corresponding midpoint polynomial in $\mathcal Q$, but a general $q\in\mathcal Q$ need not approximate some concept in $\mathcal C$.} of polynomials
\begin{equation}\label{eq:feasible-set}
\mathcal Q=\mathcal Q(L,B)
:=
\left\{
q_w:\deg(q_w)\le L,\ \|w\|_1\le B
\right\}.
\end{equation}

\subsection{Robust learning algorithm via online gradient ascent}\label{sec:algorithm}

The first step of our analysis is to specify an optimization objective. This objective consists of two essential components: the randomized $0$-$1$ error and the polynomial corruption certificate defined below.

\begin{definition}[Polynomial corruption certificate]\label{def:polynomial-certificate}
For $q:X\to \R$ and $m:X\to \R$, we define the polynomial corruption certificate
$$
\begin{aligned}
\eta^{\mathrm{poly}}(q,m; \bar S_{\mathrm{filt}}, S_{\mathrm{ref}}) &=
\frac{1}{4|\bar S_{\mathrm{inp}}|}\sum_{(x,y) \in \bar S_{\mathrm{filt}}}
\left(1+\mathsf F(q(x), m(x)) + y\bigl(\mathsf G(q(x),m(x))-q(x)\bigr)
\right) \\ &- \frac1{2|S_{\mathrm{ref}}|}\sum_{x \in S_{\mathrm{ref}}} \mathsf F(q(x),m(x)),
\end{aligned}
$$
where $\mathsf F(q(x),m(x)) = \frac{q(x)m(x)-1}{1+|m(x)|}$ and $
\mathsf G(q(x),m(x)) = \frac{m(x)-q(x)}{1+|m(x)|}$.
\end{definition}

The goal is to find a hypothesis $\bar h: X \to [-1, 1]$ with small loss for every polynomial $q \in \mathcal Q$, where the loss is 
$$
L(\bar h) := \sup_{q \in \mathcal Q}(\widehat{\err}_{S_{\mathrm{ref}}}(\bar h,q) - \sup_{m \in \mathcal M}\eta^{\mathrm{poly}}(q,m; \bar S_{\mathrm{filt}}, S_{\mathrm{ref}})) = \sup_{q \in \mathcal Q} U(q)
$$
As outlined in Section~\ref{sec:tech-overview}, this is a non-trivial optimization problem. To control this loss, we introduce the following iterative procedure. At iteration $t \in [T]$, the current polynomial $q_t$ determines $m_t = q_t$ and $h_t = \sign(q_t)$. For every $q\in\mathcal Q$ and $t \in [T]$, define the iteration objective
\begin{equation}
\label{eq:loss-def}
U_t(q)
:=
\widehat{\err}_{S_{\mathrm{ref}}}(h_t,q)
-
\eta^{\mathrm{poly}}
\left(
q,m_t;
\bar S_{\mathrm{filt}},S_{\mathrm{ref}}
\right).
\end{equation}
The procedure applies projected gradient ascent to the sequence $(U_t)_{t=1}^T$ over the coefficients of $q \in \mathcal Q$ and outputs $\bar h_T := T^{-1}\sum_{t=1}^T h_t$. By linearity of $0$-$1$ error in its first argument and because $m_t\in\mathcal M$ for every $t\in[T]$,
\begin{equation}\label{eq:loss-bound}
L(\bar h_T)
\le
\sup_{q\in\mathcal Q}
\frac1T\sum_{t=1}^T
\left(
\widehat{\err}_{S_{\mathrm{ref}}}(h_t,q)
-
\eta^{\mathrm{poly}}
\left(
q,m_t;
\bar S_{\mathrm{filt}},S_{\mathrm{ref}}
\right)
\right)
=
\sup_{q\in\mathcal Q}
\frac1T\sum_{t=1}^T U_t(q).
\end{equation}
Thus, it suffices to bound the average of the iteration objectives, which is a substantially easier task. Here, our main ingredient is online optimization \cite{HazanOCO}, a framework designed for settings in which the objective changes across iterations.\footnote{In our setting, $U_t(q)$ changes with $t$ because $m_t$ and $h_t$ depend on the current iterate $q_t$.} 

Our algorithm, given in \Cref{fig:polynomial-algo}, combines iterative polynomial filtering with an online convex optimization procedure over low-degree polynomials.

\begin{figure}[ht!]
\centering
\fbox{
\begin{minipage}{0.93\textwidth}
\textsc{Online-Gradient-Ascent} $(\bar S_{\mathrm{inp}},S_{\mathrm{ref}}, \varepsilon, \delta)$ :

\vspace{0.5em}

\textbf{Input:} A corrupted sample $\bar S_{\mathrm{inp}}$, a reference sample $S_{\mathrm{ref}}$, accuracy $\varepsilon$, failure probability $\delta$, degree $L$, coefficient bound $B$.

\vspace{0.8em}

\begin{enumerate}

    \item For sufficiently large $C_D$, set
    $\tau := \max\left( 1,\, C_D\log\left(\frac{8e^2d\bigl(|\bar S_{\mathrm{inp}}|+|S_{\mathrm{ref}}|\bigr)}{\delta}\right)\right)$ and truncate $\bar S_{\mathrm{trunc}}:=\{(x,y)\in\bar S_{\mathrm{inp}}:\|x\|_\infty\le\tau\}$.
    
    \item Run iterative polynomial filtering (\Cref{thm:ipf-filtering}) on $S_{\mathrm{trunc}}$ with reference sample $S_{\mathrm{ref}}$; let $\bar S_{\mathrm{filt}}\subseteq\bar S_{\mathrm{trunc}}$ be the retained sample.

    \item Set $T:=\left\lceil\frac{1024B^2 G_2^2} {\varepsilon^2}\right\rceil$ with $G_2:=\sqrt N\,\tau^L$.
    
    \item Initialize $q_{w_1} = 0$.
    
    \item Repeat for $t=1,\ldots,T$:
    \begin{enumerate}
        \item Set $m_t(x)=q_{w_t}(x)$ and $h_t(x)=\sign(q_{w_t}(x))$.   
        
        \item Form $g_t \in \R^N$ from Eq.~\eqref{def:g}, so that
        $$
        U_t(q_w)
        =
        \widehat{\err}_{S_{\mathrm{ref}}}(h_t,q_w)
        -
        \eta^{\mathrm{poly}}
        \left(
        q_w,m_t;
        \bar S_{\mathrm{filt}},S_{\mathrm{ref}} 
        \right) = C_t + \langle w, g_t \rangle
        $$
        \item Update over $K$ from Eq.~\eqref{eq:feasible-set-of-coeff}
        $$
        w_{t+1} = \Pi_{K}
        \bigl(w_t+\alpha g_t\bigr)\quad\quad\text{where}\quad \alpha:=\frac{2B}{G_2\sqrt T}.
        $$
    \end{enumerate}

    \item Output $h(x)\sim\operatorname{Rad}(\bar h_T(x))$ with $\bar h_T(x)=T^{-1}\sum_{t=1}^T h_t(x)$.
\end{enumerate}

\end{minipage}
}
\caption{Algorithm for $\eta+\varepsilon$ learning with polynomials.}
\label{fig:polynomial-algo}
\end{figure}

Every quantity in the algorithm in \Cref{fig:polynomial-algo}, in particular the iterates $q_t,m_t,h_t$, is computed from the corrupted sample $\bar S_{\mathrm{inp}}$ and the reference sample $S_{\mathrm{ref}}$. In contrast, the clean sample $\bar S_{\mathrm{cln}}$, the target concept $c^\star$, the sandwiching polynomials $p_{\mathrm{down}}^\star,p_{\mathrm{up}}^\star$, and their midpoint polynomial $q^\star$ are used only in the analysis and are unavailable to the learner.

\subsection{No-regret bound}\label{sec:no-regret}

The online optimization procedure in \Cref{fig:polynomial-algo} is an instance of projected gradient ascent. We recall its no-regret guarantee~\eqref{eq:zink-no-regret} from \cite{Zin03}. This guarantee allows us to bound the right-hand side of Eq.~\eqref{eq:loss-bound}. In this subsection, we verify the conditions required for this guarantee.

\begin{theorem}[Projected gradient ascent, {\cite[Theorem~1]{Zin03}}]
\label{thm:zinkevich-projected-gradient}
Let $K\subseteq\R^N$ be a compact convex set with diameter $D_2=\sup_{u,w\in K}\|u-w\|_2$. Let $T \in \mathbb N$, and let $(g_t)_{t=1}^T$ be vectors satisfying $\|g_t\|_2\le G_2$ for every $t\in[T]$ and some $G_2>0$. Run projected gradient ascent update with learning rate $\alpha:=D_2/(G_2\sqrt T)$ and some $w_1 \in K$
$$
w_{t+1}=\Pi_K(w_t+\alpha g_t).
$$
Then, for every $w\in K$, the following no-regret bound holds:
\begin{equation}\label{eq:zink-no-regret}
\sum_{t=1}^T\langle w,g_t\rangle
\le
\sum_{t=1}^T\langle w_t,g_t\rangle
+
D_2G_2\sqrt T.
\end{equation}
\end{theorem}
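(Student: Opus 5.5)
This is the standard Zinkevich online gradient descent analysis, run in the ascent direction, and I would prove it with a potential argument on the squared distance to an arbitrary fixed comparator $w\in K$. The plan is to track $\Phi_t:=\|w_t-w\|_2^2$. Since $w\in K$ and $\Pi_K$ is the Euclidean projection onto a closed convex set, projection is nonexpansive toward points of $K$: for every $z\in\R^N$, $\|\Pi_K(z)-w\|_2\le\|z-w\|_2$. This follows from the variational inequality $\langle z-\Pi_K(z),u-\Pi_K(z)\rangle\le 0$ for all $u\in K$. Applying it with $z=w_t+\alpha g_t$ gives
\[
\|w_{t+1}-w\|_2^2\le\|w_t-w\|_2^2+2\alpha\langle g_t,w_t-w\rangle+\alpha^2\|g_t\|_2^2 .
\]

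Rearranging yields the per-round inequality
\[
\langle g_t,w-w_t\rangle\le\frac{\|w_t-w\|_2^2-\|w_{t+1}-w\|_2^2}{2\alpha}+\frac{\alpha}{2}\|g_t\|_2^2 .
\]
I then sum this over $t=1,\dots,T$. The distance terms telescope to at most $\|w_1-w\|_2^2/(2\alpha)\le D_2^2/(2\alpha)$, using $w_1,w\in K$ and dropping the nonpositive term $-\|w_{T+1}-w\|_2^2$. The gradient terms contribute at most $\alpha TG_2^2/2$. Together,
\[
\sum_{t=1}^T\langle w,g_t\rangle-\sum_{t=1}^T\langle w_t,g_t\rangle\le\frac{D_2^2}{2\alpha}+\frac{\alpha TG_2^2}{2}.
\]

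Finally I substitute $\alpha=D_2/(G_2\sqrt T)$, which makes each of the two terms equal to $D_2G_2\sqrt T/2$. Their sum is $D_2G_2\sqrt T$, which is exactly \eqref{eq:zink-no-regret}.

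The only step that needs care is the nonexpansiveness of the projection; the rest is algebra. I would establish that step from the first-order optimality condition of $\operatorname*{argmin}_{u\in K}\|u-z\|_2^2$, which exists and is unique because $K$ is compact and convex. That condition gives $\|z-w\|^2=\|z-\Pi_K z\|^2+\|\Pi_K z-w\|^2-2\langle z-\Pi_K z,w-\Pi_K z\rangle\ge\|\Pi_K z-w\|^2$. No linearity of the rewards is needed beyond the fact that the regret is stated directly in terms of the inner products $\langle\cdot,g_t\rangle$.
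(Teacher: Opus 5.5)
Your proof is correct: you use nonexpansiveness of $\Pi_K$ toward the comparator, telescope $\|w_t-w\|_2^2$, and balance $D_2^2/(2\alpha)$ against $\alpha TG_2^2/2$ to get exactly $D_2G_2\sqrt T$. This is the standard potential argument behind \cite[Theorem~1]{Zin03}, which the paper cites rather than reproves. The one degenerate case is $D_2=0$: there $\alpha=0$ and you cannot divide by $2\alpha$, but $K$ is then a single point, so $w_t=w$ for all $t$ and the bound holds trivially.
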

For the feasible set of polynomials defined in Eq.~\eqref{eq:feasible-set}, the corresponding feasible set of coefficients is
\begin{equation}\label{eq:feasible-set-of-coeff}
K
:=
\left\{
w\in\R^N:q_w\in\mathcal Q
\right\}
=
\left\{
w\in\R^N:\|w\|_1\le B
\right\}.
\end{equation}
\noindent The set $K$ is a nonempty, compact, and convex set in $\R^N$. Consequently, the Euclidean projection
$
\Pi_K(z)
=
\operatorname*{argmin}_{w\in K}
\|w-z\|_2^2
$
exists and is unique. Moreover, projection onto the $\ell_1$-ball can be computed efficiently by the standard $\ell_1$-ball projection algorithm.

It remains to show that the objective~\eqref{eq:loss-def} is affine in the polynomial coefficients $w$ and that its gradient has bounded norm.

\begin{lemma}[Linearization]\label{lemma:linearization}

For each $t \in [T]$, $m_t=q_t$ and $h_t=\sign(q_t)$ are fixed. There exist a constant $C_t\in\R$ and a vector $g_t\in\R^N$ such that, for every polynomial $q(x)=\langle w,x^{\otimes L}\rangle$, we have
$$
U_t(q) =\widehat{\err}_{S_{\mathrm{ref}}}(h_t,q) - \eta^{\mathrm{poly}}\left(q,m_t;\bar S_{\mathrm{filt}},S_{\mathrm{ref}}\right) =C_t+\langle w,g_t\rangle,
$$
where
\begin{equation}\label{def:g}
\begin{aligned}
g_t
&=
\sum_{(x,y)\in\bar S_{\mathrm{filt}}}
\frac{1}{4|\bar S_{\mathrm{inp}}|} \frac{y\bigl(2+|m_t(x)|\bigr)-m_t(x)}{1+|m_t(x)|}
x^{\otimes L}
+\sum_{x \in S_{\mathrm{ref}}}\frac{1}{2|S_{\mathrm{ref}}|}\left(\frac{m_t(x)}{1+|m_t(x)|}-h_t(x)
\right)x^{\otimes L}.
\end{aligned}
\end{equation}

\end{lemma}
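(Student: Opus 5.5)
The plan is a direct expansion. Once $m_t$ and $h_t$ are fixed, every term of $U_t(q)$ is either independent of $q$ or linear in the pointwise values $q(x)$. This holds because $\mathsf F(q(x),m_t(x))$ and $\mathsf G(q(x),m_t(x))$ are affine in $q(x)$: their denominators $1+|m_t(x)|$ depend only on $m_t$. Since $q(x)=\langle w,x^{\otimes L}\rangle$ is linear in $w$, $U_t$ is affine in $w$. We read off $g_t$ by collecting, for each sample point, the coefficient multiplying $q(x)$, and then exchange the sum over points with the inner product.

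\textbf{Error term.} Using \Cref{def:linear-error}, we have $\widehat{\err}_{S_{\mathrm{ref}}}(h_t,q)=\frac12-\frac{1}{2|S_{\mathrm{ref}}|}\sum_{x\in S_{\mathrm{ref}}}h_t(x)q(x)$. Its $q$-coefficient at $x\in S_{\mathrm{ref}}$ is therefore $-h_t(x)/(2|S_{\mathrm{ref}}|)$.

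\textbf{Filtered-sample part of the certificate.} Abbreviate $m=m_t(x)$. For each $(x,y)\in\bar S_{\mathrm{filt}}$, the bracket in \Cref{def:polynomial-certificate} equals
\[
1-\frac{1}{1+|m|}+\frac{ym}{1+|m|}+q(x)\,\frac{m-y-y(1+|m|)}{1+|m|}.
\]
The $q$-coefficient here is $\frac{m-y(2+|m|)}{1+|m|}$. Because this part enters $U_t$ with a minus sign and a prefactor $\frac{1}{4|\bar S_{\mathrm{inp}}|}$, it contributes $\frac{1}{4|\bar S_{\mathrm{inp}}|}\frac{y(2+|m|)-m}{1+|m|}$. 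The first three terms go into $C_t$.

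\textbf{Reference-sample part of the certificate.} The term $-\frac{1}{2|S_{\mathrm{ref}}|}\sum_x\mathsf F$ enters $U_t$ with a plus sign. It contributes the coefficient $\frac{1}{2|S_{\mathrm{ref}}|}\frac{m_t(x)}{1+|m_t(x)|}$ on $q(x)$ and the constant $-\frac{1}{2|S_{\mathrm{ref}}|}\frac{1}{1+|m_t(x)|}$, which goes into $C_t$.

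\textbf{Assembly.} For $x\in S_{\mathrm{ref}}$, adding the error-term coefficient gives $\frac{1}{2|S_{\mathrm{ref}}|}\bigl(\frac{m_t(x)}{1+|m_t(x)|}-h_t(x)\bigr)$. Substituting $q(x)=\langle w,x^{\otimes L}\rangle$ and using linearity of the inner product, $\sum_x a_x q(x)=\langle w,\sum_x a_x x^{\otimes L}\rangle$, reproduces exactly \eqref{def:g}. All remaining terms depend only on $m_t$, $h_t$ and the samples, so they form $C_t$.

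There is no real obstacle; the only care needed is the sign bookkeeping in the $y$-term. Note that $-yq-yq(1+|m|)$ combine into $-y(2+|m|)q$ over the common denominator $1+|m|$.
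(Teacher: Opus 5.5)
Your proposal is correct and follows essentially the same route as the paper: it isolates the $q$-dependent parts of the error term, the filtered-sample certificate term, and the reference-sample certificate term (affine in $q(x)$ because the denominators $1+|m_t(x)|$ depend only on $m_t$), then combines the coefficients and substitutes $q(x)=\langle w,x^{\otimes L}\rangle$. The only cosmetic difference is that you expand the whole bracket at once, whereas the paper expands $\mathsf F$ and $\mathsf G-q$ separately; your coefficients and signs all check out.
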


\Cref{thm:zinkevich-projected-gradient} also requires a bounded
gradient norm. Hypercontractivity does not provide a uniform bound on
the gradient~\eqref{def:g}, since the adversary may insert points with
arbitrarily large coordinates. However, applied to degree-one
monomials, hypercontractivity implies subexponential coordinate tails
\cite[Proposition~2.7.1]{Ver18}. We therefore truncate points
outside a sufficiently large box and establish the gradient bound.

\begin{lemma}[Gradient bound]
\label{lemma:gradient-bound}
Let \(D\) be \(C_D\)-subexponential distribution
(\Cref{def:subexp}), and let \(S_{\mathrm{cln}}\) and
\(S_{\mathrm{ref}}\) be independent samples from \(D\). For \(\delta\in(0,1)\), define
\[
\tau
:=
\max\left(
1,\,
C_D\log\left(
\frac{e^2d(|S_{\mathrm{inp}}|+|S_{\mathrm{ref}}|)}{\delta}
\right)
\right).
\]
Truncate the corrupted sample to \([-\tau,\tau]^d\) and run
iterative polynomial filtering. Then, with probability at least \(1-\delta\), no clean or reference point is removed by truncation, and, for every $t \in [T]$,
$$
\|g_t\|_2 \le \sqrt{N}\,\tau^L = G_2.
$$
\end{lemma}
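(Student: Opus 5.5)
The proof splits into a probabilistic event and a deterministic norm bound. For the probabilistic part, I apply the subexponential tail of \Cref{def:subexp} with $w=e_j$ to each coordinate $j\in[d]$ of each point. A point $x\sim D$ has $\Pr[|x_j|>\tau]\le e^2\exp(-\tau/C_D)$. By the choice of $\tau$, this is at most $\delta/(d(|S_{\mathrm{inp}}|+|S_{\mathrm{ref}}|))$ whenever $\tau\ge C_D\log(\cdot)$; the case $\tau=1$ is dominated by the logarithmic term for large $C_D$, or else the same bound holds trivially.

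There are at most $|S_{\mathrm{inp}}|$ clean points that survive in $S_{\mathrm{inp}}$, since the clean sample has size $n=|S_{\mathrm{inp}}|$, and $|S_{\mathrm{ref}}|$ reference points, each with $d$ coordinates. A union bound over all of these gives that, with probability at least $1-\delta$, every clean point and every reference point lies in $[-\tau,\tau]^d$. On this event truncation removes no clean point, and every point of $S_{\mathrm{ref}}$ satisfies $\|x\|_\infty\le\tau$. Every point of $\bar S_{\mathrm{filt}}\subseteq\bar S_{\mathrm{trunc}}$ satisfies this by construction.

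On this event, I bound $g_t$ from \eqref{def:g} deterministically. Since $\tau\ge1$ and $\|x\|_\infty\le\tau$, every monomial of degree at most $L$ satisfies $|x^\zeta|\le\tau^{\|\zeta\|_1}\le\tau^L$, so $\|x^{\otimes L}\|_2\le\sqrt N\,\tau^L$ for every point appearing in either sum.

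Next I bound the scalar weights. For the filtered-sample term, $|y(2+|m|)-m|\le 2+2|m|$ because $|y|=1$, so $\bigl|\frac{y(2+|m|)-m}{1+|m|}\bigr|\le2$. Each coefficient is then at most $\frac{2}{4|\bar S_{\mathrm{inp}}|}$, and since $|\bar S_{\mathrm{filt}}|\le|\bar S_{\mathrm{inp}}|$ the total weight of the first sum is at most $1/2$. For the reference term, $\frac{|m|}{1+|m|}<1$ and $|h_t|\le1$, so each coefficient is at most $\frac{2}{2|S_{\mathrm{ref}}|}$, and the total weight is at most $1$.

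Applying the triangle inequality then gives $\|g_t\|_2\le\frac32\sqrt N\tau^L$. This is off by a constant factor from the claimed $\sqrt N\,\tau^L$, and closing that gap is the main obstacle. To remove the factor I would sharpen the reference-term weight by using that $h_t=\sign(m_t)$. Then $\frac{m}{1+|m|}-\sign(m)=-\frac{\sign(m)}{1+|m|}$, whose absolute value is at most $1$ but whose sum with the first term should be controlled more carefully. If this does not yield exactly $1$, I would absorb the constant into $G_2$, which only changes $T$ by a constant and is harmless downstream. Note that the step sizes and $T$ in \Cref{fig:polynomial-algo} already carry slack, since $1024B^2G_2^2/\varepsilon^2$ uses a generous constant.

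Finally, since the event is independent of $t$, the bound holds simultaneously for all $t\in[T]$.
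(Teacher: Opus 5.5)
Your outline is the paper's argument: a coordinatewise subexponential tail, a union bound over the $d(n+|S_{\mathrm{ref}}|)$ coordinates, $\|x^{\otimes L}\|_2\le\sqrt N\tau^L$, and the triangle inequality on the two sums. As written, however, it only proves $\|g_t\|_2\le\frac32\sqrt N\tau^L$, not the stated bound. The missing step is the identity you already wrote down, and it needs no joint control of the two sums. Since $h_t=\sign(m_t)$, i.e.\ $m_t(x)=h_t(x)|m_t(x)|$, you have $\bigl|\frac{m_t(x)}{1+|m_t(x)|}-h_t(x)\bigr|=\frac{1}{1+|m_t(x)|}\le1$. So each reference coefficient is at most $\frac{1}{2|S_{\mathrm{ref}}|}$ rather than $\frac{2}{2|S_{\mathrm{ref}}|}$, and the reference sum contributes at most $\frac12\sqrt N\tau^L$. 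The filtered sum has coefficients at most $\frac{2}{4n}=\frac{1}{2n}$ and $|S_{\mathrm{filt}}|\le n$ terms. Together this gives $\bigl(\frac{|S_{\mathrm{filt}}|}{2n}+\frac12\bigr)\sqrt N\tau^L\le\sqrt N\tau^L$, which is exactly the paper's computation.

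Your fallback of redefining $G_2$ would work downstream (it multiplies $T$ by $9/4$), but it proves a different statement from the one asked. Your side remark that the current $T$ already carries slack is also not right. With diameter $2B$, $T\ge 1024B^2G_2^2/\varepsilon^2$ gives a regret term $2BG_2/\sqrt T\le\varepsilon/16$, with equality possible. That $\varepsilon/16$ is exactly the constant the error-guarantee lemma uses, so there is no room for an extra factor there.

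Two minor points on the probabilistic part:
\begin{itemize}
\item By definition of the maximum, $\tau\ge C_D\log\bigl(e^2d(|S_{\mathrm{inp}}|+|S_{\mathrm{ref}}|)/\delta\bigr)$ always holds, so no case analysis on $\tau=1$ is needed.
\item The union bound should range over all $n$ clean points, not only those the adversary leaves in $S_{\mathrm{inp}}$, since that subset is chosen after seeing the sample. The count is the same, so your bound is unaffected.
\end{itemize}
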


We prove \Cref{lemma:linearization} in subsection~\ref{sec:linearization} and \Cref{lemma:gradient-bound} in subsection~\ref{sec:gradient-bound}.

\subsection{Learning guarantee}\label{sec:guarantee}

In this subsection, we show that a hypothesis $h$ produced by the algorithm in~\Cref{fig:polynomial-algo} achieves the desired error bound. 

\begin{theorem}[Robust learning with low-degree polynomials, formal version of \Cref{thm:main-guarantee}]\label{thm:main-guarantee-restated}
Let $D$ be $A$-hypercontractive distribution and let $\mathcal C$ have
$\varepsilon^2/256$-sandwiching degree at most $L$, with coefficient norm at most $B$. For $\varepsilon,\delta\in(0,1)$,
suppose $n \ge
\frac{(C_\star A(d+1))^{2L}}{\varepsilon^5}
\log\frac{16}{\delta}$ and $|S_{\mathrm{ref}}| \ge
\frac{(C_\star A(d+1))^{2L}}{\varepsilon^5}
\left(\log\frac{16}{\delta}\right)^{4L+1}$
for a sufficiently large universal constant $C_\star$.
Under $\eta$-bounded contamination, the algorithm runs in time
$\operatorname{poly}(|\bar S_{\mathrm{inp}}|,|S_{\mathrm{ref}}|,(dL)^L,B)$
and outputs a randomized hypothesis $h$ such that,
with probability at least $1-\delta$,
$$
\err_D(h,c^\star)\le\eta+\varepsilon.
$$
\end{theorem}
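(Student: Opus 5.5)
We follow the online-optimization template of the technical overview, instantiated with polynomials.

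\textbf{Regret step.} First apply \Cref{lemma:gradient-bound} and \Cref{thm:ipf-filtering}, each with failure probability a constant fraction of $\delta$. With high probability, truncation removes no clean or reference point, every $\|g_t\|_2\le G_2$, and the filtered set $\bar S_{\mathrm{filt}}$ has two properties: it loses at most $5\varepsilon n/32$ clean points, and it satisfies the polynomial preservation bound. By \Cref{lemma:linearization}, $U_t(q_w)=C_t+\langle w,g_t\rangle$. The set $K$ is the $\ell_1$-ball of radius $B$, whose $\ell_2$-diameter is at most $2B$. So \Cref{thm:zinkevich-projected-gradient} gives, for every $q\in\mathcal Q$,
\[
\frac1T\sum_t U_t(q)\le \frac1T\sum_t U_t(q_t)+\frac{2BG_2}{\sqrt T}\le \frac1T\sum_t U_t(q_t)+\frac{\varepsilon}{16}
\]
by the choice of $T$.

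\textbf{Per-round bound.} Next I would show $U_t(q_t)\le O(\varepsilon)$ in every round. This is the polynomial analogue of \Cref{lemma:dual-small-loss}. It should be a pointwise algebraic identity, with $m=q$ and $h=\sign(q)$. Write $a=q(x)$. Then:
\begin{itemize}
\item $\mathsf F(a,a)=(a^2-1)/(1+|a|)=|a|-1$ and $\mathsf G(a,a)=0$.
\item The reference term of $\widehat{\err}_{S_{\mathrm{ref}}}(\sign q,q)$ is $(1-|a|)/2$, while $-\tfrac12\cdot(-\mathsf F)$ contributes $(|a|-1)/2$ to $U_t$. These cancel exactly.
\item What remains is $-\frac{1}{4|\bar S_{\mathrm{inp}}|}\sum_{\bar S_{\mathrm{filt}}}(|a|-ya)$, which is $\le0$ since $ya\le|a|$.
\end{itemize}
So $U_t(q_t)\le0$, and the argument is robust to the exact constants. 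The functions $\mathsf F,\mathsf G$ were designed to make this work, and I would double-check the identity against \eqref{def:g}.

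\textbf{Certificate step.} Combining the two steps and applying them to $q^\star$, the midpoint polynomial of $c^\star$, gives
\[
\widehat{\err}_{S_{\mathrm{ref}}}(\bar h_T,q^\star)\le \frac1T\sum_t \eta^{\mathrm{poly}}(q^\star,m_t;\bar S_{\mathrm{filt}},S_{\mathrm{ref}})+\frac{\varepsilon}{16}.
\]
Two things remain.
\begin{enumerate}
\item Convert the left side to $\err_D(h,c^\star)$. The sandwiching error is $\E_D|q^\star-c^\star|\le\varepsilon^2/512$ and $|\bar h_T|\le1$, so replacing $q^\star$ by $c^\star$ costs $O(\varepsilon^2)$. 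Uniform convergence of $\widehat\err_{S_{\mathrm{ref}}}$ for the class of averages of degree-$L$ PTFs, together with concentration of $|q^\star-c^\star|$ over $S_{\mathrm{ref}}$ (via hypercontractivity), handles the passage from $S_{\mathrm{ref}}$ to $D$.
\item Show that for \emph{every} $m\in\mathcal M$, $\eta^{\mathrm{poly}}(q^\star,m)\le\eta+O(\varepsilon)$. This is the main obstacle.
\end{enumerate}

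\textbf{The main obstacle.} My plan is to substitute $c^\star$ for $q^\star$ and then split $\bar S_{\mathrm{filt}}$ into clean retained points and adversarial points. With $q=c^\star$, the summand at a clean point $(x,c^\star(x))$ equals
\[
1+\mathsf F+c^\star\mathsf G-1=\frac{c^\star m-1+c^\star m-1}{1+|m|}=2\mathsf F(c^\star,m).
\]
So the clean contribution is $\frac{1}{2|\bar S_{\mathrm{inp}}|}\sum_{\mathrm{clean}}\mathsf F(c^\star,m)$. This cancels the reference term $-\frac12\E_{S_{\mathrm{ref}}}\mathsf F$ up to three errors:
\begin{itemize}
\item the missing at most $\eta n+5\varepsilon n/32$ points, each contributing at most $|\mathsf F|/2\le1/2$, since $|\mathsf F(\pm1,m)|\le1$;
\item concentration of $\mathsf F(c^\star,m)$ over clean versus reference samples, uniformly over $m\in\mathcal M$. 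This needs a VC or Rademacher bound for these bounded functions of degree-$L$ polynomials, which is where the $(A(d+1))^{O(L)}/\varepsilon^5$ sample sizes enter;
\item the fact that removed clean points have $\mathsf F\ge-1$.
\end{itemize}
Keeping the sign bookkeeping tight should give a contribution of $\approx\tfrac12\cdot$(fraction missing). Each adversarial point has summand at most $1+|\mathsf F|+|\mathsf G|+|c^\star|\cdot$, which I expect to be bounded by $4$ when $q=c^\star$, giving $\le\eta$ in total after the $1/4$ normalization.

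The substitution error from $q^\star$ to $c^\star$ must be controlled on $\bar S_{\mathrm{filt}}$, including adversarial points. The summand is $1$-Lipschitz-ish in $q$ (coefficients bounded by $2$), so it suffices to bound $\frac{1}{|\bar S_{\mathrm{inp}}|}\sum_{\bar S_{\mathrm{filt}}}|q^\star-c^\star|$. This is where sandwiching is essential:
\[
|q^\star-c^\star|\le (p_{\mathrm{up}}^\star-p_{\mathrm{down}}^\star)/2,
\]
which is a nonnegative degree-$L$ polynomial with $D$-mean at most $\varepsilon^2/512$. Polynomial preservation in \Cref{thm:ipf-filtering} then bounds the sum by $\varepsilon/16$. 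Likewise, on $S_{\mathrm{ref}}$ the error is controlled by concentration.

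Summing the constants gives $\eta+\varepsilon$. The running time follows from $T=\mathrm{poly}(B,N,\tau^L,1/\varepsilon)$ iterations, each costing time polynomial in the sample sizes and $N$, plus the filtering time.
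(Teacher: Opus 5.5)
The regret step, the per-round bound $U_t(q_t)\le 0$, the passage from $\widehat{\err}_{S_{\mathrm{ref}}}(\bar h_T,q^\star)$ to $\err_D(h,c^\star)$, and the substitution of $c^\star$ for $q^\star$ are all sound. The substitution uses the $\tfrac12$-Lipschitz dependence of each filtered summand on $q$, together with polynomial preservation on $\bar S_{\mathrm{filt}}$ and concentration on $S_{\mathrm{ref}}$. This is essentially the paper's route: its Case 1 and Case 3 estimates in the certificate lemma are this substitution done pointwise.

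The gap is in the final accounting of the certificate. You bound each retained adversarial summand by the triangle inequality, $1+|\mathsf F|+|\mathsf G|+|c^\star|\le 4$, which gives $\mathsf H\le 1$ per point. The inserted points then contribute up to $\eta$. You have already charged the replaced clean points up to $\tfrac12$ each, which is another $\eta/2$. Your bounds therefore total $\tfrac32\eta+O(\varepsilon)$, so ``summing the constants gives $\eta+\varepsilon$'' is false. This factor of $2$ is not cosmetic: $\tfrac32\eta$ is exactly the older guarantee the theorem improves on.

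The fix is to evaluate the summand exactly at $q=c^\star$. Since $c^\star(x)\,\mathsf G(c^\star(x),m(x))=\frac{c^\star(x)m(x)-1}{1+|m(x)|}=\mathsf F(c^\star(x),m(x))$, there are two cases:
\begin{enumerate}
\item A retained point with $y=-c^\star(x)$ has summand $1+\mathsf F-c^\star\mathsf G+1=2$.
\item A retained point with $y=c^\star(x)$ has summand $2\mathsf F\le 2$.
\end{enumerate}
So after the $\tfrac14$ normalization every retained corrupted point has $\mathsf H\le\tfrac12$. The triangle inequality destroys precisely the cancellation between $\mathsf F$ and $y\,\mathsf G$ that makes the certificate charge each substitution only once.

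Now use $|S_{\mathrm{cln}}\setminus S_{\mathrm{inp}}|=|S_{\mathrm{inp}}\setminus S_{\mathrm{cln}}|\le\eta n$. Each substitution costs at most $\tfrac12$ for the removed clean point plus $\tfrac12$ for the inserted one, for a total of at most $\eta$. The filtered-out clean points add $5\varepsilon/64$, and the remaining errors are $O(\varepsilon)$ as you describe. This gives $\eta+O(\varepsilon)$, matching the paper's $\eta+13\varepsilon/64$ up to constants.

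You also left the uniform concentration of $\mathsf F(c^\star,m)$ over $m\in\mathcal M$ unspecified. Note that $|m|$ is unbounded on the sample. The paper handles this by writing $\mathsf F(c^\star(x),m(x))=\int_0^\infty \sign(c^\star(x)m(x)-z)\,(1+z)^{-2}\,dz$. This places these functions in the convex hull of a threshold class of VC dimension $N+1$.
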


We use two lemmas to prove the main guarantee. \Cref{lemma:error-guarantee} bounds the loss of the output hypothesis.

\begin{lemma}[Error guarantee]
\label{lemma:error-guarantee}

Let $\bar h_T$ be the hypothesis produced by the algorithm in \Cref{fig:polynomial-algo}. Suppose that $\|g_t\|_2\le G_2$ for every $t\in[T]$. Then
$$
L(\bar h_T) = \sup_{q \in \mathcal Q}(
\widehat{\err}_{S_{\mathrm{ref}}}(\bar h_T,q)
-
\sup_{m\in\mathcal M}
\eta^{\mathrm{poly}}
(
q,m;
\bar S_{\mathrm{filt}},S_{\mathrm{ref}}
))
\le
\frac{\varepsilon}{16}.
$$

\end{lemma}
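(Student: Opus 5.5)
The plan is to combine the reduction in Eq.~\eqref{eq:loss-bound} with the no-regret guarantee of \Cref{thm:zinkevich-projected-gradient}, applied to the affine rewards of \Cref{lemma:linearization}. What makes this work is a per-round inequality $U_t(q_t)\le 0$, which plays the role of \Cref{lemma:dual-small-loss}: the choice $m_t=q_t$, $h_t=\sign(q_t)$ makes the certificate dominate the error at the current iterate.

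\textbf{Step 1: reduce to regret.} By Eq.~\eqref{eq:loss-bound},
\[
L(\bar h_T)\le \sup_{q\in\mathcal Q}\frac1T\sum_{t=1}^T U_t(q).
\]
By \Cref{lemma:linearization}, each $U_t(q_w)=C_t+\langle w,g_t\rangle$ with $C_t$ independent of $w$. Fix any $w\in K$ from Eq.~\eqref{eq:feasible-set-of-coeff}. Then \Cref{thm:zinkevich-projected-gradient} gives
\[
\sum_t U_t(q_w)\le \sum_t \bigl(C_t+\langle w_t,g_t\rangle\bigr)+D_2G_2\sqrt T=\sum_t U_t(q_{w_t})+D_2G_2\sqrt T.
\]
This step uses the hypothesis $\|g_t\|_2\le G_2$. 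The $\ell_1$ ball of radius $B$ has $\ell_2$-diameter $D_2\le 2B$, so the step size $\alpha=2B/(G_2\sqrt T)$ in \Cref{fig:polynomial-algo} is exactly the one in the theorem. With $T\ge 1024B^2G_2^2/\varepsilon^2$, the averaged regret satisfies
\[
\frac{2BG_2}{\sqrt T}\le\frac{2BG_2\,\varepsilon}{32BG_2}=\frac{\varepsilon}{16}.
\]

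\textbf{Step 2: show $U_t(q_t)\le 0$.} This is the only conceptual step. Write $q=q_t=m_t$, so $h_t=\sign(q)$ and $h_t(x)q(x)=|q(x)|$.
\begin{itemize}
\item The error term is $\widehat{\err}_{S_{\mathrm{ref}}}(h_t,q)=\frac1{|S_{\mathrm{ref}}|}\sum_{x\in S_{\mathrm{ref}}}\frac{1-|q(x)|}{2}$.
\item In the certificate, $\mathsf F(q,q)=\frac{q^2-1}{1+|q|}=|q|-1$ and $\mathsf G(q,q)=0$.
\item Hence each summand over $\bar S_{\mathrm{filt}}$ equals $1+(|q|-1)-yq=|q(x)|-yq(x)\ge 0$, using $y\in\{\pm1\}$.
\item The reference part equals $-\frac1{2|S_{\mathrm{ref}}|}\sum_{x}(|q(x)|-1)$, which is exactly the error term above.
\end{itemize}
Therefore
\[
U_t(q_t)=-\frac{1}{4|\bar S_{\mathrm{inp}}|}\sum_{(x,y)\in\bar S_{\mathrm{filt}}}\bigl(|q_t(x)|-yq_t(x)\bigr)\le 0.
\]

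\textbf{Step 3: combine.} Dividing Step~1 by $T$, taking the supremum over $w\in K$ (equivalently $q\in\mathcal Q$), and inserting Step~2 gives
\[
L(\bar h_T)\le 0+\frac{\varepsilon}{16}=\frac{\varepsilon}{16}.
\]

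The main point to get right is the algebra in Step~2. Beyond that, one must check that the constant $C_t$ cancels in the regret comparison, which holds because it is independent of $w$, and that the iterates $w_t$ remain in $K$. The projection guarantees this, and $w_1=0\in K$.
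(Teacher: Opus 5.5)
Your proposal is correct and follows the paper's proof essentially step for step. It uses the per-round inequality $U_t(q_t)\le 0$, which follows from $\mathsf F(q_t,q_t)=|q_t|-1$, $\mathsf G(q_t,q_t)=0$ and $h_tq_t=|q_t|$, and then Zinkevich's regret bound with diameter $2B$ and $T\ge 1024B^2G_2^2/\varepsilon^2$; the only difference is cosmetic: you go through Eq.~\eqref{eq:loss-bound} directly, whereas the paper first proves the bound for each fixed $q$ and then passes to $\sup_{m\in\mathcal M}$ using $m_t\in\mathcal M$. (One small precision: the $\ell_1$-ball of radius $B$ has $\ell_2$-diameter exactly $2B$, attained at $\pm Be_1$, and this is what makes the algorithm's step size coincide with the theorem's; an upper bound $D_2\le 2B$ alone would also suffice, since the regret analysis tolerates an overestimate of the diameter.)
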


\Cref{lemma:certificate-guarantee} bounds the corruption certificate for the target midpoint polynomial $q^\star$.

\begin{lemma}[Certificate guarantee]
\label{lemma:certificate-guarantee}

Let $D$ be an $A$-hypercontractive distribution, and let $q^\star$ be
the midpoint of $\varepsilon^2/256$-sandwiching polynomials for
$c^\star$ of degree at most $L$ and coefficient norm at most $B$.
There is a sufficiently large universal constant $C_\star$ such that,
if
$
n \ge \frac{(C_\star A(d+1))^{2L}}{\varepsilon^5}\log\frac{16}{\delta}
$
and
$
|S_{\mathrm{ref}}| \ge \frac{(C_\star A(d+1))^{2L}}{\varepsilon^5}
\left(\log\frac{16}{\delta}\right)^{4L+1},
$
then, under $\eta$-bounded contamination, with probability at least
$1-\delta$,
$$
\sup_{m\in\mathcal M}
\eta^{\mathrm{poly}}
\left(
q^\star,m;
\bar S_{\mathrm{filt}},S_{\mathrm{ref}}
\right)
\le
\eta+\frac{13\varepsilon}{64}.
$$

\end{lemma}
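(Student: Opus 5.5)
The plan is to reduce to the case where $q^\star$ is replaced by the true concept $c^\star$. I would then bound the certificate pointwise, uniformly over $m$, and charge the remaining slack to the filtering guarantees (\Cref{thm:ipf-filtering}), the sandwiching error, and uniform convergence between $S_{\mathrm{cln}}$ and $S_{\mathrm{ref}}$. Throughout, $n=|\bar S_{\mathrm{inp}}|=|\bar S_{\mathrm{cln}}|$. Write the certificate as $\Phi_m(q)$; the steps are carried out in the order below.

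\textbf{Step 1 (replace $q^\star$ by $c^\star$).} For fixed $m$, $\partial_q \mathsf F=m/(1+|m|)$ and $\partial_q(\mathsf G-q)=-1/(1+|m|)-1$. Hence the summand $1+\mathsf F+y(\mathsf G-q)$ is $3$-Lipschitz in $q$, and the reference summand $\mathsf F$ is $1$-Lipschitz. This holds uniformly in $m$. Using $|q^\star-c^\star|\le (p^\star_{\mathrm{up}}-p^\star_{\mathrm{down}})/2=:p$, a nonnegative degree-$L$ polynomial with $\E_D p\le\varepsilon^2/512$, we get for every $m$
\[
\Phi_m(q^\star)\le \Phi_m(c^\star)+\frac{3}{4n}\sum_{x\in S_{\mathrm{filt}}}p(x)+\frac{1}{2|S_{\mathrm{ref}}|}\sum_{x\in S_{\mathrm{ref}}}p(x).
\]
The filtered sum is at most $\tfrac34\cdot\varepsilon/16$ by the polynomial-preservation part of \Cref{thm:ipf-filtering}. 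This is the one place where sandwiching, rather than mere approximation, is essential: it gives a pointwise dominating polynomial even on adversarial points. The reference average is $O(\varepsilon^2)$ by Chebyshev together with hypercontractivity, which bounds $\E p^2\le (2A)^{2L}(\E p)^2$, given the stated $|S_{\mathrm{ref}}|$. Both bounds are independent of $m$.

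\textbf{Step 2 (pointwise accounting with $q=c^\star$).} I split $\Phi_m(c^\star)=\bigl[\tfrac{1}{4n}\sum_{\bar S_{\mathrm{filt}}}(\cdot)-\tfrac{1}{2n}\sum_{x\in S_{\mathrm{cln}}}\mathsf F\bigr]+\bigl[\tfrac1{2n}\sum_{S_{\mathrm{cln}}}\mathsf F-\tfrac{1}{2|S_{\mathrm{ref}}|}\sum_{S_{\mathrm{ref}}}\mathsf F\bigr]$, where $\mathsf F=\mathsf F(c^\star(x),m(x))$. A direct computation with $q=c=c^\star(x)\in\{\pm1\}$ gives the following.
\begin{itemize}
\item If $y=c$, the summand equals $2\mathsf F$.
\item If $y=-c$, the summand equals $2$.
\item In all cases $\mathsf F\in[-1,1]$.
\end{itemize}
Hence every retained clean point contributes exactly $\mathsf F/2$ and cancels its own term in the clean sum. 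Each inserted corrupted point contributes at most $1/2$; there are at most $\eta n$ of these. Each clean point absent from $S_{\mathrm{filt}}$ leaves an uncancelled $-\mathsf F/2\le 1/2$. These are the at most $\eta n$ replaced points together with the at most $\tfrac{5\varepsilon}{32}n$ filtered ones (clean-point preservation). Truncation removes no clean point, by \Cref{lemma:gradient-bound}. So the first bracket is at most $\eta/2+\eta/2+5\varepsilon/64$, uniformly in $m$.

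\textbf{Step 3 (uniform convergence; the main obstacle).} It remains to show that $\sup_{m\in\mathcal M}\bigl|\tfrac1n\sum_{S_{\mathrm{cln}}}\mathsf F(c^\star,m)-\tfrac1{|S_{\mathrm{ref}}|}\sum_{S_{\mathrm{ref}}}\mathsf F(c^\star,m)\bigr|$ is $O(\varepsilon)$ with a small constant. Since $\mathsf F\in[-1,1]$, I would compare each empirical average to $\E_D\mathsf F$. I would bound the complexity of $\{x\mapsto \mathsf F(c^\star(x),m(x)):m\in\mathcal M\}$ as follows. Split on the sign of $c^\star(x)$; on each part $\mathsf F$ is the fixed monotone-in-pieces map $u\mapsto(\pm u-1)/(1+|u|)$ applied to the degree-$L$ polynomial $m$. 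This gives pseudo-dimension $O(N)$, so standard uniform convergence needs $O(N/\varepsilon^2\cdot\log(1/\delta))$ samples, which the hypothesis $n,|S_{\mathrm{ref}}|\ge (C_\star A(d+1))^{2L}/\varepsilon^5$ covers. Choosing $C_\star$ large gives deviation at most, say, $\varepsilon/16$. Summing the three steps' constants ($5/64$ from Step 2, about $3/64$ plus $o(\varepsilon)$ from Step 1, and the Step 3 slack, all with failure probabilities union-bounded to $\delta$) gives $\eta+13\varepsilon/64$. The delicate part is making the Step 3 deviation and the Step 1 reference term fit into the remaining $\approx 5\varepsilon/64$ budget.
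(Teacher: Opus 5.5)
Apart from one concentration step, your argument is the paper's, reorganized. The paper keeps $q^\star$ inside $\mathsf H$ and runs the four clean/corrupted $\times$ retained/removed cases directly. Case~1 gives $\mathsf H-\frac12\mathsf F(c^\star,m)\le\frac12|q^\star-c^\star|$ and Case~3 gives $\mathsf H\le\frac12+\frac12|q^\star-c^\star|$; the paper swaps $q^\star$ for $c^\star$ only in the reference term. Those two computations are exactly your Lipschitz step, with the sharper constant $2$: the coefficient of $q$ in the summand is $\frac{m-y(2+|m|)}{1+|m|}$, whose absolute value is at most $2$. Your ordering (swap first, then do the exact accounting at $q=c^\star$, where the summand is $2\mathsf F$ or $2$ just as in the proof of \Cref{lemma:erm-certificate-validity}) is more modular. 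It exhibits the polynomial certificate as the concept certificate plus a sandwiching error charged to polynomial preservation. Your Step~3 is Part~2 of \Cref{lemma:generalization} in different form: $\mathsf F(c^\star(x),m(x))$ is a nondecreasing function of $c^\star(x)m(x)$. The paper writes it as $\int_0^\infty\sign(c^\star(x)m(x)-z)\,\frac{dz}{(1+z)^2}$, a convex combination of an $(N+1)$-parameter threshold class, which avoids the log factor of a pseudo-dimension bound.

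The step that fails as written is the reference-sample term $\frac{1}{2|S_{\mathrm{ref}}|}\sum_{x\in S_{\mathrm{ref}}}p(x)$ in Step~1. Chebyshev bounds the failure probability only by about $(2A)^{2L}(\E p)^2/(|S_{\mathrm{ref}}|\,t^2)$, so reaching $\delta$ needs $|S_{\mathrm{ref}}|$ of order $1/\delta$. The hypothesis grows only like $(\log(16/\delta))^{4L+1}$, so for fixed $\varepsilon,d,L$ and small $\delta$ your bound is not established. Because $p$ is unbounded, Hoeffding or Bernstein cannot substitute. You need a moment of order $\log(1/\delta)$:
\begin{enumerate}
\item apply \Cref{fact:marcinkiewicz-zygmund} with exponent $2\log(8/\delta)$;
\item use hypercontractivity with $r=2\log(8/\delta)$ to get $\bigl(\E p^{r}\bigr)^{1/r}\le(2A\log(8/\delta))^L\E p$;
\item finish with Markov.
\end{enumerate}
This is Part~3 of \Cref{lemma:generalization}, which yields $\frac{1}{|S_{\mathrm{ref}}|}\sum_{x\in S_{\mathrm{ref}}}p(x)\le\varepsilon/16$; running it at deviation $t=\E p$ even gives your $O(\varepsilon^2)$. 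The polylogarithmic factor in the hypothesis is sized for exactly this argument. With that repair your budget closes: $\eta+\frac{5\varepsilon}{64}+\frac{3\varepsilon}{64}+\frac{2\varepsilon}{64}+\frac{2\varepsilon}{64}\le\eta+\frac{13\varepsilon}{64}$. Here your Step~3 deviation $\varepsilon/16$ is taken for the difference of the clean and reference averages, as in Eq.~\eqref{eq:population-transfer-detector}.
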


We prove \Cref{lemma:error-guarantee} in subsection~\ref{sec:proof-of-error-guarantee} and \Cref{lemma:certificate-guarantee} in subsection~\ref{sec:proof-of-certificate-guarantee}. We also use standard generalization bounds in \Cref{lemma:generalization} from Appendix~\ref{app:generalization}. For now, we show how these results imply our main guarantee for learning with bounded contamination using low-degree polynomials. 

\begin{proof}[Proof of \Cref{thm:main-guarantee-restated} assuming
\Cref{lemma:error-guarantee} and \Cref{lemma:certificate-guarantee}]

Our algorithm produces a hypothesis $\bar h_T := T^{-1}\sum_{t=1}^T h_t$. For every $t\in[T]$, $h_t=\sign(q_t)\in\mathcal H_0$, and hence $\bar h_T = T^{-1}\sum_{t=1}^T h_t \in \operatorname{conv}(\mathcal H_0)$. Therefore, Eq.~\eqref{eq:population-transfer-error} in \Cref{lemma:generalization} applies to $\bar h_T$:
\begin{equation}\label{eq:step-1}
\err_D(\bar h_T,c^\star)
\le
\widehat{\err}_{S_{\mathrm{ref}}}
(\bar h_T,c^\star)
+
\frac{\varepsilon}{16}.
\end{equation}
Let $p_{\mathrm{down}}^\star,p_{\mathrm{up}}^\star$ be sandwiching polynomials for $c^\star$ and let $q^\star$ be their midpoint polynomial, defined for every $x\in X$ by
$$
q^\star(x) :=\frac{p_{\mathrm{down}}^\star(x)+p_{\mathrm{up}}^\star(x)}{2}.
$$ 
First, we show that replacing a concept $c^\star$ by the midpoint polynomial $q^\star$ costs at most $\varepsilon/32$:
\begin{align*}
\big|
\widehat{\err}_{S_{\mathrm{ref}}}(\bar h_T,c^\star)
-
\widehat{\err}_{S_{\mathrm{ref}}}(\bar h_T,q^\star)
\big|
&=
\left|
\frac1{2|S_{\mathrm{ref}}|}
\sum_{x\in S_{\mathrm{ref}}}
\bar h_T(x)\bigl(q^\star(x)-c^\star(x)\bigr)
\right|
\tag{\Cref{def:linear-error}}
\\
&\le
\frac1{2|S_{\mathrm{ref}}|}
\sum_{x\in S_{\mathrm{ref}}}
|q^\star(x)-c^\star(x)|
\tag{$|\bar h_T(x)|\le1$}
\\
&\le
\frac1{2|S_{\mathrm{ref}}|}
\sum_{x\in S_{\mathrm{ref}}}
\frac{
p_{\mathrm{up}}^\star(x)-p_{\mathrm{down}}^\star(x)
}{2}
\tag{triangle ineq.}
\\
&\le
\frac{\varepsilon}{32}.
\tag{Eq.~\eqref{eq:reference-width}}
\end{align*}
Therefore,
\begin{equation}\label{eq:step-2}
\widehat{\err}_{S_{\mathrm{ref}}}(\bar h_T,c^\star)
\le
\widehat{\err}_{S_{\mathrm{ref}}}(\bar h_T,q^\star) + \frac{\varepsilon}{32}.
\end{equation}
Next, this midpoint polynomial belongs to the set of feasible polynomials, $q^\star \in \mathcal Q$, because both $p_{\mathrm{down}}^\star$ and $p_{\mathrm{up}}^\star$ belong to $\mathcal Q$. Since $\mathcal Q$ is convex, their midpoint $q^\star$ also belongs to $\mathcal Q$.

The chosen sample sizes allow us to apply \Cref{lemma:gradient-bound}, \Cref{thm:ipf-filtering}, and \Cref{lemma:generalization} for a sufficiently large universal constant $C_\star$.
After rescaling $\delta$ by a constant, a union bound shows that their conclusions hold simultaneously with probability at least $1-\delta$. Assume this throughout the remainder of the proof.

By the gradient bound (\Cref{lemma:gradient-bound}),
$\|g_t\|_2\le G_2$ for every $t\in[T]$, so
\Cref{lemma:error-guarantee} applies at $q=q^\star$. Moreover, truncation removes no uncorrupted input point. Therefore, the clean-point and polynomial-preservation guarantees of iterative polynomial filtering (\Cref{thm:ipf-filtering}), together with generalization bounds (\Cref{lemma:generalization}), allow us to apply \Cref{lemma:certificate-guarantee}.

Therefore,
\begin{align*}
\err_D(\bar h_T,c^\star)
&\le
\widehat{\err}_{S_{\mathrm{ref}}}
(\bar h_T,c^\star)
+
\frac{\varepsilon}{16}
\tag{Eq.~\eqref{eq:step-1}}
\\
&\le
\widehat{\err}_{S_{\mathrm{ref}}}
(\bar h_T,q^\star)
+
\frac{\varepsilon}{32}
+
\frac{\varepsilon}{16}
\tag{Eq.~\eqref{eq:step-2}}
\\
&\le
\sup_{m\in\mathcal M}
\eta^{\mathrm{poly}}
\left(
q^\star,m;
\bar S_{\mathrm{filt}},S_{\mathrm{ref}}
\right)
+
\frac{5\varepsilon}{32}
\tag{\Cref{lemma:error-guarantee}}
\\
&\le
\eta
+
\frac{13\varepsilon}{64}
+
\frac{5\varepsilon}{32}
\tag{\Cref{lemma:certificate-guarantee}}
\\
&\le
\eta+\varepsilon.
\end{align*}
For the output hypothesis $h(x)\sim\operatorname{Rad}(\bar h_T(x))$,
$$
\err_D(h,c^\star)
=
\E_{x\sim D}
\left[
\frac{1-c^\star(x)\bar h_T(x)}{2}
\right]
=
\err_D(\bar h_T,c^\star) \le \eta + \varepsilon.
$$
Iterative polynomial filtering runs in time
$\operatorname{poly}(|\bar S_{\mathrm{inp}}|,|S_{\mathrm{ref}}|,(d+1)^L)$.
Each gradient and projection onto the $\ell_1$-ball can be computed in polynomial time. Finally,
$
T
\le
1+
\frac{1024B^2 G_2^2}{\varepsilon^2}$, which gives the stated running time.

\end{proof}

\section{Robust learning with access to an ERM oracle}\label{sec:learning-with-erm}

In this section, we present an algorithm that produces a hypothesis that achieves error at most $\eta+\varepsilon$ using a polynomial number of calls to an ERM oracle for $\mathcal C$. This resolves an open problem stated in \cite{Blanc26}. 

Throughout this section, for any function $f:X\to\R$, we use $f_S$
as shorthand for the vector of its values on $\bar S_{\mathrm{inp}}$ and $S_{\mathrm{ref}}$:
$$
f_S
:=
\left(
\bigl(f(x)\bigr)_{(x,y)\in\bar S_{\mathrm{inp}}},
\bigl(f(x)\bigr)_{x\in S_{\mathrm{ref}}}
\right)
\in
\R^{|\bar S_{\mathrm{inp}}|+|S_{\mathrm{ref}}|}.
$$
For any vectors $f_S,g_S\in\R^{|\bar S_{\mathrm{inp}}|+|S_{\mathrm{ref}}|}$, we equip this with the Euclidean inner product and norm
$$
\begin{aligned}
&\langle f_S,g_S\rangle
:=
\sum_{(x,y)\in\bar S_{\mathrm{inp}}}
(f_S)_{(x,y)}(g_S)_{(x,y)}
+
\sum_{x\in S_{\mathrm{ref}}}
(f_S)_x(g_S)_x,
\\
&\|f_S\|_2
:=
\sqrt{\langle f_S,f_S\rangle}.
\end{aligned}
$$
Our learner is efficient given access to an ERM oracle. For a labeled dataset, the oracle returns a concept in $\mathcal C$ that maximizes empirical agreement with its labels.

\begin{definition}[ERM oracle]
An ERM oracle for a concept class $\mathcal C$ takes as input a dataset $\bar S=\{(x_1,y_1),\ldots,(x_n,y_n)\}$ and returns any concept $c\in\mathcal C$ that maximizes
$
\sum_{(x,y) \in \bar S} c(x)y.
$
\end{definition}

We use the following reduction, which implements a weighted ERM oracle with one call to an ordinary ERM oracle.

\begin{fact}[Weighted ERM]
\label{fact:weighted-erm}
For any $\varepsilon,\delta>0$, there is an algorithm that takes as input a dataset $\bar S=\{(x_1,y_1),\ldots,(x_n,y_n)\}$ and a weight $w(x,y)\in[-b,b]$ for each $(x,y)\in\bar S$ such that it runs in time $\operatorname{poly}\left(|\bar S|, b,1/\varepsilon,\log(1/\delta)\right)$,
uses one call to an ERM oracle for $\mathcal C$, and, with probability at least $1-\delta$, outputs a concept $c\in\mathcal C$ such that
$$
\frac{1}{|\bar S|}\sum_{(x,y) \in \bar S} w(x,y) c(x)y
\ge
\max_{c'\in\mathcal C}
\left(
\frac{1}{|\bar S|}\sum_{(x,y) \in \bar S} w(x,y) c'(x)y
\right)
-
\varepsilon.
$$

The algorithm generates a dataset consisting of
$O\left((b^2/\varepsilon^2)(|\bar S|+\log(1/\delta))\right)$ independent samples from $\bar S$, where $(x,\sign(w(x,y))y)$ is sampled with probability $\frac{|w(x,y)|}{\sum_{(x',y') \in \bar S}|w(x',y')|}$. It then runs unweighted ERM on this synthetic dataset.
\end{fact}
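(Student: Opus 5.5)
The plan is a standard importance-sampling reduction, with uniform convergence obtained by a union bound over the finitely many behaviors of $\mathcal C$ on $\bar S$.

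\textbf{Setup.} Let $W:=\sum_{(x,y)\in\bar S}|w(x,y)|$. If $W=0$, the objective is identically zero, so any output (say, ERM on an empty or arbitrary dataset) is optimal. Otherwise, let $P$ be the distribution on $\bar S$ that picks $(x,y)$ with probability $|w(x,y)|/W$ and emits the relabeled example $(x,\sign(w(x,y))\,y)$. For each $c\in\mathcal C$ define the $\{\pm1\}$-valued random variable $Z(c):=c(x)\,\sign(w(x,y))\,y$ with $(x,y)\sim P$. A direct computation gives
\[
\E_{P}[Z(c)]=\frac{1}{W}\sum_{(x,y)\in\bar S}w(x,y)\,c(x)\,y,
\]
and hence the target objective satisfies
\[
\Phi(c):=\frac{1}{|\bar S|}\sum_{(x,y)\in\bar S} w(x,y)\,c(x)\,y=\frac{W}{|\bar S|}\,\E_P[Z(c)],
\qquad\text{where }\frac{W}{|\bar S|}\le b .
\]
So it suffices to find $c$ whose value of $\E_P[Z(c)]$ is within $\varepsilon/b$ of the maximum over $\mathcal C$.

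\textbf{Concentration.} Draw $m$ i.i.d.\ samples from $P$ to form the synthetic dataset, and let $\widehat Z(c)$ denote the empirical average of $Z(c)$ on it. Unweighted ERM on this dataset returns exactly $\hat c\in\arg\max_{c}\widehat Z(c)$.

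Both $Z(c)$ and $\widehat Z(c)$ depend on $c$ only through its restriction to the points of $\bar S$. There are at most $2^{|\bar S|}$ such restrictions, so a union bound is over finitely many functions; this is the source of the $|\bar S|$ term in the sample size and avoids any VC argument. Hoeffding plus this union bound shows that with $m=O\bigl((b^2/\varepsilon^2)(|\bar S|+\log(1/\delta))\bigr)$, with probability at least $1-\delta$,
\[
\bigl|\widehat Z(c)-\E_P[Z(c)]\bigr|\le \frac{\varepsilon}{2b}\qquad\text{simultaneously for all }c\in\mathcal C.
\]

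\textbf{Conclusion.} Let $c^\dagger$ maximize $\E_P[Z(c)]$ over $\mathcal C$. On the good event,
\[
\E_P[Z(\hat c)]\ \ge\ \widehat Z(\hat c)-\frac{\varepsilon}{2b}\ \ge\ \widehat Z(c^\dagger)-\frac{\varepsilon}{2b}\ \ge\ \E_P[Z(c^\dagger)]-\frac{\varepsilon}{b}.
\]
Multiplying by $W/|\bar S|\le b$ gives $\Phi(\hat c)\ge\max_{c'}\Phi(c')-\varepsilon$.

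\textbf{Runtime.} Computing $W$, sampling the $m$ points, and making one ERM call all take time $\operatorname{poly}(|\bar S|,b,1/\varepsilon,\log(1/\delta))$.

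\textbf{Main obstacle.} The only delicate point is the scaling. The deviation in $\E_P[Z]$ is multiplied by $W/|\bar S|$, which can be as large as $b$, so the per-sample accuracy must be $\varepsilon/b$. This is exactly what produces the $b^2/\varepsilon^2$ factor in the sample size.
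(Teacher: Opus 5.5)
Your proposal is correct and analyzes exactly the importance-sampling reduction described in the statement. The paper records this as a fact without proof, and your Hoeffding-plus-union-bound argument is the standard argument behind it. The union bound runs over the at most $2^{|\bar S|}$ restrictions of $\mathcal C$ to $\bar S$, at per-sample accuracy $\varepsilon/b$ to absorb the factor $W/|\bar S|\le b$, and this is precisely what gives the stated $O\bigl((b^2/\varepsilon^2)(|\bar S|+\log(1/\delta))\bigr)$ sample size.
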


Later in this section, we show that the Frank--Wolfe update corresponds
to one call to a weighted ERM oracle. We note that the Frank--Wolfe
update serves as a substitute for the projected gradient update in the
polynomial framework of Section~\ref{sec:learning-with-polynomials}.

\subsection{Robust learning algorithm via online Frank-Wolfe}\label{sec:ofw-algorithm}

We first define the corruption certificate.

\begin{definition}[Corruption certificate]
\label{def:erm-empirical-certificate}
For a concept $c:X\to\{\pm1\}$ belonging to $\mathcal C$
and a function $\mu:X\to[-1,1]$ belonging to
$\operatorname{conv}(\mathcal C)$, define
$$
\begin{aligned}
\eta(c,\mu;\bar S_{\mathrm{inp}},S_{\mathrm{ref}})
&:=
\frac{1}{4|\bar S_{\mathrm{inp}}|}
\sum_{(x,y)\in\bar S_{\mathrm{inp}}}
\left(
1+\frac{c(x)\mu(x)-1}{1+|\mu(x)|}
+c(x)y\left(
\frac{c(x)\mu(x)-1}{1+|\mu(x)|}-1
\right)
\right)
\\
&-
\frac1{2|S_{\mathrm{ref}}|}
\sum_{x\in S_{\mathrm{ref}}}
\frac{c(x)\mu(x)-1}{1+|\mu(x)|}.
\end{aligned}
$$
\end{definition}
The corruption certificate leads to the following objective. For a hypothesis $\bar h:X\to[-1,1]$, 
$$
L(\bar h)
:=
\sup_{c\in\mathcal C}
\left(
\widehat{\err}_{S_{\mathrm{ref}}}(\bar h,c)
-
\sup_{\mu\in\operatorname{conv}(\mathcal C)}
\eta
\left(
c,\mu;
\bar S_{\mathrm{inp}},S_{\mathrm{ref}}
\right)
\right).
$$
Our goal is to find $\bar h$ for which $L(\bar h)$ is small. To control
this objective, the algorithm constructs a sequence $(\mu_t)_{t=1}^T$ in $\operatorname{conv}(\mathcal C)$. As in the information-theoretic construction of \cite{Blanc26}, $\mu_t$ determines the hypothesis $h_t$ through the $k$-wise majority map, $h_t(x):=M_k\bigl(\mu_t(x)\bigr)$, where $M_k(a) := \E_{z_1,\ldots,z_k\sim\operatorname{Rad}(a)}\left[\sign\left(z_1+\cdots+z_k\right)\right]$ for $a\in[-1,1]$. This choice gives the following iteration objective for every $c\in\mathcal C$:
\begin{equation}
\label{eq:erm-loss-def}
U_t(c)
:=
\widehat{\err}_{S_{\mathrm{ref}}}(h_t,c)
-
\eta
\left(
c,\mu_t;
\bar S_{\mathrm{inp}},S_{\mathrm{ref}}
\right).
\end{equation}
The algorithm applies online Frank--Wolfe \cite{HazanKale12} to the affine extensions\footnote{We show this transition formally in subsection~\ref{sec:erm-no-regret}.} of $(U_t)_{t=1}^T$ and outputs
$
\bar h_T:=T^{-1}\sum_{t=1}^T h_t.
$
Since each $\mu_t$ is a feasible choice in the supremum that defines
$L(\bar h_T)$,
\begin{align*}
L(\bar h_T)
&=
\sup_{c\in\mathcal C}
\left(
\frac1T\sum_{t=1}^T
\widehat{\err}_{S_{\mathrm{ref}}}(h_t,c)
-
\sup_{\mu\in\operatorname{conv}(\mathcal C)}
\eta
\left(
c,\mu;
\bar S_{\mathrm{inp}},S_{\mathrm{ref}}
\right)
\right)
\nonumber\\
&\le
\sup_{c\in\mathcal C}
\frac1T\sum_{t=1}^T
\left(
\widehat{\err}_{S_{\mathrm{ref}}}(h_t,c)
-
\eta
\left(
c,\mu_t;
\bar S_{\mathrm{inp}},S_{\mathrm{ref}}
\right)
\right)
\nonumber\\
&=
\sup_{c\in\mathcal C}
\frac1T\sum_{t=1}^T U_t(c).
\end{align*}
As in the polynomial framework of Section~\ref{sec:learning-with-polynomials}, we obtain the desired bound from a no-regret analysis of the sequence $(U_t)_{t=1}^T$.

\begin{figure}[ht!]
\centering
\fbox{
\begin{minipage}{0.93\textwidth}
\textsc{Online-Frank-Wolfe}
$(\bar S_{\mathrm{inp}},S_{\mathrm{ref}},\varepsilon, \delta, \mathrm{ERM\ oracle})$:

\vspace{0.5em}

\textbf{Input:} A corrupted sample $\bar S_{\mathrm{inp}}$, a reference
sample $S_{\mathrm{ref}}$, accuracy $\varepsilon$, failure probability $\delta$, ERM oracle for $\mathcal C$.

\vspace{0.8em}

\begin{enumerate}

    \item Set $D_2 := 2\sqrt{|\bar S_{\mathrm{inp}}| + |S_{\mathrm{ref}}|}$ and $G_2 := \sqrt{\frac{1}{4|\bar S_{\mathrm{inp}}|} + \frac{9}{16 |S_{\mathrm{ref}}|}}$.

    \item Set $k:=\left\lceil 64/\varepsilon^2\right\rceil$ and $\gamma_t:=\min(1, 2/\sqrt{t})$.

    \item Initialize $\mu_1=c$ for an arbitrary $c\in\mathcal C$.

    \item Set $T:=\left\lceil (72D_2 G_2/\varepsilon)^4\right\rceil$.

    \item Repeat for $t=1,\ldots,T$:
    \begin{enumerate}
        \item Set $h_t(x) = M_k\bigl(\mu_t(x)\bigr)$ 

        \item Form $g_t\in \R^{|\Sinplabeled|+|\Srefunlabeled|}$ as in Eq.~\eqref{eq:erm-grad} and define, for every $\mu\in\operatorname{conv}(\mathcal C)$, 
        $$
        F_t(\mu_S) := \alpha \sum_{r=1}^{t-1} \langle g_r,\mu_S\rangle - \|\mu_S-(\mu_1)_S\|_2^2\quad\quad\text{where}\quad\alpha = \frac{D_2}{2G_2 T^{3/4}}
        $$
        Use one weighted ERM call (\Cref{fact:weighted-erm}) on $\bar S_{\mathrm{inp}} \cup \{(x,1):x\in S_{\mathrm{ref}}\}$, with the weights $w_t$ from Eq.~\eqref{eq:erm-weights}, to find
        $c_t\in\mathcal C$ such that
        $$
        \left\langle
        \nabla F_t((\mu_t)_S), (c_t)_S
        \right\rangle
        \ge
        \max_{v \in K}
        \left\langle
        \nabla F_t((\mu_t)_S),v
        \right\rangle
        -
        \frac{D_2^2 \gamma_t}{100}.
        $$

        \item Update
        $$
        \mu_{t+1}
        :=
        (1-\gamma_t)\mu_t+\gamma_t c_t.
        $$
    \end{enumerate}

    \item Output
    $h(x)\sim\operatorname{Rad}(\bar h_T(x))$ with
    $
    \bar h_T(x)
    :=
    T^{-1}\sum_{t=1}^T h_t(x).
    $
\end{enumerate}

\end{minipage}
}
\caption{Algorithm for $\eta+\varepsilon$ learning with an ERM oracle.}
\label{fig:erm-oracle-algorithm}
\end{figure}

\subsection{No-regret bound}\label{sec:erm-no-regret}

In this subsection, we verify the assumptions of \Cref{lemma:erm-online-frank-wolfe}, adapted from \cite{HazanOCO} (see Appendix~\ref{app:D} for a self-contained proof of this theorem). In addition, we prove in \Cref{lemma:erm-linear-oracle} that each Frank--Wolfe update can be implemented with one call to a weighted ERM oracle.

\begin{theorem}[Online Frank--Wolfe, adapted from {\cite[Theorem~7.3]{HazanOCO}}]
\label{lemma:erm-online-frank-wolfe}
Let $K$ be a compact convex set with diameter $D_2$, where $D_2>0$. Let $T\in\mathbb N$, and let $(g_t)_{t=1}^T$ be any sequence satisfying $\|g_t\|_2\le G_2$ for every $t\in[T]$ and some $G_2 > 0$. Fix $u_1 \in K$, set $\alpha:=D_2/(2G_2T^{3/4})$ and $\gamma_t:=\min\left(1, 2/\sqrt t\right)$.

For every $t \in [T]$, define $F_t(u):= \alpha \sum_{r=1}^{t-1}\langle g_r,u\rangle - \|u-u_1\|_2^2$, choose $v_t\in K$ such that
\begin{equation}\label{eq:approx-max}
\left\langle
\nabla F_t(u_t),v_t
\right\rangle
\ge
\max_{v\in K}
\left\langle
\nabla F_t(u_t),v
\right\rangle
-
\frac{D_2^2\gamma_t}{100},
\end{equation}
and update
$$
u_{t+1}
:=
(1-\gamma_t)u_t+\gamma_t v_t.
$$
Then, for every $u\in K$, the following no-regret bound holds:
\begin{equation}
\label{eq:erm-frank-wolfe-regret}
\sum_{t=1}^T
\langle u,g_t\rangle
\le
\sum_{t=1}^T
\langle u_t,g_t\rangle
+
9D_2G_2T^{3/4}.
\end{equation}
\end{theorem}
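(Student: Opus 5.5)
The proof follows the standard analysis of online Frank--Wolfe (Hazan, Thm.~7.3), which I would adapt to the approximate linear oracle~\eqref{eq:approx-max} and the specific constants. The approach is a "follow-the-regularized-leader with inexact iterates" argument in three steps. First, compare the iterates $u_t$ to the exact FTRL maximizers $x_t^\star:=\arg\max_{u\in K}F_t(u)$. Second, bound the regret of the FTRL sequence $(x_t^\star)$. Third, control the tracking error $\|u_t-x_t^\star\|_2$ via a Frank--Wolfe convergence argument on the strongly concave functions $F_t$. Rather than rescaling, I would work with $F_t$ directly and treat it as $2$-strongly concave and $2$-smooth. (It is a linear term minus $\|u-u_1\|_2^2$.)

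\textbf{Step 1 (FTRL regret).} By the be-the-leader lemma applied to the concave objectives $F_t$, for every $u\in K$,
\[
\alpha\sum_{t}\langle u-x_t^\star,g_t\rangle\le \|u-u_1\|^2+\alpha\sum_t\langle x_{t+1}^\star-x_t^\star,g_t\rangle .
\]
Strong concavity gives $\|x_{t+1}^\star-x_t^\star\|\le \alpha G_2/2$, so
\[
\sum_t\langle u-x_t^\star,g_t\rangle\le D_2^2/\alpha+\alpha G_2^2T/2 .
\]

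\textbf{Step 2 (tracking).} Let $h_t:=F_t(x_t^\star)-F_t(u_t)\ge 0$. By induction I would show $h_t\le \epsilon_t:=cD_2^2/\sqrt t$ for a suitable constant $c$. The recursion has three ingredients.
\begin{itemize}
\item \emph{Smoothness.} One has $F_t(u_{t+1})\ge F_t(u_t)+\gamma_t\langle\nabla F_t(u_t),v_t-u_t\rangle-\gamma_t^2\|v_t-u_t\|^2$.
\item \emph{Oracle and concavity.} From~\eqref{eq:approx-max} and concavity, $\langle\nabla F_t(u_t),v_t-u_t\rangle\ge h_t-D_2^2\gamma_t/100$.
\item \emph{Resulting recursion.} Combining these gives $h_{t}(u_{t+1})\le(1-\gamma_t)h_t+\gamma_t^2D_2^2(1+1/100)$. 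Passing from $F_t$ to $F_{t+1}$ adds at most $\alpha\langle g_t,x_{t+1}^\star-u_{t+1}\rangle\le \alpha G_2D_2$.
\end{itemize}
With $\gamma_t=2/\sqrt t$ and $\alpha G_2D_2=D_2^2/(2T^{3/4})\le D_2^2/(2t^{3/4})$, a routine induction closes with $\epsilon_t=O(D_2^2/\sqrt t)$. The constant must be checked at small $t$, where $\gamma_t=1$ and the bound $h_t\le D_2^2$ is trivial by the diameter. Strong concavity then gives
\[
\|u_t-x_t^\star\|\le\sqrt{h_t}=O(D_2t^{-1/4}).
\]

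\textbf{Step 3 (combine).} The regret decomposes as
\[
\sum_t\langle u-u_t,g_t\rangle\le\sum_t\langle u-x_t^\star,g_t\rangle+G_2\sum_t\|x_t^\star-u_t\|.
\]
The second sum is $O(G_2D_2T^{3/4})$, since $\sum_t t^{-1/4}\le\frac43T^{3/4}$. Substituting $\alpha=D_2/(2G_2T^{3/4})$ makes the first term $2D_2G_2T^{3/4}+D_2G_2T^{1/4}/4$. Adding the two parts and tracking constants should give at most $9D_2G_2T^{3/4}$.

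The main obstacle is Step 2. The induction has to absorb both the oracle slack $D_2^2\gamma_t/100$ and the drift of $F_t$ between rounds, and it must do so with constants small enough to keep the total within $9$. I would first try $\epsilon_t=2D_2^2/\sqrt t$ and verify the inequality
\[
(1-2/\sqrt t)\,\epsilon_t+4.04D_2^2/t+D_2^2/(2t^{3/4})\le\epsilon_{t+1}.
\]
If this fails, I would adjust $c$ or use the crude bound for $t\le t_0$.
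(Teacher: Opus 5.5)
Your Step 2 has a genuine gap. Bounding the drift $\alpha\langle g_t,x_{t+1}^\star-u_{t+1}\rangle$ by the constant $\alpha G_2D_2=D_2^2/(2T^{3/4})$ is too crude, and with it the induction $h_t\le cD_2^2/\sqrt t$ cannot close for any $c$. In units of $D_2^2$, the inequality you propose to check reads $c/\sqrt t+(4.04-2c)/t+1/(2t^{3/4})\le c/\sqrt{t+1}$.
\begin{itemize}
\item For $c=2$ the left side exceeds $2/\sqrt t$ at every $t$.
\item For general $c$, since $c/\sqrt t-c/\sqrt{t+1}>0$, you would need $(2c-4.04)/t>1/(2t^{3/4})$. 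This fails once $t\ge(4c-8.08)^4$. With the exact drift $D_2^2/(2T^{3/4})$ it fails for all $t\gtrsim(4c-8.08)T^{3/4}$.
\end{itemize}
Handling small $t$ crudely does not help, because the failure is at large $t$. In fact the recursion $h_{t+1}\le(1-\gamma_t)h_t+1.01\gamma_t^2D_2^2+D_2^2/(2T^{3/4})$ only supports $h_t$ of order $D_2^2\bigl(t^{-1/2}+t^{1/2}T^{-3/4}\bigr)$. Then $\sum_t\|u_t-x_t^\star\|_2$ is of order $D_2T^{7/8}$, so this route yields regret $O(T^{7/8})$, not $O(T^{3/4})$.

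The missing idea, which is what the paper does (following Hazan's Lemma 7.4), is to bound the drift self-referentially using $2$-strong concavity of $F_{t+1}$: $\alpha\langle g_t,x_{t+1}^\star-u_{t+1}\rangle\le\alpha G_2\|x_{t+1}^\star-u_{t+1}\|_2\le\alpha G_2\sqrt{h_{t+1}}$. Set $a_t:=F_t(x_t^\star)-F_t(u_{t+1})\le(1-\gamma_t)h_t+1.01\gamma_t^2D_2^2$. This gives $h_{t+1}\le a_t+\alpha G_2\sqrt{h_{t+1}}$, and solving the quadratic gives $\sqrt{h_{t+1}}\le\sqrt{a_t}+\alpha G_2$. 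For $T\ge4$ you have $\alpha G_2=D_2\gamma_T^{3/2}/(4\sqrt2)\le D_2\gamma_t^{3/2}/(4\sqrt2)$. Squaring then shows the drift contributes only $2\alpha G_2\sqrt{a_t}+\alpha^2G_2^2=O(D_2^2\gamma_t^2)$, the same order as the Frank--Wolfe curvature term. With this, the induction $h_t\le4D_2^2\gamma_t$ closes.

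Your Steps 1 and 3 then go through as written: $\|u_t-x_t^\star\|_2\le2D_2\sqrt{\gamma_t}$ and $\sum_t\sqrt{\gamma_t}\le2T^{3/4}$, so the total is at most $6.25D_2G_2T^{3/4}$. Handle $T<4$ separately (where $\gamma_T=1$ and the identity above fails) via the trivial bound $D_2G_2T\le9D_2G_2T^{3/4}$. Your be-the-leader bound in Step 1 is fine, and is slightly sharper than the FTRL bound the paper uses.
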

We define $K$ to be the convex hull of the vectors $c_S$:
\begin{equation}\label{eq:erm-convex-set}
K
:=
\operatorname{conv}\{c_S:c\in\mathcal C\}
\subseteq
[-1,1]^{|\bar S_{\mathrm{inp}}|+|S_{\mathrm{ref}}|}.
\end{equation}
\begin{lemma}[Feasible-set geometry]
\label{lemma:erm-feasible-set-geometry}
$K$ is convex, compact, and has diameter at most $2\sqrt{|\bar S_{\mathrm{inp}}|+|S_{\mathrm{ref}}|}$.
\end{lemma}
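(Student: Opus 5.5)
The three claims can be handled separately, each from elementary facts about convex hulls of finitely many points.

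\emph{Convexity.} $K$ is defined in Eq.~\eqref{eq:erm-convex-set} as a convex hull, so it is convex by definition.

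\emph{Compactness.} The key observation is that the set of vectors $\{c_S : c\in\mathcal C\}$ is finite, even if $\mathcal C$ itself is infinite. Each $c_S$ lies in $\{\pm1\}^{|\bar S_{\mathrm{inp}}|+|S_{\mathrm{ref}}|}$, which has at most $2^{|\bar S_{\mathrm{inp}}|+|S_{\mathrm{ref}}|}$ elements. Hence $K$ is the convex hull of finitely many points, i.e.\ a polytope. It is therefore compact: it is the image of the standard simplex, a compact set, under a continuous linear map. In particular, no closure needs to be taken. This finiteness remark is the only step where any care is needed, since for an infinite set of generators the convex hull need not be closed.

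\emph{Diameter.} Every $c_S$ lies in the cube $[-1,1]^{|\bar S_{\mathrm{inp}}|+|S_{\mathrm{ref}}|}$, which is convex, so $K$ is contained in this cube. For any $u,w\in K$, each coordinate of $u-w$ therefore has absolute value at most $2$. Consequently
\[
\|u-w\|_2 \le \sqrt{4\,(|\bar S_{\mathrm{inp}}|+|S_{\mathrm{ref}}|)} = 2\sqrt{|\bar S_{\mathrm{inp}}|+|S_{\mathrm{ref}}|}.
\]
Taking the supremum over $u,w\in K$ gives the claimed diameter bound, which is exactly the constant $D_2$ used in \Cref{fig:erm-oracle-algorithm} and needed for \Cref{lemma:erm-online-frank-wolfe}.
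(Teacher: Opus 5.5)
Your proposal is correct and follows the paper's proof essentially step for step. It gets convexity from the definition of $K$ as a convex hull, the diameter bound from $K\subseteq[-1,1]^{|\bar S_{\mathrm{inp}}|+|S_{\mathrm{ref}}|}$, and compactness from $K$ being the convex hull of the finite set $\{c_S:c\in\mathcal C\}\subseteq\{\pm1\}^{|\bar S_{\mathrm{inp}}|+|S_{\mathrm{ref}}|}$; your simplex-image argument and coordinatewise norm computation only spell out details the paper leaves implicit.
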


\begin{proof}
Convexity follows immediately from the definition of $K$ as a convex hull. Moreover, $K\subseteq[-1,1]^{|\bar S_{\mathrm{inp}}|+|S_{\mathrm{ref}}|}$, which gives the diameter bound. Finally, $K$ is the convex hull of \textit{a finite set}, as $\{c_S:c\in\mathcal C\}\subseteq\{\pm1\}^{|\bar S_{\mathrm{inp}}| + |S_{\mathrm{ref}}|}$, and is therefore compact. 
\end{proof}

Next, we show that our objective is linear in the concept $c \in \mathcal C$.

\begin{lemma}[Linearization]
\label{lemma:erm-linearization}

For each $t\in[T]$, $\mu_t\in\operatorname{conv}(\mathcal C)$ and
$h_t=M_k\circ\mu_t$ are fixed. There exist a constant $C_t\in\R$ and
a vector
$
g_t\in
\R^{|\bar S_{\mathrm{inp}}|+|S_{\mathrm{ref}}|}
$
such that, for every $c\in\mathcal C$,
$$
U_t(c)
=
\widehat{\err}_{S_{\mathrm{ref}}}(h_t,c)
-
\eta
\left(
c,\mu_t;
\bar S_{\mathrm{inp}},S_{\mathrm{ref}}
\right)
=
C_t+\left\langle g_t,c_S\right\rangle,
$$
where
\begin{equation}
\label{eq:erm-grad}
\left\langle g_t,c_S\right\rangle
=
\sum_{(x,y)\in\bar S_{\mathrm{inp}}}
\frac{1}{4|\bar S_{\mathrm{inp}}|}
\frac{
y\bigl(2+|\mu_t(x)|\bigr)-\mu_t(x)
}{
1+|\mu_t(x)|
}
c(x)
+
\sum_{x\in S_{\mathrm{ref}}}
\frac1{2|S_{\mathrm{ref}}|}
\left(
\frac{\mu_t(x)}{1+|\mu_t(x)|}
-
h_t(x)
\right)c(x).
\end{equation}

\end{lemma}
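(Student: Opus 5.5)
The plan is a direct expansion of both terms of $U_t(c)$, using two facts throughout: $c(x)^2=1$ for $c\in\mathcal C$, and that $\mu_t$ and $h_t=M_k\circ\mu_t$ are fixed at round $t$. Every term that depends on $c$ will turn out to be of the form (a coefficient depending only on $x$, $y$, $\mu_t$, $h_t$) $\times\, c(x)$, and everything else goes into $C_t$.

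\textbf{Error term.} By \Cref{def:linear-error},
$$\widehat{\err}_{S_{\mathrm{ref}}}(h_t,c)=\frac12-\frac1{2|S_{\mathrm{ref}}|}\sum_{x\in S_{\mathrm{ref}}}h_t(x)c(x).$$
This contributes $\frac12$ to $C_t$ and the coefficient $-h_t(x)/(2|S_{\mathrm{ref}}|)$ on each reference coordinate.

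\textbf{Certificate.} Write $a(x):=1+|\mu_t(x)|$.

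\emph{Reference part.} For $x\in S_{\mathrm{ref}}$,
$$-\frac{c(x)\mu_t(x)-1}{a(x)}=-\frac{\mu_t(x)}{a(x)}c(x)+\frac1{a(x)}.$$
Since $\eta$ is subtracted in $U_t$, the reference sum contributes $\frac{1}{2|S_{\mathrm{ref}}|}\frac{\mu_t(x)}{a(x)}c(x)$ plus a constant. Combined with the error term, this gives the reference coordinate of \eqref{eq:erm-grad}.

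\emph{Input part.} For $(x,y)\in\bar S_{\mathrm{inp}}$, expand the summand using $c(x)^2=1$:
$$1+\frac{c\mu_t-1}{a}+cy\Bigl(\frac{c\mu_t-1}{a}-1\Bigr)=1-\frac1a+\frac{\mu_t}{a}c+\frac{y\mu_t}{a}-\frac{y}{a}c-yc.$$
The terms free of $c$ are $1-\frac1a+\frac{y\mu_t}{a}$; they depend only on the fixed data and $\mu_t$, so they go into $C_t$. The coefficient of $c(x)$ is
$$\frac{\mu_t-y-ya}{a}=\frac{\mu_t-y(2+|\mu_t|)}{a}.$$
Multiplying by $\frac{1}{4|\bar S_{\mathrm{inp}}|}$ and negating (again because $\eta$ is subtracted) yields exactly $\frac{1}{4|\bar S_{\mathrm{inp}}|}\frac{y(2+|\mu_t(x)|)-\mu_t(x)}{1+|\mu_t(x)|}$, the input coordinate of \eqref{eq:erm-grad}.

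\textbf{Conclusion.} Collecting these coefficients into $g_t$ indexed by the input and reference coordinates gives $U_t(c)=C_t+\langle g_t,c_S\rangle$. There is no real obstacle. The only step needing care is the sign bookkeeping in the input-sample expansion, in particular using $c^2=1$ to reduce the product term $c(x)\cdot c(x)\mu_t(x)$ to $\mu_t(x)$, which is what makes the expression affine in $c$.
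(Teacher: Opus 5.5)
Your proposal is correct and follows the same route as the paper: you expand $\widehat{\err}_{S_{\mathrm{ref}}}(h_t,c)$ and the certificate term by term, isolate the coefficient of each $c(x)$, and absorb everything else into $C_t$, and your coefficients (including the sign flips from subtracting $\eta$) match the paper's. You also make explicit that $c(x)^2=1$ reduces $c(x)y\cdot c(x)\mu_t(x)$ to $y\mu_t(x)$, a step the paper's proof uses only implicitly.
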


Under the setup of \Cref{lemma:erm-linearization}, we use the same notation $U_t$ for the affine extension $U_t:K\to\R$ defined by
\begin{equation}\label{eq:erm-affine-loss}
U_t(\mu_S) := C_t+\left\langle g_t,\mu_S\right\rangle.
\end{equation}
In particular, $U_t(c_S)=U_t(c)$ for every $c\in\mathcal C$.

Finally, we ensure that the gradient of our linear objective is bounded. 

\begin{lemma}[Gradient bound]
\label{lemma:erm-gradient-bound}
For every $t\in[T]$, the vector $g_t$ from
\Cref{lemma:erm-linearization} satisfies
$$
\|g_t \|_2
\le
\sqrt{
\frac{1}{4|\bar S_{\mathrm{inp}}|}
+
\frac{9}{16|S_{\mathrm{ref}}|}
}.
$$
\end{lemma}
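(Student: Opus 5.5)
The bound follows by reading off the coordinates of $g_t$ from \eqref{eq:erm-grad} and bounding each coordinate uniformly. Since $\langle g_t,c_S\rangle$ is linear in $c_S$, the vector $g_t$ has one coordinate per instance of the concatenated sample. For $(x,y)\in\bar S_{\mathrm{inp}}$ the coordinate is
\[
(g_t)_{(x,y)}=\frac{1}{4|\bar S_{\mathrm{inp}}|}\cdot\frac{y\bigl(2+|\mu_t(x)|\bigr)-\mu_t(x)}{1+|\mu_t(x)|}.
\]
For $x\in S_{\mathrm{ref}}$ the coordinate is
\[
(g_t)_x=\frac{1}{2|S_{\mathrm{ref}}|}\left(\frac{\mu_t(x)}{1+|\mu_t(x)|}-h_t(x)\right).
\]
Sample entries are treated as distinct instances, so repeated points contribute separate coordinates and no cross terms arise. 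Then $\|g_t\|_2^2$ is the sum of squares of these coordinates.

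\textbf{Input coordinates.} Write $a:=|\mu_t(x)|\in[0,1]$, which holds because $\mu_t\in\operatorname{conv}(\mathcal C)$. Since $y\in\{\pm1\}$, the triangle inequality gives
\[
\bigl|y(2+a)-\mu_t(x)\bigr|\le 2+2a=2(1+a).
\]
Hence the fraction has absolute value at most $2$ and $|(g_t)_{(x,y)}|\le \frac{1}{2|\bar S_{\mathrm{inp}}|}$. Summing squares over the $|\bar S_{\mathrm{inp}}|$ input coordinates gives at most $\frac{1}{4|\bar S_{\mathrm{inp}}|}$.

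\textbf{Reference coordinates.} We have $\left|\frac{\mu_t(x)}{1+|\mu_t(x)|}\right|=\frac{a}{1+a}\le\frac12$. Also $h_t(x)=M_k(\mu_t(x))\in[-1,1]$, because it is an expectation of signs; if $k$ is even, $\sign(0)$ must still be taken in $[-1,1]$. So the bracket is at most $\frac12+1=\frac32$ in absolute value, and $|(g_t)_x|\le\frac{3}{4|S_{\mathrm{ref}}|}$. Summing squares over the reference coordinates gives at most $\frac{9}{16|S_{\mathrm{ref}}|}$.

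Adding the two contributions and taking the square root yields the claimed bound.

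The only step needing care is the input-coordinate bound, where the sign of $y$ relative to $\mu_t(x)$ must be handled. The triangle inequality settles it without any case split. Everything else is routine, and the result does not rely on Lemma~\ref{lemma:erm-linearization} beyond the explicit formula \eqref{eq:erm-grad}.
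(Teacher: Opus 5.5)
Your proposal is correct and follows the paper's proof: bound each input coordinate's fraction by $2$ and each reference coordinate's bracket by $\tfrac32$ via the triangle inequality, then sum the squared coordinates to get $\frac{1}{4|\bar S_{\mathrm{inp}}|}+\frac{9}{16|S_{\mathrm{ref}}|}$. Your extra remarks, that repeated sample points give separate coordinates and that $|M_k|\le 1$ even when $\sign(0)$ arises for even $k$, only make explicit what the paper uses implicitly.
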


We prove \Cref{lemma:erm-linearization} in subsection~\ref{sec:erm-linearization} and \Cref{lemma:erm-gradient-bound} in subsection~\ref{sec:erm-gradient-bound}.

It remains to show that the Frank-Wolfe update can be implemented via weighted ERM.

\begin{lemma}[ERM reduction]
\label{lemma:erm-linear-oracle}
For any $\delta\in(0,1)$, the algorithm solves every Frank--Wolfe optimization step up to additive accuracy $D_2^2\gamma_t / 100$ using one ERM call per iteration, with probability at least $1-\delta$. The ERM call is made on a synthetic labeled sample of size $\operatorname{poly}\left(|\bar S_{\mathrm{inp}}|,|S_{\mathrm{ref}}|, 1/\varepsilon,
\log(T/\delta)\right)$. The algorithm uses at most
$T=O\left(\left(\frac{D_2G_2}{\varepsilon}\right)^4\right)$ ordinary
ERM calls.
\end{lemma}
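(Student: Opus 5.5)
We need to show the Frank--Wolfe linear step reduces to a weighted ERM call. First compute the gradient: since $F_t(u)=\alpha\sum_{r<t}\langle g_r,u\rangle-\|u-(\mu_1)_S\|_2^2$, we have
\[
\nabla F_t((\mu_t)_S)=\alpha\sum_{r=1}^{t-1}g_r-2\bigl((\mu_t)_S-(\mu_1)_S\bigr)=:z_t\in\R^{|\bar S_{\mathrm{inp}}|+|S_{\mathrm{ref}}|}.
\]
A linear function over $K=\operatorname{conv}\{c_S\}$ attains its maximum at a vertex. Hence $\max_{v\in K}\langle z_t,v\rangle=\max_{c\in\mathcal C}\langle z_t,c_S\rangle$, and it suffices to find $c_t\in\mathcal C$ that is $D_2^2\gamma_t/100$-optimal for $c\mapsto\langle z_t,c_S\rangle$.

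Next, write this as a weighted ERM objective on the combined dataset $\bar S'=\bar S_{\mathrm{inp}}\cup\{(x,1):x\in S_{\mathrm{ref}}\}$, of size $n'=|\bar S_{\mathrm{inp}}|+|S_{\mathrm{ref}}|$. For each point $(x,y)\in\bar S'$, with $z_t$ evaluated at the coordinate of that instance, set the weight
\[
w_t(x,y):=n'\,(z_t)_x\,y .
\]
Since $y^2=1$, we get $\frac1{n'}\sum w_t(x,y)c(x)y=\langle z_t,c_S\rangle$, so the two objectives coincide exactly. (This is presumably the definition Eq.~\eqref{eq:erm-weights}.) Apply \Cref{fact:weighted-erm} with accuracy $D_2^2\gamma_t/100$ and failure probability $\delta/T$. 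A union bound over $t\in[T]$ then gives success at every step with probability $1-\delta$, using one ERM call per iteration and $T$ calls in total. The count $T=O((D_2G_2/\varepsilon)^4)$ holds by the choice of $T$ in \Cref{fig:erm-oracle-algorithm}.

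The main step is bounding the weight magnitude $b$, which controls the synthetic sample size $O((b^2/\varepsilon'^2)(n'+\log(T/\delta)))$. By \Cref{lemma:erm-gradient-bound}, $\|g_r\|_2\le G_2$, so $\|z_t\|_\infty\le \alpha (t-1)G_2+2\cdot 2$. With $\alpha=D_2/(2G_2T^{3/4})$ this gives $\|z_t\|_\infty\le D_2T^{1/4}/2+4$. Hence $b\le n'(D_2T^{1/4}+4)$.

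The accuracy is $\varepsilon'=D_2^2\gamma_t/100\ge D_2^2/(50\sqrt T)$. Substituting $D_2=2\sqrt{n'}$, $G_2\le\sqrt{1/(4|\bar S_{\mathrm{inp}}|)+9/(16|S_{\mathrm{ref}}|)}$ and $T=\lceil(72D_2G_2/\varepsilon)^4\rceil$, we have $D_2G_2=O(\sqrt{n'/\min(|\bar S_{\mathrm{inp}}|,|S_{\mathrm{ref}}|)})$. Therefore $T$, and with it $b/\varepsilon'$, is polynomial in $|\bar S_{\mathrm{inp}}|,|S_{\mathrm{ref}}|,1/\varepsilon$. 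This yields a synthetic sample size of $\operatorname{poly}(|\bar S_{\mathrm{inp}}|,|S_{\mathrm{ref}}|,1/\varepsilon,\log(T/\delta))$.

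One care point concerns normalization in the weighted-ERM accuracy. If one prefers weights of magnitude $O(1)$, divide $z_t$ by $\|z_t\|_\infty$ and rescale the accuracy accordingly. Either way the bound remains polynomial.
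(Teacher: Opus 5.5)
Your proposal is correct and follows essentially the same route as the paper. You compute $\nabla F_t((\mu_t)_S)$, reduce the maximization over $K$ to one over $\mathcal C$, and encode it as a weighted ERM on $\bar S_{\mathrm{inp}}\cup\{(x,1):x\in S_{\mathrm{ref}}\}$ whose weights are, up to the label sign, the gradient coordinates. You then bound those weights using $\|g_r\|_2\le G_2$ and $\alpha=D_2/(2G_2T^{3/4})$, and take a union bound with failure probability $\delta/T$ per call.

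The remaining differences are cosmetic. You fold the factor $|\bar S_{\mathrm{inp}}|+|S_{\mathrm{ref}}|$ into the weights rather than into the accuracy; this leaves the sampling distribution of the synthetic dataset unchanged. You use a coordinatewise bound of $4$ on the regularizer term instead of the paper's $2D_2$. You also make explicit that $T$ is polynomial in the sample sizes and $1/\varepsilon$. The trivial case $\nabla F_t((\mu_t)_S)=0$, which the paper treats in a footnote, is not addressed in your write-up.
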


\begin{proof}

By \Cref{lemma:erm-linearization},
$$
\nabla F_t((\mu_t)_S)
=
\alpha\sum_{r=1}^{t-1}g_r
-
2\bigl((\mu_t)_S-(\mu_1)_S\bigr).
$$
Therefore, for every $c\in\mathcal C$,
\begin{align*}
\left\langle\nabla F_t((\mu_t)_S),c_S\right\rangle
&=
\sum_{(x,y)\in\bar S_{\mathrm{inp}}}
\left(
\frac{\alpha}{4|\bar S_{\mathrm{inp}}|}
\sum_{r=1}^{t-1}
\frac{
y\bigl(2+|\mu_r(x)|\bigr)-\mu_r(x)
}{
1+|\mu_r(x)|
}
-
2\bigl(\mu_t(x)-\mu_1(x)\bigr)
\right)c(x)
\\
&+
\sum_{x\in S_{\mathrm{ref}}}
\left(
\frac{\alpha}{2|S_{\mathrm{ref}}|}
\sum_{r=1}^{t-1}
\left(
\frac{\mu_r(x)}{1+|\mu_r(x)|}
-
h_r(x)
\right)
-
2\bigl(\mu_t(x)-\mu_1(x)\bigr)
\right)c(x).
\end{align*}
We may assume that $\nabla F_t((\mu_t)_S)\neq 0$.\footnote{
If $\nabla F_t((\mu_t)_S)=0$, then every $c\in\mathcal C$ maximizes the linear objective, since $\left\langle \nabla F_t((\mu_t)_S),c_S \right\rangle=0$. Thus, the algorithm may choose any $c_t\in\mathcal C$; the required optimization guarantee holds with zero error, and no ERM call is needed.} Under this assumption, at least one of the weights defined below in this proof is nonzero. Apply \Cref{fact:weighted-erm} to the disjoint union $\bar S_{\mathrm{inp}}\cup
\{(x,1):x\in S_{\mathrm{ref}}\}$,
with weight bound $\frac52D_2T^{1/4}$, accuracy $\gamma_t/25$,
and failure probability $\delta/T$:
\begin{equation}
\label{eq:erm-weights}
\begin{aligned}
(x,y)\in\bar S_{\mathrm{inp}}:\qquad
w_t(x,y)
&:=
y\left(
\frac{\alpha}{4|\bar S_{\mathrm{inp}}|}
\sum_{r=1}^{t-1}
\frac{
y\bigl(2+|\mu_r(x)|\bigr)-\mu_r(x)
}{
1+|\mu_r(x)|
}
-
2\bigl(\mu_t(x)-\mu_1(x)\bigr)
\right),
\\
x\in S_{\mathrm{ref}}:\qquad
w_t(x,1)
&:=
\frac{\alpha}{2|S_{\mathrm{ref}}|}
\sum_{r=1}^{t-1}
\left(
\frac{\mu_r(x)}{1+|\mu_r(x)|}
-
h_r(x)
\right)
-
2\bigl(\mu_t(x)-\mu_1(x)\bigr).
\end{aligned}
\end{equation}
Since $y^2=1$,
\begin{align*}
&\frac{1}{|\bar S_{\mathrm{inp}}|+|S_{\mathrm{ref}}|}
\left(\sum_{(x,y)\in\bar S_{\mathrm{inp}}}w_t(x,y)c(x)y +
\sum_{x\in S_{\mathrm{ref}}}w_t(x,1)c(x)\right)
=
\frac{1}{|\bar S_{\mathrm{inp}}|+|S_{\mathrm{ref}}|}
\left\langle
\nabla F_t((\mu_t)_S),c_S
\right\rangle.
\end{align*}
Thus, maximizing the linear objective is exactly a weighted ERM problem.  Since $K=\operatorname{conv}\{c_S:c\in\mathcal C\}$, $v = \sum_j \lambda_j (c_j)_S$ for $\sum_j \lambda_j = 1$  and $\lambda_j \ge 0$. Then linearity gives
$$
\max_{v\in K} \left\langle \nabla F_t((\mu_t)_S),v \right\rangle = \max_{\substack{c_j\in\mathcal C,\ \lambda_j\ge0\\
\sum_j\lambda_j=1}}
\sum_j\lambda_j
\left\langle
\nabla F_t((\mu_t)_S),(c_j)_S
\right\rangle = \max_{c\in\mathcal C} \left\langle\nabla F_t((\mu_t)_S),c_S\right\rangle.
$$
Each weight is, up to multiplication by a label in $\{\pm1\}$, a
coordinate of $\nabla F_t((\mu_t)_S)$. Therefore,
\begin{align*}
\max
\left(
\max_{(x,y)\in\bar S_{\mathrm{inp}}}|w_t(x,y)|,
\max_{x\in S_{\mathrm{ref}}}|w_t(x,1)|
\right)
&\le
\left\|
\nabla F_t((\mu_t)_S)
\right\|_2
\tag{coordinate bound}
\\
&\le
\alpha
\sum_{r=1}^{t-1}\|g_r\|_2
+
2\|(\mu_t)_S-(\mu_1)_S\|_2
\tag{triangle ineq.}
\\
&\le
\alpha TG_2+2D_2
\tag{\Cref{lemma:erm-gradient-bound}, \Cref{lemma:erm-feasible-set-geometry}}
\\
&=
\frac{D_2}{2}T^{1/4}+2D_2
\tag{$\alpha=D_2/(2G_2T^{3/4})$}
\\
&\le
\frac52D_2T^{1/4}.
\tag{$T\ge1$}
\end{align*}
By \Cref{fact:weighted-erm}, with probability at least $1-\delta/T$, the ERM call returns a concept $c_t\in\mathcal C$ that satisfies
$$
\frac{1}{|\bar S_{\mathrm{inp}}|+|S_{\mathrm{ref}}|}
\left\langle
\nabla F_t((\mu_t)_S),(c_t)_S
\right\rangle
\ge
\max_{c\in\mathcal C}
\frac{1}{|\bar S_{\mathrm{inp}}|+|S_{\mathrm{ref}}|}
\left\langle
\nabla F_t((\mu_t)_S),c_S
\right\rangle
-
\frac{\gamma_t}{25}.
$$
Since
$D_2^2=4\bigl(|\bar S_{\mathrm{inp}}|+|S_{\mathrm{ref}}|\bigr)$, multiplication by
$|\bar S_{\mathrm{inp}}|+|S_{\mathrm{ref}}|$ gives
$$
\left\langle
\nabla F_t((\mu_t)_S),(c_t)_S
\right\rangle
\ge
\max_{v\in K}
\left\langle
\nabla F_t((\mu_t)_S),v
\right\rangle
-
\frac{D_2^2\gamma_t}{100}.
$$
Moreover,
$
\frac1{\gamma_t}\le\sqrt T,
$
so the synthetic sample used by \Cref{fact:weighted-erm} has size
$
\operatorname{poly}
\left(
|\bar S_{\mathrm{inp}}|,|S_{\mathrm{ref}}|,
T,
\log\frac{T}{\delta}
\right).
$
A union bound shows that all $T$ optimization guarantees hold simultaneously with probability at least $1-\delta$.
\end{proof}

Thus, all conditions of \Cref{lemma:erm-online-frank-wolfe} are satisfied and Eq.~\eqref{eq:erm-frank-wolfe-regret} gives the required no-regret bound.

\subsection{Learning guarantee}\label{sec:ofw-guarantee}

In this section, we prove the following theorem.

\begin{theorem}[Robust learning with access to an ERM oracle]
\label{thm:erm-main-guarantee}
Let \(D\) be a distribution over \(X\), and let
\(\mathcal C\subseteq\{\pm1\}^X\) be a concept class of finite VC dimension. For \(\eta\in[0,1)\) and
\(\varepsilon,\delta\in(0,1/2)\), suppose
$
|\bar S_{\mathrm{inp}}|
\ge
C
\frac{
\operatorname{VCdim}(\mathcal C)+\log(1/\delta)
}{
\varepsilon^2
}
$
and
$
|S_{\mathrm{ref}}|
\ge
C
\frac{
\operatorname{VCdim}(\mathcal C)\log(2/\varepsilon)
+\log(1/\delta)
}{
\varepsilon^4
}
$
for a sufficiently large universal constant $C$.
Under $\eta$-bounded contamination, the algorithm in
\Cref{fig:erm-oracle-algorithm}, run with accuracy $\varepsilon$,
outputs a randomized hypothesis $h$ such that, with probability at
least $1-\delta$,
$$
\err_D(h,c^\star)\le\eta+\varepsilon.
$$
The algorithm uses
$
T
=
O\left(
\left(\frac{D_2 G_2}{\varepsilon}\right)^4
\right)
$
ordinary ERM calls and runs in time
$\operatorname{poly}(|\bar S_{\mathrm{inp}}|,|S_{\mathrm{ref}}|,1/\varepsilon,
\log(1/\delta))$.
\end{theorem}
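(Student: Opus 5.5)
Prove \Cref{thm:erm-main-guarantee} the same way as \Cref{thm:main-guarantee-restated}, now using the online Frank--Wolfe regret bound. The proof has three ingredients:
\begin{enumerate}
\item an optimization bound $L(\bar h_T)\le\varepsilon/4$;
\item a certificate bound $\sup_{\mu\in\operatorname{conv}(\mathcal C)}\eta(c^\star,\mu;\bar S_{\mathrm{inp}},S_{\mathrm{ref}})\le\eta+O(\varepsilon)$;
\item a generalization step transferring $\widehat{\err}_{S_{\mathrm{ref}}}(\bar h_T,c^\star)$ to $\err_D(\bar h_T,c^\star)$.
\end{enumerate}
Chaining them at $c=c^\star$ gives $\err_D(\bar h_T,c^\star)\le \widehat{\err}_{S_{\mathrm{ref}}}(\bar h_T,c^\star)+\varepsilon/8\le \eta+\varepsilon$. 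For the randomized output, $\err_D(h,c^\star)=\err_D(\bar h_T,c^\star)$, exactly as in the polynomial case.

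\emph{Optimization step.} By the display preceding \Cref{fig:erm-oracle-algorithm}, $L(\bar h_T)\le\sup_{c}\frac1T\sum_t U_t(c)$. By \Cref{lemma:erm-linearization} and the affine extension \eqref{eq:erm-affine-loss}, $U_t(c)=C_t+\langle g_t,c_S\rangle$. \Cref{lemma:erm-feasible-set-geometry}, \Cref{lemma:erm-gradient-bound} and \Cref{lemma:erm-linear-oracle} verify the hypotheses of \Cref{lemma:erm-online-frank-wolfe}. Applying \eqref{eq:erm-frank-wolfe-regret} with $u=c_S$ gives
\[
\frac1T\sum_t U_t(c)\le \frac1T\sum_t U_t((\mu_t)_S)+9D_2G_2T^{-1/4}\le \frac1T\sum_t U_t((\mu_t)_S)+\varepsilon/8,
\]
using $T\ge(72D_2G_2/\varepsilon)^4$.

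The crux is the per-round bound $U_t((\mu_t)_S)\le\varepsilon/8$. This is the empirical analogue of \Cref{lemma:dual-small-loss}, with the witness chosen as $\mu=\mu_t$ itself. Since $\mu_t$ is a convex combination of concepts, write $U_t((\mu_t)_S)=\E_{c\sim\lambda_t}U_t(c)$ and substitute \Cref{def:erm-empirical-certificate}. The $S_{\mathrm{inp}}$ terms become pointwise functions of $a=\mu_t(x)$ and $y$. I expect, as in the linearization \eqref{eq:erm-grad}, that the coefficient $\frac{y(2+|a|)-a}{1+|a|}$ paired with $a$ yields a per-point contribution $\le 0$ for both values of $y$; this should be a direct check using $|a|\le1$. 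The $S_{\mathrm{ref}}$ terms reduce to $\frac1{|S_{\mathrm{ref}}|}\sum_x\bigl(\frac{a}{1+|a|}-M_k(a)\bigr)\frac{a}{2}$ plus constants, up to the exact constants in the definition. The majority function satisfies $aM_k(a)\ge \frac{a^2}{1+|a|}-O(1/\sqrt k)$, because $M_k(a)$ has the sign of $a$ and $|M_k(a)|\ge\min(1,\Omega(\sqrt k|a|))$ by anticoncentration/Chernoff. With $k=\lceil 64/\varepsilon^2\rceil$ this gives a per-point loss $\le\varepsilon/8$. I expect this pointwise inequality, with the certificate constants chosen exactly so that it closes, to be the main obstacle. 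If the naive bound fails, I would split into $|a|\le\varepsilon/4$, where the term is trivially small, and $|a|>\varepsilon/4$, where Hoeffding gives $M_k(a)$ exponentially close to $\sign(a)$.

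\emph{Certificate and generalization.} Fix any $\mu\in\operatorname{conv}(\mathcal C)$ and compare with the clean sample labeled by $c^\star$. On $\bar S_{\mathrm{cln}}\cap\bar S_{\mathrm{inp}}$ we have $y=c^\star(x)$, and the summand becomes $1+\mathsf F-(1-\mathsf F)=2\mathsf F$, divided by $4$; this equals $\frac12\mathsf F$, with $\mathsf F=\frac{c^\star\mu-1}{1+|\mu|}$. Each corrupted point contributes at most $\frac{1}{|\bar S_{\mathrm{inp}}|}$, since the summand is at most $4$ in absolute value, giving a total of at most $\eta$. The clean points removed by the adversary are added back, at a cost of at most $0$, because $\mathsf F\le 0$ makes each such term nonpositive. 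This bounds the certificate by
\[
\eta+\tfrac12\widehat{\E}_{\bar S_{\mathrm{cln}}}\mathsf F-\tfrac12\widehat{\E}_{S_{\mathrm{ref}}}\mathsf F.
\]
It remains to show that the difference of the two empirical means of $x\mapsto\frac{c^\star(x)\mu(x)-1}{1+|\mu(x)|}$ is at most $\varepsilon/4$ uniformly over $\mu\in\operatorname{conv}(\mathcal C)$. I would use uniform convergence over this class: approximate $\mu$ by an average of $O(1/\varepsilon^2)$ concepts (Maurey sampling), then apply a VC bound for the resulting composed class. This explains the $\operatorname{VCdim}(\mathcal C)\log(2/\varepsilon)/\varepsilon^4$ size of $S_{\mathrm{ref}}$. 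The transfer $\err_D(\bar h_T,c^\star)\le\widehat{\err}_{S_{\mathrm{ref}}}(\bar h_T,c^\star)+\varepsilon/8$ uses the same tool, since $h_t=M_k\circ\mu_t$ is a mixture over $\operatorname{Maj}_k(\mathcal C)$. Finally, a union bound over the ERM failures and the two concentration events completes the proof. The running time is immediate from \Cref{lemma:erm-linear-oracle}, because $D_2G_2=O(1)$ once the sample sizes are comparable.
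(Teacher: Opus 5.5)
Your architecture is the paper's: online Frank--Wolfe regret plus the self-certification $U_t((\mu_t)_S)\le\varepsilon/8$, a certificate bound for $c^\star$, and a transfer over $\operatorname{conv}(\operatorname{Maj}_k(\mathcal C))$. However, two of the steps that fix the constant in front of $\eta$ are wrong as written, and a third does not meet the stated sample sizes.

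\textbf{Self-certification.} The part of $C_t$ you call ``constants'' depends on $a=\mu_t(x)$ and is not negligible. A reference point contributes exactly
\[
\tfrac12-\tfrac12 aM_k(a)+\tfrac12\cdot\frac{a^2-1}{1+|a|}=\tfrac12\bigl(|a|-aM_k(a)\bigr).
\]
This is your expression plus $\frac{|a|}{2(1+|a|)}$, which can be as large as $\frac14$. So what you need is $a\bigl(\sign(a)-M_k(a)\bigr)\le 2/\sqrt k$, which is Claim~8.3 of \cite{Blanc26} as used in the paper. Your inequality $aM_k(a)\ge\frac{a^2}{1+|a|}-O(1/\sqrt k)$ is not enough: it is satisfied by $a\mapsto a/(1+|a|)$ in place of $M_k$, which still leaves a per-point loss of $\frac{|a|}{2(1+|a|)}$. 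The bound $|M_k(a)|\ge\min(1,\Omega(\sqrt k|a|))$ also cannot hold for large $|a|$, since $|M_k(a)|<1$ whenever $|a|<1$; a constant lower bound below $1$ would not give $O(1/\sqrt k)$ either. Your Hoeffding fallback is the right tool. For $a>0$, $1-M_k(a)\le 2e^{-ka^2/2}$, so $a(1-M_k(a))\le 2e^{-1/2}/\sqrt k$. The same care is needed on the input side: only together with $C_t$ is the per-point value $\frac14(ya-|a|)\le 0$.

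\textbf{Certificate bookkeeping.} The claim that restoring the removed clean points costs at most $0$ has the sign backwards. Since $\mathsf F\le 0$,
\[
\frac1n\sum_{x\in S_{\mathrm{cln}}\cap S_{\mathrm{inp}}}\tfrac12\mathsf F
\le\frac1n\sum_{x\in S_{\mathrm{cln}}}\tfrac12\mathsf F+\frac{|S_{\mathrm{cln}}\setminus S_{\mathrm{inp}}|}{2n}.
\]
So restoration costs up to $\eta/2$. Combined with your bound of $\frac1n$ per corrupted point, this gives $\frac32\eta$, which is exactly the old bound this theorem improves on. The fix is to use the exact per-point values. The term equals $\frac12$ if $y\neq c^\star(x)$ and $\frac12\mathsf F\in[-\frac12,0]$ if $y=c^\star(x)$. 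Hence each corrupted point costs at most $\frac1{2n}$, and the two halves sum to $\eta$. Equivalently, as in the paper, each of the at most $\eta n$ replacements moves the average by at most $\frac1n$.

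\textbf{Sample size.} Sparsifying $\mu$ to $O(\varepsilon^{-2})$ concepts (Maurey) and then applying a VC bound yields $\widetilde O(\operatorname{VCdim}(\mathcal C)/\varepsilon^4)$ points. That rate is needed for $S_{\mathrm{cln}}$ too, but the theorem only assumes $|\bar S_{\mathrm{inp}}|\ge C(\operatorname{VCdim}(\mathcal C)+\log(1/\delta))/\varepsilon^2$. The $\varepsilon^{-4}$ in $|S_{\mathrm{ref}}|$ comes from the transfer over $\operatorname{Maj}_k(\mathcal C)$ with $k=\Theta(\varepsilon^{-2})$, not from the certificate. For the certificate, use that $a\mapsto\frac{c^\star(x)a-1}{1+|a|}$ is $2$-Lipschitz on $[-1,1]$. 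By contraction and $\mathfrak R(\operatorname{conv}(\mathcal C))=\mathfrak R(\mathcal C)$, the class $\{x\mapsto\frac{c^\star(x)\mu(x)-1}{1+|\mu(x)|}:\mu\in\operatorname{conv}(\mathcal C)\}$ then has Rademacher complexity $O(\sqrt{\operatorname{VCdim}(\mathcal C)/m})$. This gives the $\varepsilon^{-2}$ rate on both samples.
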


To prove this theorem, we use two additional lemmas.
\Cref{lemma:erm-error-guarantee} shows that the hypothesis $\bar h_T$
produced by our algorithm has small loss $L(\bar h_T)$.

\begin{lemma}[Error guarantee]
\label{lemma:erm-error-guarantee}

Let $\bar h_T$ be the hypothesis produced by the algorithm in \Cref{fig:erm-oracle-algorithm}. Then
\begin{equation}
\label{eq:erm-all-concepts}
L(\bar h_T) = \sup_{c \in \mathcal C}(\widehat{\err}_{S_{\mathrm{ref}}}(\bar h_T,c)
-
\sup_{\mu\in\operatorname{conv}(\mathcal C)}
\eta
(
c,\mu;
\bar S_{\mathrm{inp}},S_{\mathrm{ref}}
))
\le
\frac{\varepsilon}{4}.
\end{equation}

\end{lemma}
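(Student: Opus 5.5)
Having already reduced $L(\bar h_T)$ to $\sup_{c\in\mathcal C}\frac1T\sum_{t=1}^T U_t(c)$ in Subsection~\ref{sec:ofw-algorithm}, I will bound this average by the online Frank--Wolfe regret guarantee plus a per-round bound on $U_t(\mu_t)$. Using the affine extension \eqref{eq:erm-affine-loss}, for any $c\in\mathcal C$ we have $U_t(c)=U_t(c_S)=C_t+\langle g_t,c_S\rangle$ with $c_S\in K$. The iterates $(\mu_t)_S$ are exactly the Frank--Wolfe iterates $u_t$ of \Cref{lemma:erm-online-frank-wolfe}: they start at $u_1=(\mu_1)_S$, use the same $F_t$, $\alpha$ and $\gamma_t$, and \Cref{lemma:erm-linear-oracle} supplies the approximate linear maximizer \eqref{eq:approx-max}. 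Together with the gradient bound of \Cref{lemma:erm-gradient-bound} and the diameter bound of \Cref{lemma:erm-feasible-set-geometry}, the theorem gives
\[
\frac1T\sum_{t=1}^T U_t(c)\le \frac1T\sum_{t=1}^T U_t\bigl((\mu_t)_S\bigr)+\frac{9D_2G_2}{T^{1/4}}.
\]
The constant terms $C_t$ appear on both sides and cancel. Since $T\ge(72D_2G_2/\varepsilon)^4$, the regret term is at most $9\varepsilon/72=\varepsilon/8$.

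It then remains to show $U_t((\mu_t)_S)\le\varepsilon/8$ for every $t$. This is the empirical analogue of \Cref{lemma:dual-small-loss}, with the witness $f_\mu$ taken to be $\mu_t$ itself and $h_\mu$ replaced by $M_k\circ\mu_t$. Plugging $\mu_t(x)$ for $c(x)$ in \Cref{def:erm-empirical-certificate} and in $\widehat{\err}$ is legitimate because both expressions are affine in $c$ pointwise; indeed \eqref{eq:erm-grad} is linear in $c(x)$ coordinatewise. I will write $a=\mu_t(x)$ and $H=M_k(a)$, and compute $U_t((\mu_t)_S)$ pointwise.

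On $S_{\mathrm{ref}}$, the contribution per point is $\frac12\bigl(1-Ha\bigr)+\frac{1}{2}\cdot\frac{a^2-1}{1+|a|}=\frac12\bigl(1-Ha+|a|-1\bigr)=\frac12\bigl(|a|-aM_k(a)\bigr)$, averaged over $S_{\mathrm{ref}}$. On $\bar S_{\mathrm{inp}}$, the contribution is minus $\frac1{4|\bar S_{\mathrm{inp}}|}\sum\bigl(1+(|a|-1)+ya(|a|-1)-ya\bigr)$, where I used $\frac{a\cdot a-1}{1+|a|}=|a|-1$. This simplifies to $-\frac14\bigl(|a|+ya|a|-2ya\bigr)$. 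To handle it I would check that $|a|+ya|a|-2ya\ge0$ for $y\in\{\pm1\}$ and $a\in[-1,1]$. When $ya\ge0$ the expression equals $|a|(1+|a|-2)+\dots$, more precisely $|a|+|a|^2-2|a|=|a|(|a|-1)\le0$. So this sign is wrong, and the input-sample term is not uniformly nonpositive. I would then combine the two pieces differently. The reference term $\frac12(|a|-aM_k(a))$ is small: for $|a|\ge\varepsilon/8$, majority amplification with $k=\lceil64/\varepsilon^2\rceil$ makes $M_k(a)$ close to $\sign(a)$ by a Hoeffding bound $e^{-ka^2/2}$; for $|a|\le\varepsilon/8$ it is at most $|a|\le\varepsilon/8$. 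This gives roughly $\varepsilon/8$ uniformly. The input term is at most $\frac14|a|(1-|a|)\cdot\mathbf 1[ya>0]$-type quantities, and I expect to absorb it by pairing against the reference term.

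\textbf{Main obstacle.} The pointwise inequality above is where I expect the difficulty, and the pairing idea is not yet worked out. My first attempt will be to recheck the algebra of the certificate, since Blanc's design presumably makes the input-sample term nonpositive exactly when $c=\mu$. If the algebra is correct as written, the bound must instead come from Blanc's structural argument, which compares the input average with the reference average under the same weight function. Once the per-round bound $\varepsilon/8$ is established, adding the regret term $\varepsilon/8$ gives $L(\bar h_T)\le\varepsilon/4$.
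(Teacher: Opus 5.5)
\textbf{There is a genuine gap in the per-round bound.} Your overall structure is the paper's: the regret term from \Cref{lemma:erm-online-frank-wolfe} is at most $9D_2G_2T^{-1/4}\le\varepsilon/8$, and you need a per-round bound $U_t((\mu_t)_S)\le\varepsilon/8$. But that per-round bound is the heart of the lemma, and you leave it unproved. The obstruction you hit is an algebra error, not a real difficulty.

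The certificate in \Cref{def:erm-empirical-certificate} is \emph{not} affine in $c(x)$ as written, because the term $c(x)y\,\frac{c(x)\mu(x)-1}{1+|\mu(x)|}$ contains $c(x)^2$. \Cref{lemma:erm-linearization} first uses $c(x)^2=1$ to rewrite it as $y\,\frac{\mu(x)-c(x)}{1+|\mu(x)|}$, and only then is the expression linear in $c$. The Frank--Wolfe regret inequality concerns this affine extension $C_t+\langle g_t,u\rangle$. So the quantity you must bound at round $t$ is $C_t+\langle g_t,(\mu_t)_S\rangle$, not the original formula with $a=\mu_t(x)$ substituted for $c(x)$. Your substitution turns the term $c(x)y\bigl(\frac{c(x)\mu(x)-1}{1+|\mu(x)|}-1\bigr)$ into $ay(|a|-2)$, whereas the correct affine form gives $y\frac{a-a}{1+|a|}-ay=-ay$. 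This discrepancy is exactly the source of your spurious positive term $\frac14|a|(1-|a|)$ when $ya\ge0$.

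\textbf{How the paper avoids it, and the fix.} The paper writes $\mu_t=\E_{c\sim\pi_t}[c]$ and $U_t((\mu_t)_S)=\E_{c\sim\pi_t}[U_t(c)]$, so the certificate is only ever evaluated at $\pm1$-valued $c$. With the correct linear form, each input point contributes $\frac14\bigl(1+(|a|-1)+0-ya\bigr)=\frac14(|a|-ya)\ge0$ to the certificate. Hence it contributes a nonpositive amount to $U_t$, which is exactly the sign you suspected Blanc's design should produce. Your reference-sample contribution $\frac12\bigl(|a|-aM_k(a)\bigr)$ is correct. Your Hoeffding estimate works if stated uniformly as $|a|e^{-ka^2/2}\le 1/\sqrt{ek}\le\varepsilon/8$; at $|a|=\varepsilon/8$ the factor $e^{-ka^2/2}$ is only $e^{-1/2}$, so it is the product that is small, not $M_k(a)-\sign(a)$. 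This is a valid substitute for the paper's citation of \cite[Claim~8.3]{Blanc26}.

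Your fallback ``pairing'' idea could not have worked. The input and reference samples are unrelated point sets with no pointwise correspondence. Moreover, the spurious term can be of constant size (up to $1/16$ per point at $|a|=1/2$), so it cannot be absorbed by an $O(\varepsilon)$ reference term.
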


\Cref{lemma:erm-certificate-validity} indicates that the corruption
certificate for the target concept $c^\star$ is uniformly bounded over
$\mu\in\operatorname{conv}(\mathcal C)$ by
$\eta+\varepsilon/2$.

\begin{lemma}[Certificate guarantee]
\label{lemma:erm-certificate-validity}

Let $\mathcal C\subseteq\{\pm1\}^X$ have finite VC dimension and let $D$
be a distribution over $X$. There is a universal constant $C>0$ such that, if
$
|\bar S_{\mathrm{inp}}|,|S_{\mathrm{ref}}|
\ge
C
\frac{
\operatorname{VCdim}(\mathcal C)+\log(4/\delta)
}{
\varepsilon^2
},
$
then, with probability at least $1-\delta$,
\begin{equation}
\label{eq:erm-certificate-validity}
\sup_{\mu\in\operatorname{conv}(\mathcal C)}
\eta(c^\star,\mu;\bar S_{\mathrm{inp}},S_{\mathrm{ref}})
\le
\eta+\frac{\varepsilon}{2}.
\end{equation}

\end{lemma}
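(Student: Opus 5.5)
The plan is to compute the certificate explicitly at $c=c^\star$. The contribution of the corrupted sample then collapses to $\eta$ plus a difference of two empirical averages of one bounded Lipschitz function of $\mu$. That difference is controlled uniformly over $\operatorname{conv}(\mathcal C)$ by a Rademacher-complexity argument.

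\textbf{Step 1 (pointwise simplification).} Write $a_\mu(x):=\frac{c^\star(x)\mu(x)-1}{1+|\mu(x)|}$. Since $c^\star(x)\mu(x)\in[-1,1]$ and $1+|\mu(x)|\in[1,2]$, we have $a_\mu(x)\in[-1,0]$. Consider the summand $1+a_\mu(x)+c^\star(x)y\,(a_\mu(x)-1)$ in \Cref{def:erm-empirical-certificate}.
\begin{itemize}
\item If $y=c^\star(x)$, the summand equals $2a_\mu(x)$.
\item If $y=-c^\star(x)$, the summand equals $2$.
\end{itemize}
Hence, with $n=|\bar S_{\mathrm{inp}}|$,
\[
\eta(c^\star,\mu;\bar S_{\mathrm{inp}},S_{\mathrm{ref}})
=\frac{1}{2n}\Bigl(\sum_{\substack{(x,y)\in\bar S_{\mathrm{inp}}\\ y=c^\star(x)}} a_\mu(x)+\#\{(x,y)\in\bar S_{\mathrm{inp}}:y\neq c^\star(x)\}\Bigr)-\frac{1}{2|S_{\mathrm{ref}}|}\sum_{x\in S_{\mathrm{ref}}}a_\mu(x).
\]

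\textbf{Step 2 (removing the adversary).} Split $\bar S_{\mathrm{inp}}$ into the uncorrupted instances $\bar S_{\mathrm{cln}}\cap\bar S_{\mathrm{inp}}$, which are labeled by $c^\star$, and at most $\eta n$ inserted instances.
\begin{itemize}
\item Each inserted instance contributes at most $1$ to the bracket, because either it contributes $a_\mu\le 0$ or it contributes $1$.
\item The retained clean instances contribute $\sum_{S_{\mathrm{cln}}\cap S_{\mathrm{inp}}}a_\mu$.
\item This sum equals $\sum_{S_{\mathrm{cln}}}a_\mu-\sum_{S_{\mathrm{cln}}\setminus S_{\mathrm{inp}}}a_\mu$. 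Since $-a_\mu\le 1$ and at most $\eta n$ clean points were removed, it is at most $\sum_{S_{\mathrm{cln}}}a_\mu+\eta n$.
\end{itemize}
Together the bracket is at most $\sum_{S_{\mathrm{cln}}}a_\mu+2\eta n$, so
\[
\eta(c^\star,\mu;\bar S_{\mathrm{inp}},S_{\mathrm{ref}})\le \eta+\tfrac12\Bigl(\E_{x\sim S_{\mathrm{cln}}}[a_\mu(x)]-\E_{x\sim S_{\mathrm{ref}}}[a_\mu(x)]\Bigr)
\]
for every $\mu$. This reduces the lemma to showing that, with probability $1-\delta$, both samples satisfy $\sup_{\mu\in\operatorname{conv}(\mathcal C)}\bigl|\E_{S}[a_\mu]-\E_D[a_\mu]\bigr|\le\varepsilon/2$. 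The $\tfrac12$ factor then yields the required total slack of $\varepsilon/2$.

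\textbf{Step 3 (uniform convergence; the main obstacle).} The difficulty is that the supremum ranges over the infinite class $\operatorname{conv}(\mathcal C)$ composed with a nonlinear map. I would use symmetrization and bound the Rademacher complexity of $\{a_\mu:\mu\in\operatorname{conv}(\mathcal C)\}$.
\begin{itemize}
\item For fixed $s\in\{\pm1\}$, the map $u\mapsto \phi_s(u):=\frac{su-1}{1+|u|}$ is $O(1)$-Lipschitz on $[-1,1]$.
\item Write $a_\mu(x)=\mathbf 1[c^\star(x)=1]\phi_{1}(\mu(x))+\mathbf 1[c^\star(x)=-1]\phi_{-1}(\mu(x))$. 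Both indicators are fixed functions, since $c^\star$ does not depend on $\mu$.
\item Handle the two parts separately by restricting the sample to each sign class. On each part, Talagrand's contraction lemma bounds the complexity by $O(1)$ times the Rademacher complexity of $\operatorname{conv}(\mathcal C)$.
\item That complexity equals the complexity of $\mathcal C$, which is $O(\sqrt{\operatorname{VCdim}(\mathcal C)/m})$ on $m$ points.
\end{itemize}
McDiarmid's inequality, using that $a_\mu\in[-1,0]$, adds $O(\sqrt{\log(4/\delta)/m})$. Hence $m\ge C(\operatorname{VCdim}(\mathcal C)+\log(4/\delta))/\varepsilon^2$ gives deviation at most $\varepsilon/2$ with probability $1-\delta/2$. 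I would apply this to $S_{\mathrm{cln}}$ (size $n$, i.i.d.\ from $D$, drawn before the adversary acts, so Step 2 is valid for any adversary) and to $S_{\mathrm{ref}}$, then take a union bound to finish.
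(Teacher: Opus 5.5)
Your proposal is correct and follows the paper's proof: the same case computation at $c=c^\star$ (per-example contribution $\tfrac12 a_\mu(x)$ or $\tfrac12$), the same replacement bookkeeping costing at most $\eta$, and uniform convergence of the empirical average of $a_\mu$ over $\operatorname{conv}(\mathcal C)$ to within $\varepsilon/2$ of its $D$-expectation, on both $S_{\mathrm{cln}}$ and $S_{\mathrm{ref}}$, followed by a union bound. The only difference is that you prove the uniform-convergence step yourself (contraction with the $2$-Lipschitz maps $\phi_{\pm1}$, invariance of Rademacher complexity under convex hull, the VC bound, McDiarmid) where the paper cites Blanc's facts; your sign-class split is harmless but unnecessary, since the paper's contraction fact already allows the $x$-dependent map $\psi_x=\phi_{c^\star(x)}$.
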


We prove \Cref{lemma:erm-error-guarantee} in subsection~\ref{sec:erm-error-guarantee} and \Cref{lemma:erm-certificate-validity} in subsection~\ref{sec:erm-certificate-guarantee}. We also use standard generalization bounds in \Cref{lemma:erm-generalization} from Appendix~\ref{app:generalization}. Before these proofs, we show how these two lemmas, along with an ERM reduction in \Cref{lemma:erm-linear-oracle}, imply the main guarantee for robust learning with access to ERM oracle.

\begin{proof}[Proof of \Cref{thm:erm-main-guarantee} assuming
\Cref{lemma:erm-error-guarantee}, \Cref{lemma:erm-certificate-validity}, and \Cref{lemma:erm-generalization}
]

We allocate failure probability $\delta/3$ to all weighted-ERM calls. Both \Cref{lemma:erm-certificate-validity} and \Cref{lemma:erm-generalization} are invoked with failure probability $\delta/3$. A union bound shows that all three guarantees hold simultaneously with probability at least $1-\delta$.

In the algorithm,
$
\bar h_T
\in
\operatorname{conv}
\left(
\operatorname{Maj}_k(\mathcal C)
\right)
$,
so \Cref{lemma:erm-generalization} applies to $\bar h_T$.
Therefore,
\begin{align*}
\err_D(\bar h_T,c^\star)
&\le
\widehat{\err}_{S_{\mathrm{ref}}}
(\bar h_T,c^\star)
+
\frac{\varepsilon}{4}
\tag{\Cref{lemma:erm-generalization}}
\\
&\le
\sup_{\mu \in \operatorname{conv}(\mathcal C)}
\eta
\left(
c^\star,\mu;
\bar S_{\mathrm{inp}},S_{\mathrm{ref}}
\right)
+
\frac{\varepsilon}{2}
\tag{\Cref{lemma:erm-error-guarantee}}
\\
&\le
\eta+\varepsilon.
\tag{\Cref{lemma:erm-certificate-validity}}
\end{align*}
For the output hypothesis
$h(x)\sim\operatorname{Rad}(\bar h_T(x))$,
$$
\err_D(h,c^\star)
=
\E_{x\sim D}
\left[
\frac{1-c^\star(x)\bar h_T(x)}{2}
\right]
=
\err_D(\bar h_T,c^\star) \le \eta + \varepsilon.
$$
The runtime follows from \Cref{lemma:erm-linear-oracle}.

\end{proof}

\newpage
\bibliographystyle{alpha}
\bibliography{references}

\appendix 

\section{
\texorpdfstring
{Robust learning with low-degree polynomials: proofs in Section~\ref{sec:learning-with-polynomials}}
{Fixed-distribution learning with polynomials}
}

\subsection{
\texorpdfstring
{Proof of \Cref{lemma:linearization}}
{Proof of the linearization lemma}
}
\label{sec:linearization}

\begin{proof}

First, by definition of the randomized $0$-$1$ error (\Cref{def:linear-error}), the $q$-dependent part of this expression is
\begin{equation}\label{eq:part-1}
-\frac{1}{2|S_{\mathrm{ref}}|}\sum_{x \in S_{\mathrm{ref}}} h_t(x)q(x). 
\end{equation}
For every $x$, \Cref{def:polynomial-certificate} provides the following identities
$$
\begin{aligned}
&\mathsf F(q(x),m_t(x))
= \frac{q(x)m_t(x)-1}{1+|m_t(x)|} = \frac{m_t(x)}{1+|m_t(x)|}q(x) - \frac1{1+|m_t(x)|}, \\
&\mathsf G(q(x),m_t(x))-q(x) = \frac{m_t(x)}{1+|m_t(x)|} -\left(1+\frac1{1+|m_t(x)|}\right)q(x).
\end{aligned}
$$
Recall from \Cref{def:polynomial-certificate} that the corruption certificate takes the following form:
\begin{equation}\label{def:polynomial-cert-again}
\begin{aligned}
\eta^{\mathrm{poly}}(q,m; \bar S_{\mathrm{filt}}, S_{\mathrm{ref}}) &=
\frac{1}{4|\bar S_{\mathrm{inp}}|}\sum_{(x,y) \in \bar S_{\mathrm{filt}}}
\left(1+\mathsf F(q(x), m(x)) + y\bigl(\mathsf G(q(x),m(x))-q(x)\bigr)
\right) \\ &- \frac1{2|S_{\mathrm{ref}}|}\sum_{x \in S_{\mathrm{ref}}} \mathsf F(q(x),m(x)).
\end{aligned}
\end{equation}
Therefore the $q$-dependent part of the filtered-sample term in Eq.~\eqref{def:polynomial-cert-again} is
\begin{equation}\label{eq:part-2}
-
\sum_{(x,y)\in\bar S_{\mathrm{filt}}}
\frac{1}{4|\bar S_{\mathrm{inp}}|}
\frac{
y\bigl(2+|m_t(x)|\bigr)-m_t(x)
}{
1+|m_t(x)|
}
q(x).
\end{equation}
The $q$-dependent part of the reference sample term in Eq.~\eqref{def:polynomial-cert-again} is
\begin{equation}\label{eq:part-3}
-\frac{1}{2|S_{\mathrm{ref}}|}\sum_{x \in S_{\mathrm{ref}}}\frac{m_t(x)}{1+|m_t(x)|}q(x).
\end{equation}
Since $U_t(q)= \widehat{\err}_{S_{\mathrm{ref}}}(h_t,q) - \eta^{\mathrm{poly}}(q,m_t; \bar S_{\mathrm{filt}}, S_{\mathrm{ref}})$, the $q$-dependent part of $U_t(q)$ is a combination of Eq.~\eqref{eq:part-1},~\eqref{eq:part-2}, and~\eqref{eq:part-3}:
$$
\sum_{(x,y)\in\bar S_{\mathrm{filt}}}
\frac{1}{4|\bar S_{\mathrm{inp}}|}
\frac{
y\bigl(2+|m_t(x)|\bigr)-m_t(x)
}{
1+|m_t(x)|
}
q(x)
+
\sum_{x\in S_{\mathrm{ref}}}
\frac{1}{2|S_{\mathrm{ref}}|}
\left(
\frac{m_t(x)}{1+|m_t(x)|}
-
h_t(x)
\right)
q(x).
$$
Using $q(x)=\langle w,x^{\otimes L}\rangle$, this equals $\langle w,g_t\rangle$ for $g_t$ defined in~\eqref{def:g}. All remaining terms are independent of $w$, and are absorbed into $C_t$. Hence
$$
U_t(q)=C_t+\langle w,g_t\rangle.
$$

\end{proof}

\subsection{
\texorpdfstring
{Proof of \Cref{lemma:gradient-bound}}
{Proof of the gradient bound}
}
\label{sec:gradient-bound}

\begin{proof}
For every $j\in[d]$, the $C_D$-subexponential assumption gives
$$
\Pr_{X\sim D}[|X_j|>\tau]
\le
e^2\exp\left(-\frac{\tau}{C_D}\right).
$$
A union bound over all coordinates of the
$n+|S_{\mathrm{ref}}|$ clean and reference points yields
$$
\Pr\left[
\max_{x\in S_{\mathrm{cln}}\cup S_{\mathrm{ref}}}
\|x\|_\infty>\tau
\right]
\le
e^2d(n+|S_{\mathrm{ref}}|)
\exp\left(-\frac{\tau}{C_D}\right)
\le
\delta.
$$
On the complementary event, truncation removes no clean point, and
every $x\in S_{\mathrm{filt}}\cup S_{\mathrm{ref}}$ satisfies
$\|x\|_\infty\le\tau$. Since $\tau\ge1$,
$$
\|x^{\otimes L}\|_2^2
=
\sum_{\|\zeta\|_1\le L}|x^\zeta|^2
\le
\sum_{\|\zeta\|_1\le L}\tau^{2\|\zeta\|_1}
\le
N\tau^{2L}.
$$
From Eq.~\eqref{def:g}, the coefficients multiplying
$x^{\otimes L}$ in the filtered and reference sums are bounded,
\begin{align*}
&\left|
y\left(1+\frac{1}{1+|m_t(x)|}\right)
-\frac{m_t(x)}{1+|m_t(x)|}
\right|
\le
1 +\frac{1 + |m_t(x)|}{1+|m_t(x)|}
=2
\tag{triangle ineq.}
\\
&\left|
\frac{m_t(x)}{1+|m_t(x)|}-h_t(x)
\right|
=
\frac{1}{1+|m_t(x)|}
\le1.
\tag{$m_t(x)=h_t(x)|m_t(x)|$}
\end{align*}
Consequently, these coefficients are at most $1/(2n)$ and $1/(2|S_{\mathrm{ref}}|)$, respectively. Therefore, by the triangle inequality and $|S_{\mathrm{filt}}| \le n$,
$$
\begin{aligned}
\|g_t\|_2
\le
\frac{1}{2n}
\sum_{x\in S_{\mathrm{filt}}}\|x^{\otimes L}\|_2
+
\frac{1}{2|S_{\mathrm{ref}}|}
\sum_{x\in S_{\mathrm{ref}}}\|x^{\otimes L}\|_2 
\le
\left(
\frac{|S_{\mathrm{filt}}|}{2n}+\frac{|S_{\mathrm{ref}}|}{2|S_{\mathrm{ref}}|}
\right)\sqrt N\,\tau^L
\le
\sqrt N\,\tau^L.
\end{aligned}
$$
\end{proof}

\subsection{
\texorpdfstring
{Proof of \Cref{lemma:error-guarantee}}
{Proof of the error-guarantee}
}
\label{sec:proof-of-error-guarantee}

\begin{proof}

The proof consists of two essential steps.

\noindent\textbf{Part 1: Small loss at the current iterate.} We first show that for every iteration $t$, if $m_t = q_t$ and $h_t = \sign(q_t)$, then
\begin{equation}\label{eq:self-cert}
    U_t(q_{t}) =\widehat{\err}_{S_{\mathrm{ref}}}(h_t,q_{t}) - \eta^{\mathrm{poly}}
    \left(
    q_{t},m_t;
    \bar S_{\mathrm{filt}},S_{\mathrm{ref}}
    \right) \le 0.
\end{equation}
By \Cref{def:polynomial-certificate}, $\mathsf F(q_t,q_t)=|q_t|-1$ and $\mathsf G(q_t,q_t)=0$,
\begin{align*}
&
\eta^{\mathrm{poly}} \left(q_t,q_t;\bar S_{\mathrm{filt}},S_{\mathrm{ref}}\right)
-
\widehat{\err}_{S_{\mathrm{ref}}}(h_t,q_t)
\\
&=
\frac{1}{4|\bar S_{\mathrm{inp}}|}
\sum_{(x,y)\in\bar S_{\mathrm{filt}}}
\left(
|q_t(x)|-y q_t(x)
\right)
+
\frac{1}{2|S_{\mathrm{ref}}|}
\sum_{x\in S_{\mathrm{ref}}}
\left(
h_t(x)q_t(x)-|q_t(x)|
\right)
\\
&=
\frac{1}{4|\bar S_{\mathrm{inp}}|}
\sum_{(x,y)\in\bar S_{\mathrm{filt}}}
\left(
|q_t(x)|-y q_t(x)
\right)
\tag{$h_t(x)q_t(x)=|q_t(x)|$}
\\
&\ge0.
\tag{$y q_t(x)\le |q_t(x)|$}
\end{align*}

\noindent\textbf{Part 2: No-regret.} Since every $w\in K$ satisfies $\|w\|_1\le B$, we have $\|w\|_2\le B$, and therefore, the Euclidean diameter of $K$ is at most $2B$. Applying
\Cref{thm:zinkevich-projected-gradient} with gradient bound
$G_2$ and using
\Cref{lemma:linearization} gives
$$
\begin{aligned}
\frac{1}{T}\sum_{t=1}^T U_t(q)
&\le
\frac{1}{T}\sum_{t=1}^T U_t(q_t)
+
\frac{2B\sqrt N\,\tau^L}{\sqrt T}
\le
\frac{1}{T}\sum_{t=1}^T U_t(q_t)
+
\frac{\varepsilon}{16}.
\end{aligned}
$$
By Eq.~\eqref{eq:self-cert},
$U_t(q_t)\le 0$ for every $t\in[T]$. Then, expanding $U_t(q)$ according to Eq.~\eqref{eq:loss-def}, we get
$$
\frac{1}{T}\sum_{t=1}^T
\widehat{\err}_{S_{\mathrm{ref}}}(h_t,q) - \frac{1}{T}\sum_{t=1}^T \eta^{\mathrm{poly}}(q,m_t; \bar S_{\mathrm{filt}}, S_{\mathrm{ref}}) \le \frac{\varepsilon}{16}.
$$
Since the empirical error is linear in its first argument, $\frac{1}{T}\sum_{t=1}^T \widehat{\err}_{S_{\mathrm{ref}}}(h_t,q) = \widehat{\err}_{S_{\mathrm{ref}}}(\bar h_T,q)$.
Thus
\begin{equation}\label{eq:error-upper-bound}
\widehat{\err}_{S_{\mathrm{ref}}}(\bar h_T,q) \le \frac{1}{T}\sum_{t=1}^T \eta^{\mathrm{poly}}(q,m_t; \bar S_{\mathrm{filt}}, S_{\mathrm{ref}}) + \frac{\varepsilon}{16}.
\end{equation}
Since $q_t\in\mathcal Q$ and $m_t=q_t$, we have
$\{m_t:t\in[T]\}\subseteq\mathcal M$. Therefore,
\begin{equation}\label{eq:certificate-upper-bound}
\frac{1}{T}\sum_{t=1}^T
\eta^{\mathrm{poly}}
\left(
q,m_t;
\bar S_{\mathrm{filt}},S_{\mathrm{ref}}
\right)
\le
\max_{t \in [T]}
\eta^{\mathrm{poly}}
\left(
q,m_t;
\bar S_{\mathrm{filt}},S_{\mathrm{ref}}
\right)
\le
\sup_{m\in\mathcal M}
\eta^{\mathrm{poly}}
\left(
q,m;
\bar S_{\mathrm{filt}},S_{\mathrm{ref}}
\right).
\end{equation}
Combining Eq.~\eqref{eq:certificate-upper-bound} and Eq.~\eqref{eq:error-upper-bound} yields, for every $q\in\mathcal Q$,
$$
\widehat{\err}_{S_{\mathrm{ref}}}(\bar h_T,q)
\le
\sup_{m\in\mathcal M}
\eta^{\mathrm{poly}}
\left(
q,m;
\bar S_{\mathrm{filt}},S_{\mathrm{ref}}
\right)
+
\frac{\varepsilon}{16}.
$$

\end{proof}

\subsection{
\texorpdfstring
{Proof of \Cref{lemma:certificate-guarantee}}
{Proof of certificate-guarantee}
}
\label{sec:proof-of-certificate-guarantee}

\begin{proof}

The assumed sample sizes satisfy the hypotheses of \Cref{thm:ipf-filtering} and \Cref{lemma:generalization}, each with failure probability $\delta/2$. Hence, by a union bound, their conclusions hold simultaneously with probability at least $1-\delta$.

Fix $m\in\mathcal M$. For $(x, y) \in \bar S_{\mathrm{filt}}$, write
$$
\mathsf H(x,y)
:=
\frac14
\left(
1
+
\mathsf F\left(q^\star(x),m(x)\right)
+
y\left(
\mathsf G\left(q^\star(x),m(x)\right)
-
q^\star(x)
\right)
\right).
$$
Recall from \Cref{def:polynomial-certificate} that, for every
$m\in\mathcal M$,
$$
\begin{aligned}
\eta^{\mathrm{poly}}
\left(q^\star,m; \bar S_{\mathrm{filt}}, S_{\mathrm{ref}}\right)
&=
\frac{1}{|\bar S_{\mathrm{inp}}|}\sum_{(x,y) \in \bar S_{\mathrm{filt}}} \mathsf H(x,y) - \frac{1}{2|S_{\mathrm{ref}}|}\sum_{x \in S_{\mathrm{ref}}} \mathsf F \left(q^\star (x), m(x)\right).
\end{aligned}
$$
Add and subtract the clean-sample baseline and the corresponding reference-sample baseline:
\begin{equation}\label{eq:decomposition}
\begin{aligned}
\eta^{\mathrm{poly}}
\left(
q^\star,m;
\bar S_{\mathrm{filt}},S_{\mathrm{ref}}
\right)
&=
\left(
\frac{1}{|\bar S_{\mathrm{inp}}|}
\sum_{(x,y)\in\bar S_{\mathrm{filt}}}
\mathsf H(x,y)
-
\frac{1}{2|\bar S_{\mathrm{inp}}|}
\sum_{x\in S_{\mathrm{cln}}}
\mathsf F\left(c^\star(x),m(x)\right)
\right)
\\
&+
\frac12
\left(
\frac{1}{|\bar S_{\mathrm{inp}}|}
\sum_{x\in S_{\mathrm{cln}}}
\mathsf F\left(c^\star(x),m(x)\right)
-
\frac1{|S_{\mathrm{ref}}|}
\sum_{x\in S_{\mathrm{ref}}}
\mathsf F\left(c^\star(x),m(x)\right)
\right)
\\
&+
\frac{1}{2}
\left(\frac{1}{|S_{\mathrm{ref}}|}\sum_{x\in S_{\mathrm{ref}}}
\mathsf F\left(c^\star(x),m(x)\right)
-
\frac{1}{|S_{\mathrm{ref}}|}\sum_{x\in S_{\mathrm{ref}}} \mathsf F\left(q^\star(x),m(x)\right)
\right).
\end{aligned}
\end{equation}
By Eq.~\eqref{eq:population-transfer-detector} in
\Cref{lemma:generalization}, the second term in Eq.~\eqref{eq:decomposition} is at most $\varepsilon/32$,
$$
\frac12
\left(
\frac{1}{|\bar S_{\mathrm{inp}}|}
\sum_{x\in S_{\mathrm{cln}}}
\mathsf F\left(c^\star(x),m(x)\right)
-
\frac1{|S_{\mathrm{ref}}|}
\sum_{x\in S_{\mathrm{ref}}}
\mathsf F\left(c^\star(x),m(x)\right)
\right) \le \frac{\varepsilon}{32}.
$$
By Eq.~\eqref{eq:reference-width} in
\Cref{lemma:generalization}, the third term in
Eq.~\eqref{eq:decomposition} is also at most $\varepsilon/32$:
\begin{align*}
\frac{1}{2|S_{\mathrm{ref}}|}
\sum_{x\in S_{\mathrm{ref}}}
\left(
\mathsf F(c^\star(x),m(x))
-
\mathsf F(q^\star(x),m(x))
\right)
&\le
\frac{1}{2|S_{\mathrm{ref}}|}
\sum_{x\in S_{\mathrm{ref}}}
|q^\star(x)-c^\star(x)|
\\
&=
\frac{1}{4|S_{\mathrm{ref}}|}
\sum_{x\in S_{\mathrm{ref}}}
\left|
\bigl(p_{\mathrm{up}}^\star(x)-c^\star(x)\bigr)
-
\bigl(c^\star(x)-p_{\mathrm{down}}^\star(x)\bigr)
\right|
\\
&\le
\frac{1}{4|S_{\mathrm{ref}}|}
\sum_{x\in S_{\mathrm{ref}}}
\left(
p_{\mathrm{up}}^\star(x)-p_{\mathrm{down}}^\star(x)
\right)
\\
&\le
\frac{\varepsilon}{32}.
\end{align*}
It remains to bound the first term. We partition the sample entries according to whether they are clean or corrupted and whether they are retained or removed, which gives four cases.

\noindent\textbf{Case 1 (clean and retained):}
$(x,y) \in \bar S_{\mathrm{cln}} \cap \bar S_{\mathrm{filt}}$. Then $y=c^\star (x)$. We claim that
$$
\mathsf H(x,y)- \frac{1}{2}\mathsf F\left(c^\star(x),m(x)\right)
\le
\frac{1}{2}\left|q^\star (x) -c^\star (x) \right|.
$$
If $c^\star (x)=1$, then, using the formulas for $\mathsf F$ and $\mathsf G$ from \Cref{def:polynomial-certificate},
$$
\begin{aligned}
\mathsf H(x,y)-\frac{1}{2}\mathsf F\left(c^\star(x),m(x)\right)
&=
\frac14\left( 1+\mathsf F(q^\star(x),m(x)) +\mathsf G(q^\star(x),m(x)) -q^\star(x) -2\mathsf F(c^\star(x),m(x)) \right)
\\
&=
\frac{1}{4} \left( 1- q^\star (x)\right)
\left(1+\frac{1-m(x)}{1+|m(x)|}\right)
\\
&\le
\frac{1}{2}
\left|
1- q^\star (x)
\right|.
\end{aligned}
$$
If $c^\star (x)=-1$, then, using the formulas for $\mathsf F$ and $\mathsf G$ from \Cref{def:polynomial-certificate} again,
$$
\begin{aligned}
\mathsf H(x,y)-\frac{1}{2}\mathsf F\left(c^\star(x),m(x)\right)
&=
\frac14\left( 1+\mathsf F(q^\star(x),m(x)) -\mathsf G(q^\star(x),m(x)) +q^\star(x) -2\mathsf F(c^\star(x),m(x)) \right)
\\
&=
\frac{1}{4}
\left(
1+q^\star (x)
\right)
\left(1+\frac{1+m(x)}{1+|m(x)|}\right)
\\
&\le
\frac{1}{2}
\left|
1+ q^\star (x)
\right|.
\end{aligned}
$$
Thus, in both cases,
$$
\mathsf H(x,y)-\frac{1}{2}\mathsf F\left(c^\star(x),m(x)\right)\le \frac{1}{2}\left| q^\star (x) - c^\star (x)\right|.
$$
Since $p_{\mathrm{down}}^\star(x)\le c^\star(x)\le p_{\mathrm{up}}^\star(x)$, for every $x\in S_{\mathrm{cln}}\cap S_{\mathrm{filt}}$,
\begin{equation}
\label{eq:bounds}
\frac12|q^\star(x)-c^\star(x)|
=
\frac14
\left|
\bigl(p_{\mathrm{up}}^\star(x)-c^\star(x)\bigr)
-
\bigl(c^\star(x)-p_{\mathrm{down}}^\star(x)\bigr)
\right|
\le
\frac{
p_{\mathrm{up}}^\star(x)-p_{\mathrm{down}}^\star(x)
}{4}.
\end{equation}
Since
$
\frac{p_{\mathrm{up}}^\star (x)-p_{\mathrm{down}}^\star (x)}{2}
\ge0$ and $S_{\mathrm{cln}}\, \cap\, S_{\mathrm{filt}} \subseteq S_{\mathrm{filt}}$, we have
$$
\sum_{x \in S_{\mathrm{cln}} \cap S_{\mathrm{filt}}} \frac{p_{\mathrm{up}}^\star (x) - p_{\mathrm{down}}^\star (x)}{2} \le \sum_{x \in S_{\mathrm{filt}}}\frac{p_{\mathrm{up}}^\star (x) - p_{\mathrm{down}}^\star (x)}{2}.
$$
The polynomial $\frac{p_{\mathrm{up}}^\star (x) - p_{\mathrm{down}}^\star (x)}{2}$ is nonnegative and has degree at most $L$. Moreover, by the
sandwiching guarantee,
$$
\begin{aligned}
\E_{x\sim D}
\left[
\frac{p_{\mathrm{up}}^\star (x)
-
p_{\mathrm{down}}^\star (x)}{2}
\right]
\le
\frac{\varepsilon^2}{512}.
\end{aligned}
$$
Therefore, the polynomial-preservation conclusion of
\Cref{thm:ipf-filtering} gives
\begin{equation}\label{eq:final-bound}
\frac{1}{|\bar S_{\mathrm{inp}}|} \sum_{x \in S_{\mathrm{filt}}}
\frac{
p_{\mathrm{up}}^\star (x)
-
p_{\mathrm{down}}^\star (x)
}{2}
\le
\frac{\varepsilon}{16}.
\end{equation}
Finally, summing Eq.~\eqref{eq:bounds} over the retained sample, with Eq.~\eqref{eq:final-bound}, gives
$$
\begin{aligned}
\frac{1}{|\bar S_{\mathrm{inp}}|} \sum_{(x,y) \in \bar S_{\mathrm{cln}} \cap \bar S_{\mathrm{filt}}} \left(\mathsf H(x,y)-\frac{1}{2}\mathsf F\left(c^\star(x),m(x)\right)\right)
&\le
\frac{1}{2|\bar S_{\mathrm{inp}}|} \sum_{x \in S_{\mathrm{cln}} \cap S_{\mathrm{filt}}}\left( \left|q^\star (x) - c^\star (x)\right|\right)\\
&\le
\frac{1}{2|\bar S_{\mathrm{inp}}|} \sum_{x \in S_{\mathrm{filt}}}\left(\frac{p_{\mathrm{up}}^\star (x) - p_{\mathrm{down}}^\star (x)}{2}\right)\\
&\le \frac{\varepsilon}{32}.
\end{aligned}
$$

\noindent\textbf{Case 2 (clean and not retained):}
$(x, y) \in \bar S_{\mathrm{cln}} \setminus \bar S_{\mathrm{filt}}$. There is no $\mathsf H(x,y)$ term. Since $c^\star(x)\in\{\pm1\}$, \Cref{def:polynomial-certificate} implies that, for every $m(x)\in\R$,
$-1\le\mathsf F(c^\star(x),m(x))\le1$, and hence
\begin{equation}\label{eq:corrupted}
-\frac{1}{2}\mathsf F \left(c^\star (x), m(x)\right) \le \frac{1}{2}.
\end{equation}
Since $S_{\mathrm{filt}}\subseteq S_{\mathrm{inp}}$, we have $S_{\mathrm{cln}}\setminus S_{\mathrm{filt}} =
\left( (S_{\mathrm{cln}}\cap S_{\mathrm{inp}})
\setminus S_{\mathrm{filt}}\right) \cup
\left(S_{\mathrm{cln}}\setminus S_{\mathrm{inp}}\right)$, where the union is disjoint. Therefore, for $n = |S_{\mathrm{inp}}|$,
$$
\begin{aligned}
-\frac{1}{2n}
\!\!\!\!\sum_{(x,y)\in \bar S_{\mathrm{cln}}\setminus \bar S_{\mathrm{filt}}}\!\!\!\! \mathsf F\left(c^\star(x),m(x)\right) &=
-\frac{1}{2n} \!\!\!\!\!\!\! \sum_{(x,y)\in(\bar S_{\mathrm{cln}}\cap \bar S_{\mathrm{inp}}) \setminus \bar S_{\mathrm{filt}}}\!\!\!\!\!\!\!  \mathsf F\left(c^\star(x),m(x)\right) - \frac{1}{2n}\!\!\!\!\sum_{(x,y)\in \bar S_{\mathrm{cln}}\setminus \bar S_{\mathrm{inp}}}\!\!\!\! \mathsf F\left(c^\star(x),m(x)\right) 
\\[0.75em]
&\le
\frac{\left|(S_{\mathrm{cln}}\cap S_{\mathrm{inp}})\setminus S_{\mathrm{filt}}\right|}{2|\bar S_{\mathrm{inp}}|} +\frac{\left|S_{\mathrm{cln}}\setminus S_{\mathrm{inp}}\right|}{2|\bar S_{\mathrm{inp}}|}.
\end{aligned}
$$

\noindent\textbf{Case 3 (corrupted and retained):}
$(x,y)\in(\bar S_{\mathrm{inp}}\setminus\bar S_{\mathrm{cln}}) \cap\bar S_{\mathrm{filt}}$. There is no $\mathsf F\left(c^\star(x),m(x)\right)$ term. We claim that
\begin{equation}\label{eq:h-bound}
\mathsf H(x,y) \le \frac12\max\left(1,|q^\star(x)|\right).
\end{equation}
If $y=1$,
$$
\begin{aligned}
4\mathsf H(x,y)
=
1-yq^\star(x)
+
\bigl(q^\star(x)+y\bigr)
\frac{m(x)-y}{1+|m(x)|}
&=
1-q^\star(x)
+
\bigl(q^\star(x)+1\bigr)
\frac{m(x)-1}{1+|m(x)|}
\\
&\le
1-q^\star(x)+|q^\star(x)+1|
\\
&\le
2\max\left(1,|q^\star(x)|\right).
\end{aligned}
$$
Similarly, if $y=-1$,
$$
\begin{aligned}
4\mathsf H(x,y)
=
1-yq^\star(x)
+
\bigl(q^\star(x)+y\bigr)
\frac{m(x)-y}{1+|m(x)|} &=
1+q^\star(x)
+
\bigl(q^\star(x)-1\bigr)
\frac{m(x)+1}{1+|m(x)|}
\\
&\le
1+q^\star(x)+|q^\star(x)-1|
\\
&\le
2\max\left(1,|q^\star(x)|\right).
\end{aligned}
$$
We next bound this maximum using the sandwiching polynomials. Since $|c^\star(x)|=1$,
\begin{equation}\label{eq:max-bound}
\max\left(1,|q^\star(x)|\right)
\le
1+|q^\star(x)-c^\star(x)|
\le
1+
\frac{p_{\mathrm{up}}^\star(x)-p_{\mathrm{down}}^\star(x)}{2}.
\end{equation}
Therefore, combining Eq.~\eqref{eq:h-bound} and~\eqref{eq:max-bound} yields
\[
\mathsf H(x,y)
\le
\frac12+
\frac{p_{\mathrm{up}}^\star(x)-p_{\mathrm{down}}^\star(x)}{4}.
\]
Using Eq.~\eqref{eq:final-bound}, we obtain
\begin{equation}\label{eq:case-3}
\begin{aligned}
\frac{1}{|\bar S_{\mathrm{inp}}|}
\sum_{(x,y)\in
(\bar S_{\mathrm{inp}}\setminus\bar S_{\mathrm{cln}})
\cap\bar S_{\mathrm{filt}}}
\mathsf H(x,y)
&\le
\frac{|S_{\mathrm{inp}}\setminus S_{\mathrm{cln}}|}{2|\bar S_{\mathrm{inp}}|}
+
\frac{1}{4|\bar S_{\mathrm{inp}}|}
\sum_{x\in S_{\mathrm{filt}}}
\left(
p_{\mathrm{up}}^\star(x)-p_{\mathrm{down}}^\star(x)
\right)
\\
&\le
\frac{|S_{\mathrm{inp}}\setminus S_{\mathrm{cln}}|}{2|\bar S_{\mathrm{inp}}|}
+\frac{\varepsilon}{32}.
\end{aligned}
\end{equation}

\noindent\textbf{Case 4 (corrupted and not retained):}
$(x,y) \in (\bar S_{\mathrm{inp}} \setminus \bar S_{\mathrm{cln}}) \setminus \bar S_{\mathrm{filt}}$. Since these examples are neither clean nor retained, their total contribution is zero:
\begin{equation}\label{eq:case-4}
\frac{1}{|\bar S_{\mathrm{inp}}|}\sum_{(x,y) \in (\bar S_{\mathrm{inp}} \setminus \bar S_{\mathrm{cln}}) \setminus \bar S_{\mathrm{filt}}} 0 = 0.
\end{equation}
Because $\lvert S_{\mathrm{cln}}\rvert=\lvert S_{\mathrm{inp}}\rvert=n$, we have $\left|S_{\mathrm{cln}}\setminus S_{\mathrm{inp}}\right| - \left|S_{\mathrm{inp}}\setminus S_{\mathrm{cln}}\right| = \left|S_{\mathrm{cln}}\right| - \left|S_{\mathrm{inp}}\right| = 0$. Combining the replaced-clean contribution in Case 2 with the retained-corrupted contribution in Case 3 and zero contribution in Case 4, the total corruption-related contribution is at most $\frac{|S_{\mathrm{inp}} \setminus S_{\mathrm{cln}}|}{n}$.

\noindent Combining the four cases gives
\begin{equation}\label{eq:final-bound2}
\begin{aligned}
\frac{1}{|\bar S_{\mathrm{inp}}|}\sum_{(x,y) \in \bar S_{\mathrm{filt}}}\mathsf H(x,y) - \frac{1}{2|\bar S_{\mathrm{inp}}|}\sum_{x \in S_{\mathrm{cln}}}\mathsf F\left(c^\star(x),m(x)\right)
&\le
\frac{|S_{\mathrm{inp}} \setminus S_{\mathrm{cln}}|}{|\bar S_{\mathrm{inp}}|} + \frac{|(S_{\mathrm{cln}} \cap S_{\mathrm{inp}}) \setminus S_{\mathrm{filt}}|}{2|\bar S_{\mathrm{inp}}|} + \frac{\varepsilon}{16}.
\end{aligned}
\end{equation}
The clean-point preservation of polynomial filtering (\Cref{thm:ipf-filtering}) provides $\frac{|(S_{\mathrm{cln}} \cap S_{\mathrm{inp}}) \setminus S_{\mathrm{filt}}|}{n} \le 5\varepsilon/32$ and the definition of bounded contamination (\Cref{def:bounded-contamination}) gives $\frac{|S_{\mathrm{inp}} \setminus S_{\mathrm{cln}}|}{n}\le\eta$. Substituting these bounds into Eq.~\eqref{eq:final-bound2} gives
$$
\begin{aligned}
\frac{1}{|\bar S_{\mathrm{inp}}|}\sum_{(x,y) \in \bar S_{\mathrm{filt}}}\mathsf H(x,y) - \frac{1}{2n}\sum_{x \in S_{\mathrm{cln}}}\mathsf F\left(c^\star(x),m(x)\right)
&\le 
\eta + \frac{5\varepsilon}{64} + \frac{\varepsilon}{16} = \eta+\frac{9\varepsilon}{64}.
\end{aligned}
$$
Finally, the second and third terms in
Eq.~\eqref{eq:decomposition} are each at most $\varepsilon/32$. Therefore,
$$
\begin{aligned}
\eta^{\mathrm{poly}}\left(q^\star,m; \bar S_{\mathrm{filt}}, S_{\mathrm{ref}} \right)
&\le 
\eta + \frac{9\varepsilon}{64} + \frac{\varepsilon}{32} + \frac{\varepsilon}{32}= \eta+\frac{13\varepsilon}{64}.
\end{aligned}
$$
The bound holds for every $m\in\mathcal M$. Taking the supremum over $m\in\mathcal M$ completes the proof.

\end{proof}

\section{
\texorpdfstring
{Robust learning with access to an ERM oracle: proofs in Section~\ref{sec:learning-with-erm}}
{Fixed-distribution learning with ERM oracle}
}

\subsection{
\texorpdfstring
{Proof of \Cref{lemma:erm-linearization}}
{Proof of the linearization}
}
\label{sec:erm-linearization}

\begin{proof}
First, by definition of the randomized $0$-$1$ error
(\Cref{def:linear-error}), the $c$-dependent part of
$\widehat{\err}_{S_{\mathrm{ref}}}(h_t,c)$ is
\begin{equation}\label{eq:erm-part-1}
-\frac1{2|S_{\mathrm{ref}}|}
\sum_{x\in S_{\mathrm{ref}}}h_t(x)c(x).
\end{equation}
Recall from \Cref{def:erm-empirical-certificate} that the
corruption certificate takes the following form:
\begin{equation}\label{eq:erm-certificate-again}
\begin{aligned}
\eta(c,\mu_t;\bar S_{\mathrm{inp}},S_{\mathrm{ref}})
&=
\frac{1}{4|\bar S_{\mathrm{inp}}|}
\sum_{(x,y)\in\bar S_{\mathrm{inp}}}
\left(
1+\frac{c(x)\mu_t(x)-1}{1+|\mu_t(x)|}
+c(x)y\left(
\frac{c(x)\mu_t(x)-1}{1+|\mu_t(x)|}-1
\right)
\right)
\\
&-
\frac1{2|S_{\mathrm{ref}}|}
\sum_{x\in S_{\mathrm{ref}}}
\frac{c(x)\mu_t(x)-1}{1+|\mu_t(x)|}.
\end{aligned}
\end{equation}
Therefore, the $c$-dependent part of the input-sample term
in Eq.~\eqref{eq:erm-certificate-again} is
\begin{equation}\label{eq:erm-part-2}
-\frac{1}{4|\bar S_{\mathrm{inp}}|}\sum_{(x,y)\in\bar S_{\mathrm{inp}}}
\frac{
y\bigl(2+|\mu_t(x)|\bigr)-\mu_t(x)
}{
1+|\mu_t(x)|
}
c(x).
\end{equation}
The $c$-dependent part of the reference-sample term
in Eq.~\eqref{eq:erm-certificate-again} is
\begin{equation}\label{eq:erm-part-3}
-\frac1{2|S_{\mathrm{ref}}|}
\sum_{x\in S_{\mathrm{ref}}}
\frac{\mu_t(x)}{1+|\mu_t(x)|}c(x).
\end{equation}
Since
$U_t(c)=
\widehat{\err}_{S_{\mathrm{ref}}}(h_t,c)
-\eta(c,\mu_t;\bar S_{\mathrm{inp}},S_{\mathrm{ref}})$,
the $c$-dependent part of $U_t(c)$ is obtained by subtracting
Eq.~\eqref{eq:erm-part-2} and~\eqref{eq:erm-part-3}
from Eq.~\eqref{eq:erm-part-1}:
$$
\sum_{(x,y)\in\bar S_{\mathrm{inp}}}
\frac{1}{4|\bar S_{\mathrm{inp}}|}
\left(\frac{
y\bigl(2+|\mu_t(x)|\bigr)-\mu_t(x)
}{
1+|\mu_t(x)|
}\right) c(x)
+
\sum_{x\in S_{\mathrm{ref}}}
\frac1{2|S_{\mathrm{ref}}|}
\left(
\frac{\mu_t(x)}{1+|\mu_t(x)|}-h_t(x)
\right)c(x).
$$
This equals $\langle g_t,c_S\rangle$ for the vector $g_t$
specified in the lemma.
All remaining terms are independent of $c$ and are absorbed
into $C_t$. Hence
$$
U_t(c)=C_t+\langle g_t,c_S\rangle.
$$
\end{proof}

\subsection{
\texorpdfstring
{Proof of \Cref{lemma:erm-gradient-bound}}
{Proof of the gradient bound}
}
\label{sec:erm-gradient-bound}

\begin{proof}

By the triangle inequality, the coefficients corresponding to the input-sample and reference-sample satisfy, respectively,
\begin{align}
&\left|
\frac{
y\bigl(2+|\mu_t(x)|\bigr)-\mu_t(x)
}{
1+|\mu_t(x)|
}
\right|
\le
\frac{
2+2|\mu_t(x)|
}{
1+|\mu_t(x)|
}
=
2,
\label{eq:erm-input-coordinate-bound}
\\
&\left|
\frac{\mu_t(x)}{1+|\mu_t(x)|}-h_t(x)
\right|
\le
\frac{|\mu_t(x)|}{1+|\mu_t(x)|}+1
\le
\frac32.
\label{eq:erm-reference-coordinate-bound}
\end{align}
Moreover, the sum of the squared coordinates of $g_t$ gives its squared Euclidean norm:
$$
\|g_t\|_2^2
=
\frac1{16n^2}
\sum_{(x,y)\in\bar S_{\mathrm{inp}}}
\left(
\frac{
y\bigl(2+|\mu_t(x)|\bigr)-\mu_t(x)
}{
1+|\mu_t(x)|
}
\right)^2
+
\frac1{4|S_{\mathrm{ref}}|^2}
\sum_{x\in S_{\mathrm{ref}}}
\left(
\frac{\mu_t(x)}{1+|\mu_t(x)|}-h_t(x)
\right)^2.
$$
Therefore, by the pointwise bounds in Eq.~\eqref{eq:erm-input-coordinate-bound} and \eqref{eq:erm-reference-coordinate-bound},
$$
\|g_t\|_2^2 \le \frac{1}{4|\bar S_{\mathrm{inp}}|} + \frac{9}{16|S_{\mathrm{ref}}|}.
$$
\end{proof}

\subsection{
\texorpdfstring
{Proof of \Cref{lemma:erm-error-guarantee}}
{Proof of error guarantee}
}\label{sec:erm-error-guarantee}

\begin{proof}

We prove the claim in two parts.

\noindent\textbf{Part 1: Small loss at the current iterate.} We first show that, for every $t\in[T]$,
\begin{equation}
\label{eq:erm-self-certification}
U_t(\mu_t) 
\le
\frac{\varepsilon}{8}.
\end{equation}
Since $\mu_t\in\operatorname{conv}(\mathcal C)$, there exists a distribution $\pi_t$ over $\mathcal C$ such that, for every $x \in X$,
\[
\mu_t(x)
=
\mathbb E_{c\sim\pi_t}[c(x)].
\]
Because $U_t$ is affine on
$\operatorname{conv}(\mathcal C)$ and agrees on
$\mathcal C$ with the original loss,
\begin{align}
U_t(\mu_t)
&=
\mathbb E_{c\sim\pi_t}\bigl[U_t(c)\bigr]
=
\widehat{\operatorname{err}}_{S_{\mathrm{ref}}}
(h_t,\mu_t)
-
\mathbb E_{c\sim\pi_t}
\left[
\eta
\left(
c,\mu_t;
\bar S_{\mathrm{inp}},S_{\mathrm{ref}}
\right)
\right].
\end{align}
Using  $\mathbb E_{c\sim\pi_t}[c(x)]=\mu_t(x)$, we obtain
\begin{align*}
&
\mathbb E_{c\sim\pi_t}
\left[
\eta
\left(
c,\mu_t;
\bar S_{\mathrm{inp}},S_{\mathrm{ref}}
\right)
\right]
-
\widehat{\operatorname{err}}_{S_{\mathrm{ref}}}
\left(
\operatorname{sign}(\mu_t),\mu_t
\right)
\\
&=
\frac{1}{4|\bar S_{\mathrm{inp}}|}
\sum_{(x,y)\in\bar S_{\mathrm{inp}}}
\left(
|\mu_t(x)|-y\mu_t(x)
\right)
+
\frac1{2|S_{\mathrm{ref}}|}
\sum_{x\in S_{\mathrm{ref}}}
\left(
\operatorname{sign}(\mu_t(x))\mu_t(x)
-
|\mu_t(x)|
\right)
\\
&=
\frac{1}{4|\bar S_{\mathrm{inp}}|}
\sum_{(x,y)\in\bar S_{\mathrm{inp}}}
\left(
|\mu_t(x)|-y\mu_t(x)
\right)
\tag{$\operatorname{sign}(\mu_t)\mu_t=|\mu_t|$}
\\
&\ge 0.
\tag{$y\mu_t\le |\mu_t|$ for $y\in\{\pm1\}$}
\end{align*}
The following pointwise approximation was proved in
\cite[Claim~8.3]{Blanc26}: for every $\mu\in[-1,1]$,
\begin{equation}\label{eq:maj-approx}
0
\le
\bigl(\operatorname{sign}(\mu)-M_k(\mu)\bigr)\mu
\le
\frac{2}{\sqrt{k}}.
\end{equation}
Since $h_t=M_k\circ\mu_t$, this estimate gives
\begin{align*}
&
\widehat{\operatorname{err}}_{S_{\mathrm{ref}}}(h_t,\mu_t)
-
\widehat{\operatorname{err}}_{S_{\mathrm{ref}}}
(\operatorname{sign}(\mu_t),\mu_t)
\\
&=
\frac1{2|S_{\mathrm{ref}}|}
\sum_{x\in S_{\mathrm{ref}}}
\left(
\operatorname{sign}(\mu_t(x))-M_k(\mu_t(x))
\right)\mu_t(x)
\\
&\le
\frac1{\sqrt{k}}
\tag{Eq.~\eqref{eq:maj-approx}}
\\
&\le
\frac{\varepsilon}{8}.
\tag{$k=\lceil64/\varepsilon^2\rceil$}
\end{align*}
Therefore, Eq.~\eqref{eq:erm-self-certification} follows
$$
\begin{aligned}
U_t(\mu_t)
&= \widehat{\operatorname{err}}_{S_{\mathrm{ref}}}
(h_t,\mu_t)
-
\mathbb E_{c\sim\pi_t}
\left[
\eta
\left(
c,\mu_t;
\bar S_{\mathrm{inp}},S_{\mathrm{ref}}
\right)
\right] \\
&=
\left(
\widehat{\err}_{S_{\mathrm{ref}}}(h_t,\mu_t)
-
\widehat{\err}_{S_{\mathrm{ref}}}
(\sign(\mu_t),\mu_t)
\right)
-
\left(
\mathbb E_{c\sim\pi_t}
\left[
\eta(c,\mu_t;\bar S_{\mathrm{inp}},S_{\mathrm{ref}})
\right]
-
\widehat{\err}_{S_{\mathrm{ref}}}
(\sign(\mu_t),\mu_t)
\right)\\
&\le
\frac{\varepsilon}{8}.
\end{aligned}
$$
\noindent\textbf{Part 2: No-regret.}
For every $c\in\mathcal C$, the affine extension agrees with
the original loss on $\mathcal C$, so
$$
U_t(c)
=
\widehat{\operatorname{err}}_{S_{\mathrm{ref}}}(h_t,c)
-
\eta
\left(
c,\mu_t;
\bar S_{\mathrm{inp}},S_{\mathrm{ref}}
\right).
$$
By the definition of the affine extension in Eq.~\eqref{eq:erm-affine-loss} and \Cref{lemma:erm-linearization}, for every
$\mu\in\operatorname{conv}(\mathcal C)$,
\begin{equation}\label{eq:affine-extension-application}
U_t(\mu_S)-U_t((\mu_t)_S)
=
\left\langle
g_t,\mu_S-(\mu_t)_S
\right\rangle.
\end{equation}
Since $c\in\mathcal C\subseteq\operatorname{conv}(\mathcal C)$, the
choice $\mu:=c$ is admissible and $\mu_S=c_S\in K$. Thus,
\Cref{lemma:erm-online-frank-wolfe} applies with
$u_t=(\mu_t)_S$ and $u=\mu_S$. Therefore,
\begin{align*}
\frac1T\sum_{t=1}^T U_t(c)
&=
\frac1T\sum_{t=1}^T U_t(c_S)
\tag{Eq.~\eqref{eq:erm-affine-loss}}
\\
&=
\frac1T\sum_{t=1}^T U_t(\mu_S)
\tag{$\mu=c$}
\\
&=
\frac1T\sum_{t=1}^T U_t((\mu_t)_S)
+
\frac1T\sum_{t=1}^T
\left\langle
g_t,\mu_S-(\mu_t)_S
\right\rangle
\tag{Eq.~\eqref{eq:affine-extension-application}}
\\
&\le
\frac1T\sum_{t=1}^T U_t((\mu_t)_S)
+
9D_2G_2T^{-1/4}
\tag{\Cref{lemma:erm-online-frank-wolfe}}
\\
&\le
\frac{\varepsilon}{8}
+
9D_2G_2T^{-1/4}
\tag{Eq.~\eqref{eq:erm-self-certification}}
\\
&\le
\frac{\varepsilon}{4}.
\tag{$T\ge(72D_2G_2/\varepsilon)^4$}
\end{align*}
Moreover, since
$\bar h_T=T^{-1}\sum_{t=1}^T h_t$ and the empirical error is
affine in its first argument,
\begin{equation}
\begin{aligned}
\label{eq:erm-average-loss}
\widehat{\operatorname{err}}_{S_{\mathrm{ref}}}
(\bar h_T,c)
=
\frac1T\sum_{t=1}^T
\widehat{\operatorname{err}}_{S_{\mathrm{ref}}}(h_t,c)
&=
\frac1T\sum_{t=1}^T
\left[
U_t(c)
+
\eta
\left(
c,\mu_t;
\bar S_{\mathrm{inp}},S_{\mathrm{ref}}
\right)
\right]
\\
&\le
\frac1T\sum_{t=1}^T
\eta
\left(
c,\mu_t;
\bar S_{\mathrm{inp}},S_{\mathrm{ref}}
\right)
+
\frac{\varepsilon}{4}.
\end{aligned}
\end{equation}
Since $\mu_t\in\operatorname{conv}(\mathcal C)$ for every
$t\in[T]$,
\begin{equation}
\label{eq:erm-certificate-upper-bound}
\frac1T\sum_{t=1}^T
\eta
\left(
c,\mu_t;
\bar S_{\mathrm{inp}},S_{\mathrm{ref}}
\right)
\le
\sup_{\mu\in\operatorname{conv}(\mathcal C)}
\eta
\left(
c,\mu;
\bar S_{\mathrm{inp}},S_{\mathrm{ref}}
\right).
\end{equation}
Eq.~\eqref{eq:erm-average-loss} and~\eqref{eq:erm-certificate-upper-bound} together complete the proof:
$$
\widehat{\operatorname{err}}_{S_{\mathrm{ref}}}
(\bar h_T,c)
\le
\sup_{\mu\in\operatorname{conv}(\mathcal C)}
\eta
\left(
c,\mu;
\bar S_{\mathrm{inp}},S_{\mathrm{ref}}
\right)
+
\frac{\varepsilon}{4}.
$$

\end{proof}

\subsection{
\texorpdfstring
{Proof of \Cref{lemma:erm-certificate-validity}}
{Proof of ERM certificate validity}
}\label{sec:erm-certificate-guarantee}

\begin{proof}

The proof consists of two parts.

\noindent\textbf{Part 1: Population certificate.}
The uniform convergence analysis of
\cite[Facts~4.7--4.9 and the proof of Theorem~11]{Blanc26},
applied to 
$
x
\mapsto
\frac{c^\star(x)\mu(x)-1}{1+|\mu(x)|}
$
gives, with probability at least $1-\delta/2$,
$$
\sup_{\mu\in\operatorname{conv}(\mathcal C)}
\left|
\frac{1}{|S_{\mathrm{cln}}|}
\sum_{x\in S_{\mathrm{cln}}}
\frac{c^\star(x)\mu(x)-1}{1+|\mu(x)|}
-
\E_{x \sim D}
\left[
\frac{c^\star(x)\mu(x)-1}{1+|\mu(x)|}
\right]
\right|
\le
\frac{\varepsilon}{2}.
$$
Fix $\mu\in\operatorname{conv}(\mathcal C)$. For $c=c^\star$, the
contribution of $(x,y)$ to the first average in the certificate is
$$
\begin{aligned}
&\frac14
\left(
1
+
\frac{c^\star(x)\mu(x)-1}{1+|\mu(x)|}
+
c^\star(x)y
\left(
\frac{c^\star(x)\mu(x)-1}{1+|\mu(x)|}
-
1
\right)
\right)
=
\begin{cases}
\displaystyle
\frac12
\frac{c^\star(x)\mu(x)-1}{1+|\mu(x)|}
& y=c^\star(x)
\\[0.75em]
\displaystyle
\frac12
& y\ne c^\star(x)
\end{cases}
\end{aligned}
$$
This contribution lies in $[-1/2,1/2]$, so replacing at most
$\eta n$ examples increases the first average by at most $\eta$.
Let $\eta(c,\mu;\bar S_{\mathrm{inp}},D)$ denote the population version of the certificate, with the average over $S_{\mathrm{ref}}$ replaced by the corresponding expectation under $D$. Consequently,
\begin{equation}\label{eq:certificate-population-bound}
\eta(c^\star,\mu;\bar S_{\mathrm{inp}},D)
\le
\eta
+
\frac12
\left(
\frac{1}{|S_{\mathrm{cln}}|}
\sum_{x\in S_{\mathrm{cln}}}
\frac{c^\star(x)\mu(x)-1}{1+|\mu(x)|}
-
\E_{x \sim D}
\left[
\frac{c^\star(x)\mu(x)-1}{1+|\mu(x)|}
\right]
\right)
\le
\eta+\frac{\varepsilon}{4}.
\end{equation}
Since Eq.~\eqref{eq:certificate-population-bound} holds for every
$\mu\in\operatorname{conv}(\mathcal C)$,
\begin{equation}
\label{eq:erm-population-certificate-validity}
\sup_{\mu\in\operatorname{conv}(\mathcal C)}
\eta(c^\star,\mu;\bar S_{\mathrm{inp}},D)
\le
\eta+\frac{\varepsilon}{4}.
\end{equation}

\noindent\textbf{Part 2: Reference-sample approximation.}
The same uniform convergence analysis gives, with probability at least
$1-\delta/2$,
$$
\sup_{\mu\in\operatorname{conv}(\mathcal C)}
\left|
\frac1{|S_{\mathrm{ref}}|}
\sum_{x\in S_{\mathrm{ref}}}
\frac{c^\star(x)\mu(x)-1}{1+|\mu(x)|}
-
\E_{x \sim D}
\left[
\frac{c^\star(x)\mu(x)-1}{1+|\mu(x)|}
\right]
\right|
\le
\frac{\varepsilon}{2}.
$$
Therefore,
\begin{equation}
\label{eq:erm-certificate-generalization}
\begin{aligned}
&\sup_{\mu\in\operatorname{conv}(\mathcal C)}
\left|
\eta(c^\star,\mu;\bar S_{\mathrm{inp}},S_{\mathrm{ref}})
-
\eta(c^\star,\mu;\bar S_{\mathrm{inp}},D)
\right|
\\
&=
\frac12
\sup_{\mu\in\operatorname{conv}(\mathcal C)}
\left|
\frac1{|S_{\mathrm{ref}}|}
\sum_{x\in S_{\mathrm{ref}}}
\frac{c^\star(x)\mu(x)-1}{1+|\mu(x)|}
-
\E_{x \sim D}
\left[
\frac{c^\star(x)\mu(x)-1}{1+|\mu(x)|}
\right]
\right|
\\
&\le
\frac{\varepsilon}{4}.
\end{aligned}
\end{equation}
On the intersection of these two events,
\begin{align*}
\sup_{\mu\in\operatorname{conv}(\mathcal C)}
\eta(c^\star,\mu;\bar S_{\mathrm{inp}},S_{\mathrm{ref}})
&\le
\sup_{\mu\in\operatorname{conv}(\mathcal C)}
\eta(c^\star,\mu;\bar S_{\mathrm{inp}},D)
+
\frac{\varepsilon}{4}
\le
\eta+\frac{\varepsilon}{4}+\frac{\varepsilon}{4}
=
\eta+\frac{\varepsilon}{2}.
\end{align*}
A union bound completes the proof.

\end{proof}

\section{Generalization bounds}\label{app:generalization}

In this section, we establish the concentration bounds that connect the empirical quantities used by the algorithm with their population counterparts under $D$. We first state the definitions of VC dimension and Rademacher complexity and then collect the standard properties required for the analysis.

\begin{definition}[VC dimension]\label{def:gc-dimension}
For a class $\mathcal H\subseteq\{\pm1\}^X$,
$\operatorname{VCdim}(\mathcal H)$ is the cardinality of the largest
set $S\subseteq X$ on which $\mathcal H$ realizes all $2^{|S|}$
labelings.
\end{definition}

\begin{definition}[Rademacher complexity]
\label{def:rademacher-complexity}
Let $\mathcal F$ be a class of functions from $X$ to $[-1,1]$, let
$D$ be a distribution over $X$, and let $n\ge1$. We define
$$
\mathfrak R_{n,D}(\mathcal F)
:=
\mathbb E\left[
\sup_{f\in\mathcal F}
\frac1n\sum_{i=1}^n \sigma_i f(X_i)
\right],
$$
where $X_1,\ldots,X_n$ are drawn independently from $D$ and
$\sigma_1,\ldots,\sigma_n$ are independent random variables uniformly
distributed on $\{\pm 1\}$.
\end{definition}

\begin{fact} We present six standard facts from empirical process theory.
\begin{factlist}
    \item\label{fact:vc-dimension} \cite[Proposition~4.20]{Wainwright2019} The class $\mathcal H_0$ is a linear-threshold class in the $N$-dimensional space, so 
    $$
    \operatorname{VCdim}(\mathcal H_0)\le N
    $$
    \item\label{fact:vc-rademacher} \cite[Example~5.24]{Wainwright2019} For every class $\mathcal H\subseteq\{\pm1\}^X$ and every sample size $|S|$, with $C_1>0$ a universal constant,
    $$
    \mathfrak R_{|S|,D}(\mathcal H)
    \le
    C_1\sqrt{\frac{\operatorname{VCdim}(\mathcal H)}{|S|}}
    $$
    \item\label{fact:convexification}
    \cite[Lemma~7.4]{MohriRostamizadehTalwalkar2018}
    Rademacher complexity is invariant under convexification:
    $$
    \mathfrak R_{|S|,D}
    \left(
    \operatorname{conv}(\mathcal H)
    \right)
    =
    \mathfrak R_{|S|,D}(\mathcal H).
    $$
    \item\label{fact:contraction} \cite[Lemma~5.7]{MohriRostamizadehTalwalkar2018}
    For any class $\mathcal G$ and any family
    $(\psi_x)_{x\in X}$ of $K$-Lipschitz maps,
    $$
    \mathfrak R_{|S|,D}(\psi_x\circ\mathcal G) \le K \cdot \mathfrak R_{|S|,D}(\mathcal G),
    $$
    \item
    \label{fact:vc-majority}
    \cite[Fact~4.6]{Blanc26}
    For every concept class $\mathcal C$ and every $k\ge1$,
    $$
    \operatorname{VCdim}
    \left(
    \operatorname{Maj}_k(\mathcal C)
    \right)
    \le
    O\left(
    \operatorname{VCdim}(\mathcal C)\,k\log(2k)
    \right).
    $$
    \item\label{fact:rademacher-generalization} \cite[Theorem~3.3]{MohriRostamizadehTalwalkar2018}
    For any class $\mathcal F$ taking values on a unit interval, with probability at least $1-\delta$,
    $$
    \sup_{f\in\mathcal F}
    \left|
    \E_{x \sim D}\left[f(X)\right]
    -
    \frac{1}{|S|}\sum_{i=1}^{|S|} f(X_i)
    \right| \le 2\cdot\mathfrak R_{|S|,D}(\mathcal F) + \sqrt{\frac{\log(2/\delta)}{2|S|}}.
    $$
\end{factlist}
\end{fact}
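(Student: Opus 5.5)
Every item in this list is a standard result, so the plan is to recall the short argument behind each rather than rely only on the citations. Throughout I will treat all classes as $\{\pm1\}$- or $[-1,1]$-valued, as in \Cref{def:rademacher-complexity}.

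\textbf{Fact~\ref{fact:vc-dimension}.} I view $\mathcal H_0$ as the class of homogeneous halfspaces $v\mapsto\sign(\langle w,v\rangle)$ applied to the feature map $x\mapsto x^{\otimes L}\in\R^N$. Let $x_1,\ldots,x_{N+1}$ be any $N+1$ points. Their feature vectors are linearly dependent, so $\sum_i a_i x_i^{\otimes L}=0$ with some $a_i\neq0$; replacing $a$ by $-a$ if needed, some $a_i>0$. Let $P=\{i:a_i>0\}\neq\emptyset$ and $Q=\{i:a_i<0\}$, so $\sum_{i\in P}a_i x_i^{\otimes L}=\sum_{i\in Q}|a_i|x_i^{\otimes L}$. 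Suppose some $w$ gave label $+1$ on $P$ and $-1$ on the rest. Pairing both sides with $w$ makes the left side strictly positive. The right side is $\le 0$ (it is $0$ if $Q=\emptyset$). This is a contradiction, so no $N+1$ points are shattered. The only care needed is the convention $\sign(0)$. With either convention, a point with $\langle w,x_i^{\otimes L}\rangle=0$ carries a fixed label, and the same inequality argument goes through after assigning that point to the appropriate side.

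\textbf{Facts~\ref{fact:vc-rademacher} and~\ref{fact:vc-majority}.} For Fact~\ref{fact:vc-rademacher} I would first condition on the sample. Sauer--Shelah bounds the number of restrictions of $\mathcal H$ to the sample. Massart's finite-class lemma then gives the weaker bound $\sqrt{\operatorname{VCdim}\log|S|/|S|}$. To remove the logarithm I would use Dudley's entropy integral together with Haussler's packing bound, $\mathcal N(\mathcal H,L_2(S),u)\le (C/u)^{O(\operatorname{VCdim})}$. This bound makes the entropy integral $O(\sqrt{\operatorname{VCdim}/|S|})$; this is the only genuinely technical step and I would cite \cite{Wainwright2019} for it.

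For Fact~\ref{fact:vc-majority} I would bound the growth function. On $m$ points, a $k$-wise majority is determined by the $k$ restrictions of its constituents. Hence $\Pi_{\operatorname{Maj}_k(\mathcal C)}(m)\le\Pi_{\mathcal C}(m)^k\le (em/d)^{dk}$, where $d=\operatorname{VCdim}(\mathcal C)$. Shattering $m$ points requires $2^m\le (em/d)^{dk}$, and this inequality fails once $m\ge C\,dk\log(2k)$.

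\textbf{Facts~\ref{fact:convexification}, \ref{fact:contraction}, and~\ref{fact:rademacher-generalization}.}
\begin{itemize}
\item Fact~\ref{fact:convexification}: for fixed $\sigma$ and sample, $f\mapsto\frac1n\sum_i\sigma_i f(X_i)$ is linear. Its supremum over $\operatorname{conv}(\mathcal H)$ therefore equals its supremum over $\mathcal H$. Taking expectations gives equality.
\item Fact~\ref{fact:contraction}: this is the Ledoux--Talagrand contraction lemma for the non-absolute Rademacher average. I would prove it coordinate by coordinate. Condition on all $\sigma_j$ with $j\neq i$ and average over $\sigma_i$. The $K$-Lipschitz property of $\psi_{x_i}$ then lets one replace $\psi_{x_i}(g(x_i))$ by $K g(x_i)$ without decreasing the supremum.
\item Fact~\ref{fact:rademacher-generalization}: the one-sided supremum deviation has bounded differences $1/|S|$, since the values lie in a unit interval. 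McDiarmid's inequality concentrates it around its mean, and symmetrization bounds the mean by $2\mathfrak R_{|S|,D}(\mathcal F)$. The absolute value is handled by applying the same argument to $-\mathcal F$ and taking a union bound, each with failure probability $\delta/2$. This yields the $\log(2/\delta)$ term.
\end{itemize}

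I expect the only real obstacle to be the log-free chaining bound in Fact~\ref{fact:vc-rademacher}; the remaining items are short and self-contained.
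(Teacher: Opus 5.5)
Your sketches are correct. The paper proves none of these items itself and only cites them, and your arguments are the standard proofs behind those citations: linear dependence for the VC bound, Dudley plus Haussler for the log-free Rademacher bound, linearity for convexification, Ledoux--Talagrand contraction, a growth-function count for majorities, and McDiarmid plus symmetrization with $\mathfrak R_{|S|,D}(-\mathcal F)=\mathfrak R_{|S|,D}(\mathcal F)$. The one loose spot is the $\sign(0)=+1$ case of Fact~\ref{fact:vc-dimension}: there the left side is only nonnegative, and ``reassigning'' zero-margin points after $w$ is chosen is not a legitimate move. Instead, either target the mirrored labeling ($-1$ on $P$, $+1$ elsewhere), or note that the constant monomial forces $\sum_i a_i=0$, so $Q\neq\emptyset$ and your right-hand side is strictly negative.
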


We also use the following concentration bound for sums of independent random variables.

\begin{fact}[Marcinkiewicz--Zygmund inequality, {\cite{Fer14}}]
\label{fact:marcinkiewicz-zygmund}
Let $Z_1,\ldots,Z_n$ be independent copies of a real-valued random
variable $Z$ with finite $q$-th moment. Then, for every $q\ge2$,
$$
\mathbb E\left[\left|\frac{1}{n}\sum_{i=1}^n
(Z_i-\mathbb E[Z])\right|^q\right]
\le
2\left(\frac{q}{n}\right)^{q/2}
\mathbb E\left[|Z-\mathbb E[Z]|^q\right].
$$
\end{fact}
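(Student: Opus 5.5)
We plan to prove the inequality through the $2$-uniform smoothness of $L_q$ for $q\ge2$ (Rio's martingale inequality), rather than symmetrization plus Khintchine. Symmetrization already costs a factor $2^q$, and Khintchine adds $C^q$, so that route cannot give the absolute constant $2$ in the statement. Write $Y_i:=Z_i-\mathbb E[Z]$; these are i.i.d., mean zero, with $\|Y_i\|_q=\|Z-\mathbb E Z\|_q$. Set $S_k:=\sum_{i\le k}Y_i$. The target becomes
$$
\mathbb E|S_n|^q\le 2\,q^{q/2}n^{q/2}\,\mathbb E|Y_1|^q .
$$
Dividing by $n^q$ then gives exactly the stated bound.

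\textbf{Key step (smoothness inequality).} The main obstacle is the following claim. If $X,Y$ have finite $q$-th moments and $\mathbb E[Y\mid X]=0$ (in particular, if $Y$ is independent of $X$ with mean zero), then
$$
\|X+Y\|_q^2\le\|X\|_q^2+(q-1)\|Y\|_q^2 .
$$
We will prove it by studying $\varphi(t):=\|X+tY\|_q^2=\bigl(\mathbb E|X+tY|^q\bigr)^{2/q}$ on $[0,1]$.

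First, $\varphi'(0)=\tfrac{2}{q}\|X\|_q^{2-q}\cdot q\,\mathbb E[|X|^{q-2}X\,Y]=0$. This holds because $\mathbb E[Y\mid X]=0$ and $|X|^{q-1}$ is integrable by H\"older.

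Second, we bound $\varphi''$. Write $W=X+tY$ and $\psi(t)=\mathbb E|W|^q$. Then
$$
\varphi''=\tfrac{2}{q}\psi^{2/q-1}\psi''+\tfrac{2}{q}\bigl(\tfrac{2}{q}-1\bigr)\psi^{2/q-2}(\psi')^2 .
$$
The last term is $\le0$ since $q\ge2$, so we may drop it. Next, $\psi''=q(q-1)\mathbb E[|W|^{q-2}Y^2]$. By H\"older with exponents $q/(q-2)$ and $q/2$, this is at most $q(q-1)\|W\|_q^{q-2}\|Y\|_q^2$. Combining, $\varphi''(t)\le 2(q-1)\|Y\|_q^2$.

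Taylor's theorem then gives $\varphi(1)\le\varphi(0)+(q-1)\|Y\|_q^2$. Differentiating under the expectation needs a routine justification, via dominated convergence and the finite $q$-th moments. We will also handle the degenerate case $\|W\|_q=0$ separately, where the bound is trivial.

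\textbf{Induction and conclusion.} Apply the claim with $X=S_{k-1}$ and $Y=Y_k$; here $Y_k$ is independent of $S_{k-1}$ and has mean zero. Inducting on $k$ gives
$$
\|S_n\|_q^2\le(q-1)\sum_{k=1}^n\|Y_k\|_q^2=(q-1)\,n\,\|Y_1\|_q^2 .
$$
Raising to the power $q/2$ gives $\mathbb E|S_n|^q\le\bigl((q-1)n\bigr)^{q/2}\mathbb E|Y_1|^q\le (qn)^{q/2}\,\mathbb E|Y_1|^q$. This is even stronger than needed: we obtain the bound with constant $1$, which a fortiori gives the constant $2$. Dividing by $n^q$ yields the stated inequality.
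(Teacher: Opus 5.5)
Your proof is correct. It necessarily takes a different route from the paper, because the paper gives no proof of this fact: it is imported from \cite{Fer14} and used once, in Part~3 of \Cref{lemma:generalization}, with $q=2\log(8/\delta)$.

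You derive it instead from the $2$-uniform smoothness of $L_q$, namely $\|X+Y\|_q^2\le\|X\|_q^2+(q-1)\|Y\|_q^2$ whenever $\mathbb{E}[Y\mid X]=0$ (Rio's martingale inequality). The steps check out:
\begin{itemize}
\item $\varphi'(0)=0$ by conditioning, with integrability from $\mathbb{E}[|X|^{q-1}|Y|]\le\|X\|_q^{q-1}\|Y\|_q$.
\item The $(\psi')^2$ term is nonpositive because $q\ge2$.
\item H\"older gives $\varphi''\le 2(q-1)\|Y\|_q^2$.
\item Differentiating under the expectation is justified on $t\in[0,1]$ by the dominating functions $q(|X|+|Y|)^{q-1}|Y|$ and $q(q-1)(|X|+|Y|)^{q-2}Y^2$.
\end{itemize}

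You should spell out the degenerate case. If $\|X+t_0Y\|_q=0$ for some $t_0\in(0,1]$, then $Y=-X/t_0$ is measurable with respect to $\sigma(X)$, so $Y=\mathbb{E}[Y\mid X]=0$ and hence $X=0$. If $t_0=0$, then $X=0$ and the claim reduces to the trivial $\|Y\|_q^2\le(q-1)\|Y\|_q^2$. Consequently $\psi>0$ on $[0,1]$ whenever $\|X\|_q>0$, so $\varphi$ is $C^2$ there and Taylor's theorem applies. The induction step is also fine, since independence gives $\mathbb{E}[Y_k\mid S_{k-1}]=0$.

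Your route buys two things. First, it makes the appendix self-contained. Second, it gives a sharper statement, $\mathbb{E}\bigl|\frac1n\sum_i(Z_i-\mathbb{E} Z)\bigr|^q\le\bigl(\frac{q-1}{n}\bigr)^{q/2}\mathbb{E}|Z-\mathbb{E} Z|^q$ with leading constant $1$. It is also valid for every real $q\ge2$, which matters because the paper's exponent $2\log(8/\delta)$ is generally not an integer. The citation, on the other hand, costs the paper nothing, and your improved constants would only halve the $\delta/4$ failure bound in the downstream Markov step.
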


\subsection{Generalization bounds for learning with low-degree polynomials}

\begin{lemma}[Statistical guarantee]
\label{lemma:generalization}

Let $D$ be $A$-hypercontractive, and let $S_{\mathrm{cln}}$ and $S_{\mathrm{ref}}$ be i.i.d. samples from $D$. There is a universal constant $C_0>0$ such that, if $n \ge \frac{C_0}{\varepsilon^2}\left(N+\log\left(\frac{8}{\delta}\right)\right)$ and $|S_{\mathrm{ref}}| \ge \frac{C_0}{\varepsilon^2}\left(N+(2A)^{2L}(\log\frac{8}{\delta}\right)^{2L+1})$,
then, with probability at least $1-\delta$, the following inequalities
hold simultaneously:
\begin{align}
&\sup_{\bar h\in \operatorname{conv}(\mathcal H_0)}
\left|
\err_D(\bar h, c^\star) - \widehat{\err}_{S_{\mathrm{ref}}}(\bar h, c^\star)
\right|
\le
\frac{\varepsilon}{16},
\label{eq:population-transfer-error}
\\[0.75em]
&\sup_{m\in\mathcal M}
\left| \frac{1}{|S_{\mathrm{cln}}|}\sum_{x \in S_{\mathrm{cln}}}
\mathsf F(c^\star(x),m(x)) - \frac{1}{|S_{\mathrm{ref}}|}\sum_{x \in S_{\mathrm{ref}}} \mathsf F(c^\star(x),m(x)) \right|
\le
\frac{\varepsilon}{16},
\label{eq:population-transfer-detector}
\\[0.75em]
\label{eq:reference-width}
&\frac{1}{|S_{\mathrm{ref}}|}\sum_{x \in S_{\mathrm{ref}}} \frac{p_{\mathrm{up}}^\star (x)-p_{\mathrm{down}}^\star (x)}{2} \le \frac{\varepsilon}{16}.
\end{align}
\end{lemma}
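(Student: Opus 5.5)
We prove the three inequalities separately, each with failure probability $\delta/3$ (absorbed into the $\log(8/\delta)$ terms), and conclude by a union bound.

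\textbf{Inequality \eqref{eq:population-transfer-error}.} Write $\frac{1-\bar h(x)c^\star(x)}{2}$ and note that multiplying by the fixed sign $c^\star(x)$ and applying an affine map are $1$-Lipschitz pointwise transformations. By \Cref{fact:contraction} and \Cref{fact:convexification}, the Rademacher complexity of the induced loss class over $\operatorname{conv}(\mathcal H_0)$ is at most that of $\mathcal H_0$. By \Cref{fact:vc-dimension} and \Cref{fact:vc-rademacher}, this is at most $C_1\sqrt{N/|S_{\mathrm{ref}}|}$. \Cref{fact:rademacher-generalization} then gives deviation at most $2C_1\sqrt{N/|S_{\mathrm{ref}}|}+\sqrt{\log(6/\delta)/(2|S_{\mathrm{ref}}|)}$, which is at most $\varepsilon/16$ once $|S_{\mathrm{ref}}|\ge C_0(N+\log(8/\delta))/\varepsilon^2$.

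\textbf{Inequality \eqref{eq:population-transfer-detector}.} Both $S_{\mathrm{cln}}$ and $S_{\mathrm{ref}}$ are i.i.d.\ from $D$, so by the triangle inequality it suffices to show each empirical average is within $\varepsilon/32$ of $\E_D[\mathsf F(c^\star,m)]$, uniformly over $m\in\mathcal M$. The function $x\mapsto \mathsf F(c^\star(x),m(x))=\frac{c^\star(x)m(x)-1}{1+|m(x)|}$ takes values in $[-1,1]$. Pointwise it equals $\psi_x(m(x))$ with $\psi_x(a)=\frac{c^\star(x)a-1}{1+|a|}$, and a derivative check shows $\psi_x$ is $2$-Lipschitz. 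So \Cref{fact:contraction} bounds the complexity by twice the Rademacher complexity of the real-valued class $\mathcal M$, and this is the main obstacle, since $\mathcal M$ is unbounded and not Boolean.

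The approach is to write $\psi_x(a)$ as a function of $\sign(a)$ and $|a|$. Alternatively, use that $\psi_x$ is monotone in $a$ on each half-line, so $\psi_x\circ\mathcal M$ has pseudo-dimension controlled by the threshold class $\{\mathbf 1[q_w(x)>\theta]\}$. Thresholds of degree-$L$ polynomials form a linear-threshold class in dimension $N$, so \Cref{fact:vc-dimension} and a pseudo-dimension or Dudley chaining bound give complexity $O(\sqrt{N/n})$, up to a log factor absorbed into $C_0$. \Cref{fact:rademacher-generalization} then finishes this step. If the monotone decomposition fails because $\psi_x$ is non-monotone across $a=0$, split $\psi_x=\psi_x^+ + \psi_x^-$ into the parts supported on $a\ge 0$ and $a<0$; each part is monotone, and the complexity of a sum is at most the sum of the complexities.

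\textbf{Inequality \eqref{eq:reference-width}.} Set $P=(p^\star_{\mathrm{up}}-p^\star_{\mathrm{down}})/2$. This is a fixed nonnegative degree-$L$ polynomial with $\E_D[P]\le\varepsilon^2/512$, and $P$ is unbounded, which is why hypercontractivity is needed. Apply \Cref{fact:marcinkiewicz-zygmund} with $q=\Theta(\log(8/\delta))$ to $Z=P(x)$. Hypercontractivity gives $\E|Z-\E Z|^q\le 2^q\E|P|^q\le 2^q(Aq)^{Lq}(\E P)^q$. Markov's inequality on the $q$-th moment then bounds the probability that the empirical mean exceeds $\E P+\varepsilon/32$ by
\[
2\left(\frac{q}{|S_{\mathrm{ref}}|}\right)^{q/2}\frac{(2(Aq)^L\E P)^q}{(\varepsilon/32)^q}.
\]
Since $\E P\le\varepsilon^2/512\le\varepsilon/32$ and $|S_{\mathrm{ref}}|\ge C_0(2A)^{2L}(\log(8/\delta))^{2L+1}/\varepsilon^2$, each base is at most $1/e$, so the tail is at most $\delta/3$. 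This yields an empirical average at most $\varepsilon^2/512+\varepsilon/32\le\varepsilon/16$.
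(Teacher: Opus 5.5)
Your arguments for \eqref{eq:population-transfer-error} and \eqref{eq:reference-width} are essentially the paper's. The first uses contraction, convexification and the VC bound for $\mathcal H_0$; the third uses Marcinkiewicz--Zygmund, Minkowski with hypercontractivity, and Markov. One detail is hidden in your choice $q=\Theta(\log(8/\delta))$. The hypothesis on $|S_{\mathrm{ref}}|$ scales as $(2A)^{2L}(\log\frac8\delta)^{2L+1}$, so taking $q=c\log(8/\delta)$ leaves a factor $(c/2)^L$ in the base, which no universal $C_0$ can absorb. The constant must therefore be at most $2$; the paper takes $q=2\log(8/\delta)$.

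For \eqref{eq:population-transfer-detector} you take a genuinely different route, and as written it needs repair. The paper uses the exact identity
\[
\mathsf F(c^\star(x),m(x))=\int_0^\infty \sign\bigl(c^\star(x)m(x)-z\bigr)\,\frac{dz}{(1+z)^2}.
\]
This places $\{x\mapsto \mathsf F(c^\star(x),m(x)) : m\in\mathcal M\}$ in the convex hull of the linear-threshold class $\{x\mapsto\sign(\langle w,c^\star(x)x^{\otimes L}\rangle-z)\}$, whose VC dimension is at most $N+1$. \Cref{fact:convexification} and \Cref{fact:vc-rademacher} then give $C_1\sqrt{(N+1)/|S|}$ directly, with no logarithmic loss and no tools beyond the listed facts.

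Your pseudo-dimension route can reach the same bound, but two points are off.

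First, the relevant thresholds are of $c^\star(x)q_w(x)$, not of $q_w(x)$, because the direction of monotonicity of $\psi_x$ depends on $c^\star(x)$. In fact $\psi_x(a)=\phi(c^\star(x)a)$ with $\phi(b)=\frac{b-1}{1+|b|}$ non-decreasing on all of $\R$, so your fallback split is unnecessary.

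Second, a factor like $\sqrt{\log n}$ or $\sqrt{\log(N/\varepsilon)}$ cannot be ``absorbed into $C_0$''. The hypothesis is $n\ge C_0(N+\log\frac8\delta)/\varepsilon^2$ with $C_0$ universal, so you must genuinely obtain the log-free $O(\sqrt{N/n})$. For real-valued classes this requires Haussler's packing bound plus Dudley's entropy integral, not a single-scale discretization.

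The simplest fix is to push your monotonicity observation one step further. A bounded non-decreasing $\phi$ equals a constant plus a mixture of thresholds $\mathbf 1[b>z]$ weighted by $d\phi$. That is exactly the paper's integral representation, and it reduces the step to the convexification and VC facts.
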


\begin{proof}

The proof has three parts, corresponding respectively to Eq.~\eqref{eq:population-transfer-error},~\eqref{eq:population-transfer-detector}, and~\eqref{eq:reference-width}.

\noindent \textbf{Part 1: Eq.~\eqref{eq:population-transfer-error}}. By
\Cref{fact:vc-dimension}, \Cref{fact:vc-rademacher}, and \Cref{fact:convexification}
$$
\mathfrak R_{|S_{\mathrm{ref}}|,D}
\bigl(\operatorname{conv}(\mathcal H_0)\bigr) = \mathfrak R_{|S_{\mathrm{ref}}|,D}
\bigl(\mathcal H_0\bigr)
\le
C_1\sqrt{\frac{N}{|S_{\mathrm{ref}}|}}.
$$
For every $x$, the map $z\mapsto \frac{1-c^\star(x)z}{2}$ is $1/2$-Lipschitz. Combining \Cref{fact:contraction} and \Cref{fact:rademacher-generalization} therefore gives, with probability $1 - \delta / 4$, Eq.~\eqref{eq:population-transfer-error}:
$$
\sup_{\bar h\in \operatorname{conv}(\mathcal H_0)}
\left|\err_D(\bar h,c^\star)- \widehat{\err}_{S_{\mathrm{ref}}}(\bar h,c^\star)\right| \le
C_1\sqrt{\frac{N}{|S_{\mathrm{ref}}|}}
+
\sqrt{\frac{\log(8/\delta)}{2|S_{\mathrm{ref}}|}}
\le
\frac{\varepsilon}{16}.
$$
Thus, Part~1 holds provided $|S_{\mathrm{ref}}| \ge \frac{C_0}{\varepsilon^2}\left(N+\log\frac{8}{\delta}\right)$, for a sufficiently large universal constant $C_0>0$.

\noindent\textbf{Part 2: Eq.~\eqref{eq:population-transfer-detector}.}
Define
$$
\mathcal G
:=
\left\{
x\mapsto
\sign
\left(
c^\star(x)m_w(x)-z
\right)
:
m_w\in\mathcal M,\ z\ge0
\right\}.
$$
Since $c^\star(x)m_w(x)-z = \left\langle w,c^\star(x)x^{\otimes L}\right\rangle-z$,
the class $\mathcal G$ is a linear-threshold class with $N+1$
parameters. Therefore,
$$
\operatorname{VCdim}(\mathcal G)\le N+1.
$$
Moreover, for every $m\in\mathcal M$, $\mathsf F(c^\star(x),m(x))$ from \Cref{def:polynomial-certificate} admits the following integral representation,
$$
\mathsf F(c^\star(x),m(x))
=
\int_0^\infty
\sign
\left(
c^\star(x)m(x)-z
\right)
\frac{dz}{(1+z)^2}.
$$
Hence, each function $x\mapsto\mathsf F(c^\star(x),m(x))$ is a convex combination of functions in $\mathcal G$, and therefore
$$
\mathcal F :=\left\{
x\mapsto
\mathsf F(c^\star(x),m(x))
:
m\in\mathcal M
\right\}
\subseteq
\operatorname{conv}(\mathcal G).
$$
By \Cref{fact:vc-rademacher} and
\Cref{fact:convexification}, for every sample size $|S|$, the Rademacher complexity satisfies
$$
\mathfrak R_{|S|,D}
\left(
\mathcal F
\right)
\le
\mathfrak R_{|S|,D}
\left(
\operatorname{conv}(\mathcal G)
\right)
=
\mathfrak R_{|S|,D}(\mathcal G)
\le
C_1\sqrt{\frac{N+1}{|S|}}.
$$
Therefore, \Cref{fact:rademacher-generalization} gives, each with
probability at least $1-\delta/4$,
\begin{align}
&
\sup_{m\in\mathcal M}
\left|
\frac{1}{|S_{\mathrm{cln}}|}
\sum_{x\in S_{\mathrm{cln}}}
\mathsf F(c^\star(x),m(x))
-
\E_{x \sim D}
\left[
\mathsf F(c^\star(x),m(x))
\right]
\right|
\le
\frac{\varepsilon}{32},
\label{eq:first-bound}
\\
&
\sup_{m\in\mathcal M}
\left|
\frac1{|S_{\mathrm{ref}}|}
\sum_{x\in S_{\mathrm{ref}}}
\mathsf F(c^\star(x),m(x))
-
\E_{x \sim D}
\left[
\mathsf F(c^\star(x),m(x))
\right]
\right|
\le
\frac{\varepsilon}{32}.
\label{eq:second-bound}
\end{align}
By the triangle inequality, Eq.~\eqref{eq:first-bound} and~\eqref{eq:second-bound} imply, on an event of probability at least
$1-\delta/2$,
\begin{align*}
&
\sup_{m\in\mathcal M}
\left|
\frac{1}{|S_{\mathrm{cln}}|}
\sum_{x\in S_{\mathrm{cln}}}
\mathsf F(c^\star(x),m(x))
-
\frac1{|S_{\mathrm{ref}}|}
\sum_{x\in S_{\mathrm{ref}}}
\mathsf F(c^\star(x),m(x))
\right|
\\
&\le
\sup_{m\in\mathcal M}
\left|
\frac{1}{|S_{\mathrm{cln}}|}
\sum_{x\in S_{\mathrm{cln}}}
\mathsf F(c^\star(x),m(x))
-
\E_{x \sim D}
\left[
\mathsf F(c^\star(x),m(x))
\right]
\right|
\\
&+
\sup_{m\in\mathcal M}
\left|
\E_{x \sim D}
\left[
\mathsf F(c^\star(x),m(x))
\right]
-
\frac1{|S_{\mathrm{ref}}|}
\sum_{x\in S_{\mathrm{ref}}}
\mathsf F(c^\star(x),m(x))
\right|
\\
&\le
\frac{\varepsilon}{32}
+
\frac{\varepsilon}{32}
=
\frac{\varepsilon}{16}.
\end{align*}
Thus, Part~2 holds provided
$
n
\ge
\frac{C_0}{\varepsilon^2}
\left(
N+\log\frac{8}{\delta}
\right)
$
and
$
|S_{\mathrm{ref}}|
\ge
\frac{C_0}{\varepsilon^2}
\left(
N+\log\frac{8}{\delta}
\right),
$
for a sufficiently large universal constant $C_0>0$.

\noindent\textbf{Part 3: Eq.~\eqref{eq:reference-width}}.
Define $p(x) = \frac{p_{\mathrm{up}}^\star(x)-p_{\mathrm{down}}^\star(x)}{2}$.
The polynomial $p$ is nonnegative, has degree at most $L$, and the
sandwiching guarantee gives
\begin{equation}\label{eq:ipf-app}
\mu
=
\E_{x \sim D}\left[
\frac{p_{\mathrm{up}}^\star(x)-p_{\mathrm{down}}^\star(x)}{2}
\right]
\le
\frac{\varepsilon^2}{512}.
\end{equation}
By the Marcinkiewicz--Zygmund inequality (\Cref{fact:marcinkiewicz-zygmund}) with $2\log(8/\delta)$,
\begin{equation}\label{eq:marcinkiewicz-zygmund}
\begin{aligned}
&\mathbb E\left[
\left|
\frac{1}{|S_{\mathrm{ref}}|}
\sum_{x\in S_{\mathrm{ref}}}p(x)-\mu
\right|^{2\log(8/\delta)}
\right] \le
2\left(
\frac{2\log(8/\delta)}{|S_{\mathrm{ref}}|}
\right)^{\log(8/\delta)}
\E_{x \sim D}\left[
|p(x)-\mu|^{2\log(8/\delta)}
\right].
\end{aligned}
\end{equation}
Moreover, by Minkowski's inequality and hypercontractivity
(\Cref{def:hypercontractivity}),
\begin{equation}\label{eq:minkowski}
\begin{aligned}
\left(
\E_{x \sim D}\left[
|p(x)-\mu|^{2\log(8/\delta)}
\right]
\right)^{1/(2\log(8/\delta))}
&\le
\left(
\E_{x \sim D}\left[
p(x)^{2\log(8/\delta)}
\right]
\right)^{1/(2\log(8/\delta))}
+\mu \\
&\le
\left(2A\log\frac{8}{\delta}\right)^L\mu+\mu \\
&\le
2\left(2A\log\frac{8}{\delta}\right)^L\mu.
\end{aligned}
\end{equation}
Combining Eq.~\eqref{eq:marcinkiewicz-zygmund}  and Eq.~\eqref{eq:minkowski} gives
\[
\begin{aligned}
&\mathbb E\left[
\left|
\frac{1}{|S_{\mathrm{ref}}|}
\sum_{x\in S_{\mathrm{ref}}}p(x)-\mu
\right|^{2\log(8/\delta)}
\right] \le
2\left(
\frac{
8\log(8/\delta)
\left(2A\log(8/\delta)\right)^{2L}\mu^2
}{
|S_{\mathrm{ref}}|
}
\right)^{\log(8/\delta)}.
\end{aligned}
\]
Moreover, according to Eq.~\eqref{eq:ipf-app},
$
\frac{\varepsilon}{16}-\mu
\ge
\frac{\varepsilon}{16}-\frac{\varepsilon^2}{512}
\ge
\frac{\varepsilon}{32}
$.
Hence, by Markov's inequality,
\[
\begin{aligned}
\Pr\left[
\frac{1}{|S_{\mathrm{ref}}|}
\sum_{x\in S_{\mathrm{ref}}}p(x)
>
\frac{\varepsilon}{16}
\right]
&\le
2\left(
\frac{
8\log(8/\delta)
\left(2A\log(8/\delta)\right)^{2L}\mu^2
}{
|S_{\mathrm{ref}}|
\left(\frac{\varepsilon}{16}-\mu\right)^2
}
\right)^{\log(8/\delta)} \\
&\le
2\left(
\frac{
\varepsilon^2(2A)^{2L}
\left(\log(8/\delta)\right)^{2L+1}
}{
32|S_{\mathrm{ref}}|
}
\right)^{\log(8/\delta)}.
\end{aligned}
\]
Therefore, if
$
|S_{\mathrm{ref}}|
\ge
\varepsilon^2(2A)^{2L}
\left(\log\frac{8}{\delta}\right)^{2L+1}
$, 
then
\[
\begin{aligned}
\Pr\left[
\frac{1}{|S_{\mathrm{ref}}|}
\sum_{x\in S_{\mathrm{ref}}}p(x)
>
\frac{\varepsilon}{16}
\right]
\le
2\left(\frac{1}{32}\right)^{\log(8/\delta)}  \le
2\exp\left(-\log\frac{8}{\delta}\right)
=
\frac{\delta}{4}.
\end{aligned}
\]
This proves Eq.~\eqref{eq:reference-width} with probability at least
$1-\delta/4$.

Part~1 requires the first lower bound on $|S_{\mathrm{ref}}|$, Part~2 requires the stated lower bounds on both $n$ and $|S_{\mathrm{ref}}|$, and Part~3 requires the high-moment lower bound on $|S_{\mathrm{ref}}|$. Since $\varepsilon\in(0,1)$, $A\ge1$, and
$\log(8/\delta)>1$, all these requirements are simultaneously implied, after enlarging the universal constant $C_0$ if necessary, by $n \ge \frac{C_0}{\varepsilon^2}\left(N+\log\frac{8}{\delta}\right)$
and
$|S_{\mathrm{ref}}| \ge\frac{C_0}{\varepsilon^2}\left(N+(2A)^{2L}\left(\log\frac{8}{\delta}\right)^{2L+1}\right)$. A union bound over the three parts completes the proof.
\end{proof}

\subsection{Generalization bounds for learning with access to an ERM oracle}

\begin{lemma}[Statistical guarantee]
\label{lemma:erm-generalization}

Let $\mathcal C\subseteq\{\pm1\}^X$ have finite VC dimension, let $D$
be a distribution over $X$, let $c^\star\in\mathcal C$, and fix
$k\in\mathbb N$. Let $S_{\mathrm{ref}}$ be an independent sample from $D$. There is a universal constant $C>0$ such that, if
$
|S_{\mathrm{ref}}|
\ge
C
\frac{
\operatorname{VCdim}(\mathcal C)k\log(2k)+\log(4/\delta)
}{
\varepsilon^2
}$,
then, with probability at least $1-\delta$, 
\begin{align}
&\sup_{\bar h\in
\operatorname{conv}(\operatorname{Maj}_k(\mathcal C))}
\left|
\widehat{\err}_{S_{\mathrm{ref}}}
(\bar h,c^\star)
-
\err_D(\bar h,c^\star)
\right|
\le
\frac{\varepsilon}{4}.
\label{eq:erm-error-generalization}
\end{align}

\end{lemma}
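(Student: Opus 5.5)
The plan mirrors Part~1 of the proof of \Cref{lemma:generalization}, with the polynomial threshold class $\mathcal H_0$ replaced by $\operatorname{Maj}_k(\mathcal C)$.

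\textbf{Step 1: rewrite the error.} For every $\bar h\in\operatorname{conv}(\operatorname{Maj}_k(\mathcal C))$, \Cref{def:linear-error} gives
\[
\err_D(\bar h,c^\star)=\E_{x\sim D}\left[\psi_x(\bar h(x))\right],
\qquad
\psi_x(z):=\frac{1-c^\star(x)z}{2},
\]
and $\widehat{\err}_{S_{\mathrm{ref}}}(\bar h,c^\star)$ is the empirical average of the same function over $S_{\mathrm{ref}}$. Each $\psi_x$ is $1/2$-Lipschitz, and the resulting functions $x\mapsto\psi_x(\bar h(x))$ take values in $[0,1]$. So \Cref{fact:rademacher-generalization} applies to the class $\psi\circ\operatorname{conv}(\operatorname{Maj}_k(\mathcal C))$.

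\textbf{Step 2: bound the Rademacher complexity.} I would chain the earlier facts in order.
\begin{itemize}
\item By \Cref{fact:contraction}, the complexity of $\psi\circ\operatorname{conv}(\operatorname{Maj}_k(\mathcal C))$ is at most $\tfrac12\,\mathfrak R_{|S_{\mathrm{ref}}|,D}\bigl(\operatorname{conv}(\operatorname{Maj}_k(\mathcal C))\bigr)$.
\item By \Cref{fact:convexification}, this equals $\tfrac12\,\mathfrak R_{|S_{\mathrm{ref}}|,D}(\operatorname{Maj}_k(\mathcal C))$.
\item By \Cref{fact:vc-rademacher} and \Cref{fact:vc-majority}, this is at most $C_1'\sqrt{\operatorname{VCdim}(\mathcal C)k\log(2k)/|S_{\mathrm{ref}}|}$ for a universal constant $C_1'$.
\end{itemize}
One technicality is the constant offset $1/2$ in $\psi_x$. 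Contraction handles it, since shifting a class by a fixed function does not change its Rademacher complexity. Alternatively, one can absorb it by applying \Cref{fact:contraction} to the family $z\mapsto -c^\star(x)z/2$ and noting that the difference of errors involves only that part.

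\textbf{Step 3: conclude.} \Cref{fact:rademacher-generalization} with failure probability $\delta$ gives a uniform deviation of at most
\[
2C_1'\sqrt{\frac{\operatorname{VCdim}(\mathcal C)k\log(2k)}{|S_{\mathrm{ref}}|}}+\sqrt{\frac{\log(2/\delta)}{2|S_{\mathrm{ref}}|}}.
\]
Under the assumed bound $|S_{\mathrm{ref}}|\ge C\bigl(\operatorname{VCdim}(\mathcal C)k\log(2k)+\log(4/\delta)\bigr)/\varepsilon^2$ with $C$ large enough, each term is at most $\varepsilon/8$, which yields \eqref{eq:erm-error-generalization}.

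\textbf{Expected obstacle.} The proof is essentially routine. The only point needing care is that \Cref{fact:vc-majority} controls the VC dimension of the Boolean class $\operatorname{Maj}_k(\mathcal C)$, while $\bar h_T$ is a convex combination of such majorities. This is handled exactly by convexification invariance (\Cref{fact:convexification}), so the VC bound for $\operatorname{Maj}_k(\mathcal C)$ transfers to its convex hull.
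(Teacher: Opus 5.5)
Your proof is correct and follows the same skeleton as the paper's: the VC bound for $\operatorname{Maj}_k(\mathcal C)$ from \Cref{fact:vc-majority} and \Cref{fact:vc-rademacher}, contraction through the $1/2$-Lipschitz maps $z\mapsto\frac{1-c^\star(x)z}{2}$, and the uniform deviation bound of \Cref{fact:rademacher-generalization}. The only difference is how you pass to the convex hull. The paper first proves the deviation bound over the Boolean class $\operatorname{Maj}_k(\mathcal C)$ and then extends it to $\operatorname{conv}(\operatorname{Maj}_k(\mathcal C))$ by linearity of the error in $\bar h$ and the triangle inequality. You instead invoke convexification invariance (\Cref{fact:convexification}), as the paper itself does in Part~1 of \Cref{lemma:generalization}. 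Both are valid; the paper's version simply avoids needing that fact here.
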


\begin{proof}
\Cref{fact:vc-majority} gives the VC-dimension bound for
$\operatorname{Maj}_k(\mathcal C)$. For the fixed concept $c^\star$, the
maps $z \mapsto \frac{1-c^\star(x)z}{2}$ are $1/2$-Lipschitz. Therefore, combining \Cref{fact:vc-rademacher}, \Cref{fact:contraction}, and \Cref{fact:rademacher-generalization} gives,
with probability at least $1-\delta$,
$$
\sup_{h\in\operatorname{Maj}_k(\mathcal C)}
\left|
\widehat{\err}_{S_{\mathrm{ref}}}(h,c^\star)
-
\err_D(h,c^\star)
\right|
\le
\frac{\varepsilon}{4}.
$$
For every
$\bar h\in\operatorname{conv}(\operatorname{Maj}_k(\mathcal C))$,
write $\bar h = \sum_{j=1}^r\lambda_j h_j$ with $h_j\in\operatorname{Maj}_k(\mathcal C)$ for $\lambda_j\ge0$ and $\sum_{j=1}^r\lambda_j=1$. By linearity of the error in its first argument,
$$
\begin{aligned}
\left|
\widehat{\err}_{S_{\mathrm{ref}}}
(\bar h,c^\star)
-
\err_D(\bar h,c^\star)
\right|
&=
\left|
\sum_{j=1}^r\lambda_j
\left(
\widehat{\err}_{S_{\mathrm{ref}}}
(h_j,c^\star)
-
\err_D(h_j,c^\star)
\right)
\right|
\\
&\le
\sum_{j=1}^r\lambda_j
\left|
\widehat{\err}_{S_{\mathrm{ref}}}
(h_j,c^\star)
-
\err_D(h_j,c^\star)
\right|\\
&\le
\frac{\varepsilon}{4}.
\end{aligned}
$$
Taking the supremum over
$\bar h\in\operatorname{conv}(\operatorname{Maj}_k(\mathcal C))$
proves Eq.~\eqref{eq:erm-error-generalization}.

\end{proof}

\section{Online Frank-Wolfe convergence guarantee}\label{app:D}

Here we provide a self-contained proof of \Cref{lemma:erm-online-frank-wolfe}. In fact, the analysis allows an
arbitrary additive error in each linear-optimization step. Specifically, the Frank--Wolfe subproblem in \Cref{fig:erm-oracle-algorithm} only requires a point $v_t\in K$ such that
$$
\left\langle
\nabla F_t(u_t),v_t
\right\rangle
\ge
\max_{v\in K}
\left\langle
\nabla F_t(u_t),v
\right\rangle
-
\frac{D_2^2\gamma_t}{100}.
$$
We closely follow \cite[Theorem~7.3, Lemma 7.4]{HazanOCO}, noting the differences in our slightly more general setup.

\subsection{
\texorpdfstring
{Proof of \Cref{lemma:erm-online-frank-wolfe}}
{Proof of the online Frank--Wolfe theorem}
}

\begin{proof}

\noindent If $T<4$, the Cauchy--Schwarz inequality gives, for every $u\in K$,
$$
\begin{aligned}
\sum_{t=1}^T
\langle g_t,u-u_t\rangle
&\le
\sum_{t=1}^T
\|g_t\|_2\|u-u_t\|_2
\le
D_2G_2T
\le
9D_2G_2T^{3/4}.
\end{aligned}
$$
For the rest of the proof, suppose that $T\geq4$.

For every $t\in[T]$, $u_t^\star \in \operatorname*{argmax}_{u\in K}F_t(u)$ exists because $K$ is compact. We then observe that $\nabla^2F_t(u)=-2I$, so $F_t$ is a $2$-strongly concave function. Moreover, since $F_t$ is quadratic, for every $u,v\in K$ and $\gamma\in[0,1]$,
\begin{equation}
\label{eq:erm-fw-quadratic-expansion}
F_t\bigl((1-\gamma)u+\gamma v\bigr)
=
F_t(u)
+
\gamma
\left\langle
\nabla F_t(u),v-u
\right\rangle
-
\gamma^2
\lVert v-u\rVert_2^2.
\end{equation}
The quadratic expansion gives the following bound:
\begin{align*}
F_t(u_t^\star)-F_t(u_{t+1})
&=
F_t(u_t^\star)-F_t(u_t)
-
\gamma_t
\left\langle
\nabla F_t(u_t),v_t-u_t
\right\rangle
+
\gamma_t^2
\left\lVert
v_t-u_t
\right\rVert_2^2
\tag{Eq.~\eqref{eq:erm-fw-quadratic-expansion}}
\\
&\le
F_t(u_t^\star)-F_t(u_t)
-
\gamma_t
\left\langle
\nabla F_t(u_t),u_t^\star-u_t
\right\rangle
+
\frac{D_2^2\gamma_t^2}{100}
+
\gamma_t^2
\left\lVert
v_t-u_t
\right\rVert_2^2
\tag{Eq.~\eqref{eq:approx-max}}
\\
&\le
(1-\gamma_t)
\left(
F_t(u_t^\star)-F_t(u_t)
\right)
+
\frac{D_2^2\gamma_t^2}{100}
+
\gamma_t^2
\left\lVert
v_t-u_t
\right\rVert_2^2
\tag{concavity of $F_t$}
\\
&\le
(1-\gamma_t)
\left(
F_t(u_t^\star)-F_t(u_t)
\right)
+
\frac{101D_2^2}{100}\gamma_t^2.
\tag{$\lVert v_t-u_t\rVert_2\le D_2$}
\end{align*}
As a result,
\begin{equation}
\label{eq:erm-fw-one-step}
F_t(u_t^\star)-F_t(u_{t+1})
\le
(1-\gamma_t)
\left(
F_t(u_t^\star)-F_t(u_t)
\right)
+
\frac{101D_2^2}{100}\gamma_t^2.
\end{equation}
Moreover, the sequence of objectives has a simple recursive structure:
\begin{equation}
\label{eq:recursion}
F_{t+1}(u)
=
\alpha\sum_{r=1}^{t-1}\langle g_r,u\rangle
-
\lVert u-u_1\rVert_2^2
+
\alpha\langle g_t,u\rangle
=
F_t(u)+\alpha\langle g_t,u\rangle.
\end{equation}
Now, we can use the established recursive relationship in Eq.~\eqref{eq:recursion} to prove the following bound:
\begin{align*}
F_{t+1}(u_{t+1}^\star)-F_{t+1}(u_{t+1})
&=
F_t(u_{t+1}^\star)-F_t(u_{t+1})
+
\alpha
\left\langle
g_t,u_{t+1}^\star-u_{t+1}
\right\rangle
\tag{Eq.~\eqref{eq:recursion}}
\\
&\le
F_t(u_t^\star)-F_t(u_{t+1})
+
\alpha
\left\langle
g_t,u_{t+1}^\star-u_{t+1}
\right\rangle
\tag{optimality of $u_t^\star$}
\\
&\le
F_t(u_t^\star)-F_t(u_{t+1})
+
\alpha G_2
\left\lVert
u_{t+1}^\star-u_{t+1}
\right\rVert_2
\tag{Cauchy--Schwarz}
\\
&\le
F_t(u_t^\star)-F_t(u_{t+1})
+
\alpha G_2
(
F_{t+1}(u_{t+1}^\star)-F_{t+1}(u_{t+1})
)^{1/2}
\tag{str. concavity}
\end{align*}
Since $\alpha G_2 = D_2/(2T^{3/4}) = D_2\gamma_T^{3/2}/(4\sqrt2)$, we move the square-root term to the left and complete the square to obtain
$$
\begin{aligned}
&
\left(
\sqrt{
F_{t+1}(u_{t+1}^\star)-F_{t+1}(u_{t+1})
}
-
\frac{D_2 \gamma_T^{3/2}}{8\sqrt{2}}
\right)^2
\le
F_t(u_t^\star)-F_t(u_{t+1})
+
\frac{D_2^2\gamma_T^3}{128}.
\end{aligned}
$$
Therefore,
\begin{equation}
\label{eq:erm-fw-gap-recursion}
\begin{aligned}
\sqrt{
F_{t+1}(u_{t+1}^\star)-F_{t+1}(u_{t+1})
}
&\le
\sqrt{
F_t(u_t^\star)-F_t(u_{t+1})
}
+
\frac{D_2 \gamma_T^{3/2}}{4\sqrt{2}}
\\
&\le
\sqrt{
F_t(u_t^\star)-F_t(u_{t+1})
}
+
\frac{D_2 \gamma_t^{3/2}}{4\sqrt{2}}.
\end{aligned}
\end{equation}
We claim that, for every $t\in[T]$,
\begin{equation}
\label{eq:erm-fw-tracking}
F_t(u_t^\star)-F_t(u_t)
\le
4 D_2^2 \gamma_t.
\end{equation}
We will prove Eq.~\eqref{eq:erm-fw-tracking} by induction.

\noindent\textbf{Base case: $t=1$.}
Since
$
F_1(u)=-\|u-u_1\|_2^2 \le0
$
for every $u\in K$, with equality at $u=u_1$, we have
$u_1^\star=u_1$. Therefore,
$
F_1(u_1^\star)-F_1(u_1)
=
0
\le
4D_2^2\gamma_1,
$
so Eq.~\eqref{eq:erm-fw-tracking} holds for $t=1$.

\noindent\textbf{Induction step:} Suppose that Eq.~\eqref{eq:erm-fw-tracking} holds for some
$t<T$. By Eq.~\eqref{eq:erm-fw-one-step},
$$
F_t(u_t^\star)-F_t(u_{t+1})
\le
4 D_2^2 \gamma_t - \frac{299 D_2^2}{100}\gamma_t^2.
$$
Consequently, Eq.~\eqref{eq:erm-fw-gap-recursion} gives
\begin{align*}
F_{t+1}(u_{t+1}^\star)-F_{t+1}(u_{t+1})
&\le
\left(
\sqrt{
4D_2^2\gamma_t-\frac{299}{100}D_2^2\gamma_t^2
}
+
\frac{D_2}{4\sqrt2}\gamma_t^{3/2}
\right)^2
\tag{Eq.~\eqref{eq:erm-fw-one-step} and~\eqref{eq:erm-fw-tracking}}
\\
&\le
4D_2^2\gamma_t-\frac{299}{100}D_2^2\gamma_t^2
+
\frac{D_2^2}{\sqrt2}\gamma_t^2
+
\frac{D_2^2}{32}\gamma_t^3
\tag{$\sqrt{4D_2^2\gamma_t-\frac{299}{100}D_2^2\gamma_t^2}
\le2D_2\sqrt{\gamma_t}$}
\\
&\le
4D_2^2\gamma_t-\frac12D_2^2\gamma_t^2
\tag{$\gamma_t\le1$ and
$\frac1{\sqrt2}+\frac1{32}\le\frac{249}{100}$}
\\
&\le
4D_2^2\gamma_{t+1}.
\tag{$\gamma_{t+1}\ge\gamma_t-\gamma_t^2/8$}
\end{align*}
This proves Eq.~\eqref{eq:erm-fw-tracking}.

\noindent The $2$-strong concavity of $F_t$ and
Eq.~\eqref{eq:erm-fw-tracking} imply
\begin{equation}
\label{eq:erm-fw-distance}
\|u_t-u_t^\star\|_2^2
\le
F_t(u_t^\star)-F_t(u_t)
\le
4D_2^2\gamma_t.
\end{equation}
We now compare the sequence $(u_t)_{t=1}^T$ with an arbitrary fixed $u\in K$ through the exact maximizers $(u_t^\star)_{t=1}^T$:
\begin{equation}\label{eq:decomp}
\sum_{t=1}^T
\langle g_t,u-u_t\rangle
=
\sum_{t=1}^T
\langle g_t,u-u_t^\star\rangle
+
\sum_{t=1}^T
\langle g_t,u_t^\star-u_t\rangle.
\end{equation}
We next bound the first sum in Eq.~\eqref{eq:decomp}. Since $u_t^\star$ maximizes $F_t$,
$$
\begin{aligned}
u_t^\star
=
\operatorname*{argmax}_{u\in K}
\left(
\alpha\sum_{r=1}^{t-1}\langle g_r,u\rangle
-
\|u-u_1\|_2^2
\right)
=
\operatorname*{argmin}_{u\in K}
\left(
-\frac{\alpha}{2}
\sum_{r=1}^{t-1}\langle g_r,u\rangle
+
\frac12\|u-u_1\|_2^2
\right).
\end{aligned}
$$
For the linear functions $u\mapsto-\langle g_t,u\rangle$, the gradients are $-g_t$. With learning rate $\alpha/2$ and quadratic regularizer $u\mapsto\frac12\|u-u_1\|_2^2$, the update analyzed in \cite[Theorem~5.2]{HazanOCO} is therefore exactly
$u_t^\star$. Hence,
\begin{equation}\label{eq:regret-type-bound}
\sum_{t=1}^T
\left(
-\langle g_t,u_t^\star\rangle
+
\langle g_t,u\rangle
\right)
\le
2\left(\frac{\alpha}{2}\right)
\sum_{t=1}^T\|-g_t\|_2^2
+
\frac{
\frac12\|u-u_1\|_2^2-\frac12\|u_1-u_1\|_2^2
}{
\alpha/2
}.
\end{equation}
Therefore, the first sum in Eq.~\eqref{eq:decomp} can be bounded using Eq.~\eqref{eq:regret-type-bound}
\begin{equation}\label{eq:first-term}
\sum_{t=1}^T
\langle g_t,u-u_t^\star\rangle
\le
\alpha\sum_{t=1}^T\|g_t\|_2^2
+
\frac{\|u-u_1\|_2^2}{\alpha}
\le
\frac{D_2G_2}{2}T^{1/4}
+
2D_2G_2T^{3/4}
\le
\frac52D_2G_2T^{3/4}.
\end{equation}
It remains to bound the second sum in Eq.~\eqref{eq:decomp}. Since $\gamma_t=\min(1,2/\sqrt t)$, an integral comparison gives
$$
\sum_{t=1}^T\sqrt{\gamma_t}
=
4+\sqrt2\sum_{t=5}^T t^{-1/4}
\le
4+\sqrt2\int_4^T x^{-1/4}\,dx
\le
2T^{3/4}.
$$
Therefore, by the Cauchy--Schwarz inequality and
Eq.~\eqref{eq:erm-fw-distance},
\begin{equation}
\label{eq:second-term}
\begin{aligned}
\sum_{t=1}^T
\langle g_t,u_t^\star-u_t\rangle
\le
\sum_{t=1}^T
\|g_t\|_2\|u_t^\star-u_t\|_2
\le
2D_2G_2
\sum_{t=1}^T\sqrt{\gamma_t}
\le
4D_2G_2T^{3/4}.
\end{aligned}
\end{equation}
Combining Eq.~\eqref{eq:first-term}
and~\eqref{eq:second-term} with the initial
decomposition gives
$$
\begin{aligned}
\sum_{t=1}^T
\langle g_t,u-u_t\rangle
&\le
\left(
\frac52+4
\right)
D_2G_2T^{3/4}
=
\frac{13}{2}D_2G_2T^{3/4}
\le
9D_2G_2T^{3/4}.
\end{aligned}
$$

\end{proof}

\end{document}